\documentclass{lmcs} %%% last changed 2014-08-20
\pdfoutput=1

\usepackage[utf8]{inputenc}
% LMCS Layouting Macros
\usepackage{lastpage}
\lmcsdoi{19}{4}{29}
\lmcsheading{}{\pageref{LastPage}}{}{}%
{Nov.~01,~2022}{Dec.~18,~2023}{}

%\PassOptionsToPackage{dvipsnames}{xcolor}

%% mandatory lists of keywords
\keywords{spatial sampling, cooperative adaptive sampling, regional coordination, sensor networks, field-based computing, self-organisation, event structures, Fluidware}

%% read in additional TeX-packages or personal macros here:
%% e.g. \usepackage{tikz}

\usepackage{amssymb}
\usepackage{amsmath}
\usepackage{subfig}
\usepackage{listings}
\usepackage{acronym}
\usepackage{adjustbox}
\def\doi#1{\href{https://doi.org/\detokenize{#1}}{\url{https://doi.org/\detokenize{#1}}}}
\usepackage{graphicx}
% Used for displaying a sample figure. If possible, figure files should
% be included in EPS format.
%
% If you use the hyperref package, please uncomment the following line
% to display URLs in blue roman font according to Springer's eBook style:
% \renewcommand\UrlFont{\color{blue}\rmfamily}
%
%\usepackage{minted}
% Please use the

\usepackage{nicefrac}
\usepackage{algorithm}
\usepackage[noend]{algpseudocode}

\usepackage{tikz}
\usetikzlibrary{arrows,shapes.geometric,positioning,automata,patterns}
\usetikzlibrary{trees,decorations.pathmorphing,calc}
\usetikzlibrary{intersections}
\usetikzlibrary{decorations.pathreplacing,external,backgrounds}
%\tikzexternalize[prefix=tikz/]
\usepackage{etoolbox} % for ifbool{} in tikz

\definecolor{ddarkgreen}{rgb}{0,0.5,0}

\lstdefinelanguage{Protelis}{
    language=Java,
    morecomment=[l]{//},
    morecomment=[s]{/*}{*/},
    keywordstyle=[2]\color{blue},
    keywordstyle=[3]\color{Mahogany},
    keywords={def, else, import, mux, sumHood, PlusSelf, let, rep, env, self, pi, if, public, NaN, false, true, minHood, nbr, it, share},
    keywords=[2]{POSITIVE_INFINITY, map, reduce, sin, cos, foldMin, min, get, set, gossip, distanceTo},
    keywords=[3]{nbrRange, nbrVersor, foldUnion, foldSum},
    frame=no,
    breaklines=true,
    captionpos=b,
    keepspaces=true,  numbers=left,
  %xleftmargin=0.4cm, $ it's overwritten anyway later
    backgroundcolor=\color{white},    
    basicstyle=\ttfamily\small,
    keywordstyle=\bfseries\color{purple},
    emphstyle={\bfseries\color{blue!40!black}},
    stringstyle=\ttfamily\color{orange},
    commentstyle=\ttfamily\color{green!40!black},
      numbers=left,
  xleftmargin=0.7cm, % increased, such that line numbers don't exceed left margin
  escapechar=\%
}
\lstset{
  language={Protelis},
  frame=single,
  basicstyle=\linespread{1.0}\small\ttfamily\lst@ifdisplaystyle\footnotesize\fi,
  columns=fullflexible,
}

\tikzset{-,
  ev/.style={circle,draw, inner sep=1.8pt, minimum size=2pt, outer sep=0.5mm},
  evlink/.style={line width=0.4mm},
  point/.style={circle,fill=black, inner sep=1.8pt, minimum size=2pt, outer sep=0.5mm},
  pointp/.style={rectangle,fill=black, inner sep=2pt, minimum size=2pt, outer sep=0.5mm},
  event/.style={circle,fill=red, inner sep=3pt, minimum size=2pt, outer sep=0.5mm},
  past/.style={event, fill=ddarkgreen},
  present/.style={event,fill=white,fill opacity=0.0,text opacity=1.0,draw,line width=0.4mm},
  future/.style={event,fill=blue},
  locality/.style={circle,fill=white, inner sep=2pt, minimum size=2pt, outer sep=0.5mm},
  locality1/.style={locality,fill=orange!50},
  locality2/.style={locality,fill=gray!50},
  locality3/.style={locality,fill=blue!40},
  locality4/.style={locality,fill=black!30!green},
  locality5/.style={locality,fill=purple!50},
  d/.style={circle,fill=gray, inner sep=2pt, minimum size=2pt},
  da/.style={circle,draw,label={[left]$1$}}, 
  db/.style={circle,draw,label={[left]$2$}}, 
  dc/.style={circle,draw,label={[left]$3$}}, 
  lbl/.style={font=\footnotesize},
  state/.style={circle,draw}, 
  statein/.style={circle,draw,line width={2pt}},
  stateout/.style={circle,draw,line width={1pt},fill=black!16},
  leadstov/.style={draw,decorate,decoration={snake, amplitude=1.95mm, post=lineto, post length=2mm, segment length=1cm, pre length=0.4cm},->,>=stealth'}, 
  leadsto/.style={draw,decorate,decoration={snake, amplitude=0.35mm, post=lineto, post length=2mm, segment length=1.3mm},->,>=stealth'}, %shorten >= 1.5pt
  leadstopast/.style={leadsto,draw=ddarkgreen},
  leadstopresent/.style={leadsto,draw=black},
  leadstofuture/.style={leadsto,draw=blue},
  frastagliato/.style={draw,decorate,decoration={snake, amplitude=0\imgfactormm, post=lineto, post length=0mm, segment length=2.3mm},->,>=stealth'},
  appgraph/.style={rectangle,draw,minimum size=1.5cm},
  appgraphc/.style={ellipse,draw,minimum size=1.5cm},
  program/.style={rectangle,draw,radius=0.1cm},
  scheduler/.style={rounded rectangle,draw,radius=0.1cm,font=\small},
  %program/.style={rounded rectangle,draw,radius=0.1cm},
  evpast/.style={green},
  evpresent/.style={red},
  evfuture/.style={blue},
  evconc/.style={gray},
  dev/.style={circle,draw, inner sep=1.8pt, minimum size=2pt, outer sep=0.5mm}
}

\usepackage[inline]{enumitem}
\newlist{inlinelist}{enumerate*}{1}
\setlist[inlinelist]{label=\emph{(\roman*)}}
\usepackage[T1]{fontenc}

\usepackage{hyperref}
\usepackage{cleveref}
%\autorefname{defi}{Definition}{Definitions}
%\autorefname{exa}{Example}{Examples}

%\newcommand{\addrev}[1]{{\color{purple}#1}}
\newcommand{\addrev}[1]{#1}

\acrodef{hvac}[HVAC]{Heating, Ventilation, and Air Conditioning}
\acrodef{wsn}[WSN]{Wireless Sensor Network}
\acrodef{wsan}[WSAN]{Wireless Sensor and Actuator Network}
\acrodef{qos}[QoS]{Quality of Service}
\acrodef{cps}[CPS]{Cyber-Physical System}
\acrodef{tdma}[TDMA]{Time-Division Multiple Access}
\acrodef{iot}[IoT]{Internet-of-Things}
\acrodef{poi}[PoI]{Point of Interest}
\acrodefplural{poi}[PoIs]{Points of Interest}

\newcommand{\sampling}{aggregate sampling}
\newcommand{\Sampling}{\expandafter\MakeUppercase\sampling}
\newcommand{\sampler}{aggregate sampler}

\newcommand{\esResizeFactor}{1.0}

\newcounter{mylabelcounter}

\makeatletter
\newcommand{\labeltext}[2]{%
\label{#1-place}{
#2\refstepcounter{mylabelcounter}%
\immediate\write\@auxout{%
  \string\newlabel{#1}{{1}{\thepage}{{\unexpanded{#2}}}{mylabelcounter.\number\value{mylabelcounter}}{}}%
}%
}
}
\makeatother

%%\input{myMacros.tex}
%% define non-standard environments BEYOND the ones already supplied
%% here, for example
\theoremstyle{plain} %\autorefname{satz}{Satz}{S\"atze}
%% Do NOT replace the proclamation environments lready provided by
%% your own.

\def\imgfactor{0.48}

\newcommand{\revA}[1]{#1} % {{\color{purple}#1}}
\newcommand{\revT}[1]{\revA{#1}} %{\color{ddarkgreen}#1}}
\newcommand{\revB}[1]{#1} %{\color{purple}#1}}

%% due to the dependence on amsart.cls, \begin{document} has to occur
%% BEFORE the title and author information:

\sloppypar

\begin{document}

%!TeX root = paper-2022-lmcs-space-fluid.tex

\newcommand{\tuple}[1]{\ensuremath{\langle #1 \rangle}}
\newcommand{\evl}[2]{\ensuremath{\epsilon_{#1}^{[#2]}}}

\newcommand{\deviceId}{\delta}
\newcommand{\eventId}{\epsilon}
\newcommand{\senstate}{\ensuremath{x}} %\sigma
\newcommand{\SenStates}{\ensuremath{X}}
\newcommand{\deviceS}{\Delta}
\newcommand{\eventS}{\ensuremath{E}}
\newcommand{\aEventS}{\ensuremath{\mathbf{E}}}
\newcommand{\E}{\aEventS}
\newcommand{\EI}{\aEventS}
\newcommand{\ES}{\ensuremath{\mathcal{E}}}
\renewcommand{\L}{\ensuremath{\mathcal{L}}}
\newcommand{\I}{\ensuremath{\mathcal{I}}}
\newcommand{\ST}{\ensuremath{\mathit{\mathbf{ST}}}}
\newcommand{\STR}{\ensuremath{\mathcal{R}}}
\newcommand{\past}{\ensuremath{\mathit{Past}}}

\newcommand{\SpatialSec}{\ensuremath{\Sigma}}
\newcommand{\Sample}{\ensuremath{\sigma}}
\newcommand{\ValueSet}{\ensuremath{\mathbb{V}}}
\newcommand{\FieldSet}{\ensuremath{\mathbb{F}}}
\newcommand{\field}{\ensuremath{f}}
\newcommand{\Program}{\ensuremath{P}}
\newcommand{\FieldComp}{\ensuremath{\Phi}}
\newcommand{\samplingErr}{\ensuremath{\eta}}

\newcommand{\tfuture}[1]{\ensuremath{T_{#1}^{+}}}
\newcommand{\tpast}[1]{\ensuremath{T_{#1}^{-}}}
\newcommand{\tpastof}[2]{\ensuremath{T_{#1}^{-}(#2)}}
\newcommand{\trestrict}[2]{#1|_{#2}}
\newcommand{\tfutureof}[2]{\ensuremath{\trestrict{#1}{\tfuture{#2}}}}

\newcommand{\Regions}{\ensuremath{regions}}
\newcommand{\Img}{\ensuremath{Img}}
\newcommand{\neigh}{\rightsquigarrow}
\newcommand{\sneigh}{\rightarrowtail}
\newcommand{\devof}{d}
\newcommand{\sensof}{p} %s

\newcommand{\receivers}[0]{\ensuremath{\mathit{recvs}}}

\newcommand{\fmetric}[0]{\ensuremath{\mu}}
\newcommand{\errth}[0]{\ensuremath{\xi}}

\newcommand{\cp}[1]{{\left( #1 \right)}}
\newcommand{\qp}[1]{{\left[ #1 \right]}}
\newcommand{\ap}[1]{{\langle #1 \rangle}}
\newcommand{\bp}[1]{{\left\lbrace #1 \right\rbrace}}
\newcommand{\vp}[1]{{\left\lvert #1 \right\rvert}}
\newcommand{\floor}[1]{{\left\lfloor #1 \right\rfloor}}
\newcommand{\ceil}[1]{{\left\lceil #1 \right\rceil}}

\newcommand{\BoolType}{\ensuremath{\mathit{Bool}}}
\newcommand{\MetricType}{\ensuremath{\mathit{Metric}}}
\newcommand{\FloatType}{\ensuremath{\mathit{Float}}}

\newcommand{\True}{\ensuremath{\top}}
\newcommand{\False}{\ensuremath{\bot}}
\newcommand{\R}{\ensuremath{\mathbb{R}}}

\newcommand{\MsgSet}{\ensuremath{\mathcal{M}}}

\title[Space-fluid Adaptive Sampling by Self-Organisation]{Space-fluid Adaptive Sampling by Self-Organisation}
\titlecomment{{\lsuper*}
This article is an extended version of the conference paper~\cite{DBLP:conf/coordination/CasadeiMPVZ22} presented at COORDINATION'22.
%OPTIONAL comment concerning the title, \eg,  if a variant or an extended abstract of the paper has appeared elsewhere.
}

\author[R.~Casadei]{Roberto Casadei\lmcsorcid{0000-0001-9149-949X}}[a]	%required

\author[S.~Mariani]{Stefano Mariani\lmcsorcid{0000-0001-8921-8150}}[b]	%required

\author[D.~Pianini]{Danilo Pianini\lmcsorcid{0000-0002-8392-5409}}[a]	%required
\author[M.~Viroli]{\texorpdfstring{\\}{}Mirko Viroli\lmcsorcid{0000-0003-2702-5702}}[a]	%required

\author[F.~Zambonelli]{Franco Zambonelli\lmcsorcid{0000-0002-6837-8806}}[b]	%required

% automatically numbered a
\address{Alma Mater Studiorum---Università di Bologna, Via dell'Università, 50, Cesena (FC), Italy}	%required
\email{roby.casadei@unibo.it, danilo.pianini@unibo.it, mirko.viroli@unibo.it}  %optional
%\thanks{thanks 1, optional.}	%optional

% automatically numbered b
\address{Università di Modena e Reggio Emilia, Via Giovanni Amendola, 2, Reggio Emilia (RE), Italy}	%required
\email{stefano.mariani@unimore.it, franco.zambonelli@unimore.it}  %optional
%\thanks{thanks 1, optional.}	%optional

%% etc.

%% required for running head on odd and even pages, use suitable
%% abbreviations in case of long titles and many authors:

%%%%%%%%%%%%%%%%%%%%%%%%%%%%%%%%%%%%%%%%%%%%%%%%%%%%%%%%%%%%%%%%%%%%%%%%%%%

%% the abstract has to PRECEDE the command \maketitle:
%% be sure not to issue the \maketitle command twice!

\begin{abstract}
  A recurrent task in coordinated systems 
  is managing (estimating, predicting, or controlling) signals that vary in space, such as distributed sensed data or computation outcomes.
 Especially in large-scale settings, 
  the problem can be addressed 
  through decentralised and situated computing systems:
  nodes can locally sense, process, and act upon signals,  
  and coordinate with neighbours
  to implement collective strategies.
 Accordingly, in this work 
  we devise distributed coordination strategies
  for the estimation of a spatial phenomenon
  through collaborative adaptive sampling.
 Our design is based on the idea of dynamically partitioning space into regions 
  that compete and grow/shrink to provide accurate aggregate sampling. 
 Such regions hence define a sort of virtualised space that is ``fluid'', 
  since its structure adapts in response to pressure forces exerted by the underlying phenomenon.
 We provide an adaptive sampling algorithm in the field-based coordination framework\addrev{,
 and prove it is self-stabilising and locally optimal}. 
 Finally, we verify by simulation 
  that the proposed algorithm effectively carries out a spatially adaptive sampling 
  \addrev{while maintaining a tuneable trade-off between accuracy and efficiency}.
\end{abstract}

\maketitle

\section{Introduction}

A significant problem in computer systems engineering 
 is dealing with phenomena that vary in space:
 for instance, their estimation, prediction, and control.
Concrete related application examples include: the monitoring of waste in urban areas to improve waste gathering strategies~\cite{imran2019smart-waste};
 the estimation of pollution in a geographical area, 
  for alerting or mitigation-aimed response purposes~\cite{casadei2020pulverization};
  the sensing of the temperature in a large building,
  to support the synthesis of control policies for the \emph{\ac{hvac}} system~\cite{Manjarres2017hvac-predictive-control}.
The general solution for addressing this kind of problem 
 consists of deploying a set of sensors and actuators
 in space, and building a distributed system
 that processes gathered data and possibly determines a suitable actuation in response~\cite{DBLP:journals/percom/WuKT11}.
In many settings,
 the computational activity
 can (or has to) be performed \emph{in-network}~\cite{DBLP:journals/wc/FasoloRWZ07} 
 in a \emph{decentralised} way:
 in such systems, nodes locally sense, process, and act upon 
 the environment, 
 and coordinate with \emph{neighbour} nodes
 to collectively self-organise their activity.
However, in general 
 there exists a trade-off
 between performance and efficiency,
 that suggests concentrating the activities on few nodes,
 or to endow systems with the capability of autonomously 
 adapt the \emph{granularity} of computation~\cite{DBLP:journals/tcyb/YaoVP13}.

In this work,
 we focus on sampling signals that vary in space.
Specifically, 
 we would like to sample a spatially distributed signal
 through device coordination and self-organisation
 such that the samples accurately reflect the original signal
 and the least amount of resources is used to do so.
In particular, we push forward a vision
 of space-fluid computations,
 namely computations that are fluid, i.e.\ change seamlessly, in space and -- like fluids -- adapt in response to pressure forces 
 exerted by the underlying phenomenon.
We reify the vision through an algorithm
 that handles the shape and lifetime of
 leader-based ``regional processes'' (cf.~\cite{FGCS2020-scr}),
  %sampling domains 
 growing/shrinking as needed to sample 
 a phenomenon of interest
 with a (locally) maximum level of accuracy and minimum resource usage.
For instance,
 we would like to sample more densely those regions of space
 where the spatial phenomenon under observation
 has high variance,
 to better reflect its spatial dynamics.
 On the contrary,
 in regions where variance is low,
 %(up to null)
 we would like to sample the phenomenon more sparsely 
 to, e.g., save energy, communication bandwidth, etc.
 while preserving the same level of accuracy.

Accordingly, 
we consider the \emph{field-based coordination} framework of \emph{aggregate computing}~\cite{BealIEEEComputer2015,JLAMP2019},
 which has proven to be effective in 
 modelling and programming self-organising behaviour
 in situated networks of devices interacting asynchronously.
On top of it, we devise a solution
that we call \emph{\sampling{}},
inspired by the approaches of self-stabilisation~\cite{TOMACS2018} and density-independence~\cite{TAAS2017},
that maps an input field representing a signal to be sampled into a regional partition field where each region provides a single sample;
then, we characterise the \sampling{} error based on a distance defined between stable snapshots
of regional partition fields,
and propose that an \emph{effective} \sampling{} is one that is locally optimal w.r.t.\ an error threshold,
meaning that the regional partition cannot be improved simply by merging regions.
In summary, we provide the following contributions:
\begin{itemize}
\item we define a model for distributed collaborative adaptive sampling and characterise the corresponding problem in the field-based coordination framework;
\item we implement an algorithmic solution to the problem that leverages self-organisation patterns like gradients~\cite{DBLP:journals/nc/Fernandez-MarquezSMVA13,TOMACS2018} and coordination regions~\cite{FGCS2020-scr};
\item \addrev{we prove this algorithm to self-stabilise, and to actually provide an effective sampling according to a definition of ``locally optimal regional partition'';} % \meta{as well as \emph{aggregate processes}~\cite{DBLP:conf/coordination/CasadeiVAPD19}};
\item we experimentally validate the algorithm to verify interesting trade-offs between sparseness of the sampling and its error.
\end{itemize}
\addrev{
This manuscript is an extended version of the conference paper~\cite{DBLP:conf/coordination/CasadeiMPVZ22}, providing 
\begin{inlinelist} 
\item a more extensive and detailed coverage of related work;
\item more examples, clarifications, and details regarding the formal model;
\item a discussion of the source code of the aggregate computing implementation; and
\item proofs of self-stabilisation and local optimality of the proposed algorithm.
\end{inlinelist}
}

The rest of the paper is organised as follows.
\Cref{s:background} covers motivation and related work.
\Cref{s:problem} provides a model for distributed sampling
 and the problem statement.
\Cref{s:solution} describes an algorithmic solution to the problem of sampling a distributed signal using the framework of aggregate computing.
\Cref{s:eval} performs an experimental validation of the proposed approach.
Finally, \Cref{s:conc} provides conclusive thoughts and delineates directions for further research.

\section{Motivation and Related Work}
\label{s:background}

\subsection{Motivations, Goal, and Applications}
\label{ss:goal_apps}

Consider a \ac{wsn}
of any topology,
\emph{statically} (i.e.\ design-time, no mobility) deployed across a geographical area
to monitor a spatially-distributed phenomenon, such as, for instance, air quality,
as depicted in \Cref{fig:airqualitymap}.
\begin{figure}[t]
	\centering
	\includegraphics[width=\columnwidth]{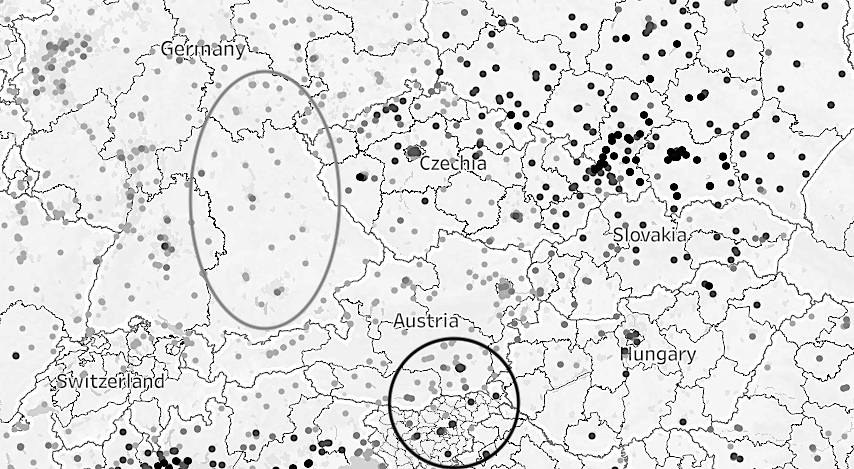}
	\caption{
		Air quality statistics map taken from \url{https://archive.ph/dMJO2}.
		There are areas where the underlying phenomenon does not vary significantly in space
		(light-grey oval),
		hence sampling could be made sparser
		with tolerable loss of accuracy.
		In others (darker circle), variance is high, 
		requiring a more detailed spatial sampling.
	}
	\label{fig:airqualitymap}
\end{figure}
We want to \emph{dynamically} (at run-time) and \emph{adaptively} (depending on the phenomenon itself) find a 
sparse set of \emph{samplers}, i.e., devices responsible for providing sensing data regarding some underlying phenomenon.
We want the selection of samplers 
 to depend on both the \emph{spatial distribution} of devices
 and the input phenomenon.
Therefore, the idea is that each sampler 
 is responsible for an exclusive \emph{spatial sampling region} 
 that may include several other devices, i.e., a partition of the system/environment.
Moreover, 
we want to determine a 
partitioning of the connected sampling devices that
minimises the number and maximises the size of sampling regions,
while preserving as much as possible the underlying information.
Hence, in areas with low variance amongst spatially distributed samples' values,
we want our regions to be larger,
% sampling can be relaxed,
% higher sampling densities do not improve the overall accuracy in observing the phenomenon,
as many samples will report similar values, and hence one sample nicely represents all.
Conversely, in areas with high spatial variance between samples,
smaller regions are necessary
% higher sampling densities will likely improve the overall accuracy,
as even proximal samples may have very different values, hence more samples are required to accurately represent the phenomenon.
We also consider \emph{large-scale deployments} 
 with hundreds or thousands of devices.

Accordingly, we aim at designing a \emph{decentralised algorithm} that can:
\begin{itemize}
    \item dynamically (at run-time, continuously) partition a set of sampling devices into sampling regions; 
    \item consider phenomenon-specific metrics (e.g.\ variance of the sensors' readings) for deciding how to (re)compute partitions;
    \item both up-scale and down-scale sampling depending on such metrics;
    \item do so in a fully distributed way, based solely on local interactions (i.e., within a 1-hop neighbourhood).
\end{itemize}
%
% Therefore, one would ideally \emph{adapt} the density in sampling the spatial phenomenon
% depending on some statistics of the observed phenomenon itself (e.g.\ variance), 
% at run-time.
%
A similar algorithm can provide benefits across multiple application domains, 
as also witnessed by the below described related literature. 
All forms of environmental monitoring can greatly benefit, for instance, as many phenomena in such a domain are inherently spatially distributed: 
air quality, water pollution, soil radiation, landslide monitoring, crop growth, and so on~\cite{Cox99RiskAnalysis,YOO2020117091,DBLP:conf/lcn/ZamanGMB14,DBLP:journals/ejwcn/MysorewalaCP12}.
Another field of application is ecological monitoring, such as geolocation of wolf packs or other animals moving habits~\cite{DBLP:journals/sac/PeyrardSSBN13}.
But in general, any application whose goal is to monitoring any measurable phenomenon, with unknown or uncertain spatial dynamics, may find benefits in our proposed approach.

\subsection{Related Work on Adaptive Spatial Sampling}
\label{ss:related}

There are several approaches in the literature that attempt to solve this and similar problems,
with heterogeneous techniques, that we collectively refer to under the umbrella term of ``adaptive spatial sampling''.
Amongst these, some~\cite{DBLP:journals/corr/abs-1811-01303,doi:10.1080/01621459.1990.10474975} are restricted to the so-called \emph{sampling design} problem, 
that is, their concern is to deliver either 
design-time decision support about where to deploy sensor devices 
or analytically devise out the best sampling algorithm 
given domain expertise or infrastructural requirements 
    (most often, residual energy management).
Our approach is not directly comparable to these, 
as we are concerned with run-time adaptation 
of the sampling process 
based on domain-specific properties 
    (e.g.\ sampling variance, that depends on the phenomenon under observation).
Others~\cite{DBLP:conf/iros/RahimiHKSE05,DBLP:conf/rss/GargA14,DBLP:journals/corr/abs-2105-10018,DBLP:journals/iet-wss/HamoudaP11,DBLP:conf/cdc/GrahamC09} 
more closely pursue our goal, 
but assume mobile sensing devices (e.g.\ robots), 
hence are concerned with how to move them at run-time to optimise some desired metric 
    (e.g.\ sampling accuracy).
On the contrary, we assume static sensor devices 
that have been already deployed on a target area, 
without any prior knowledge of the actual spatial distribution of the phenomenon to observe.
Finally, a great deal of related research contributions 
%more closely related to ours in terms of the goal pursued, 
have the fundamental difference of adapting the sampling process to \emph{domain-agnostic}, infrastructural properties 
such as residual energy, distance amongst devices, bandwidth consumption, 
rather than on the specific spatial distribution (and dynamics) of the phenomenon under monitoring~\cite{DBLP:conf/infocom/YounisF04,DBLP:journals/adhoc/MhatreR04,DBLP:journals/cn/BandyopadhyayC04,DBLP:conf/infocom/BandyopadhyayC03,DBLP:journals/wc/SohrabiGAP00,DBLP:journals/pieee/StankovicALSH03,DBLP:conf/infocom/BhardwajC02,DBLP:journals/tac/OgrenFL04}.

Narrowing down the research landscape just overviewed, 
we now describe and compare in more detail the approaches to spatial adaptive sampling most similar to ours, 
emphasising the most notable differences.

In reference~\cite{DBLP:conf/mass/VirrankoskiS05} 
the Topology Adaptive Spatial Clustering algorithm 
is presented. 
It is a distributed algorithm
that partitions a \ac{wsn} into a set of disjoint sampling clusters,
with no prior knowledge of cluster number or size (like ours), 
by encoding geographical distance, connectivity, and deployment density information 
in a single measure upon which leader election (for cluster heads) happens. 
The goal is to group together nodes in proximity 
and within regions of similar deployment density, 
to improve efficiency of data aggregation and compression.
Besides the focus on efficiency, 
that is only half the story in our approach 
    (the other being accuracy, hence the trade-off), 
two are the fundamental differences our approach has with respect to this: 
first, in our case adaptation is domain dependent, 
in the sense that it depends on some property (e.g.\ variance) of the phenomenon under observation, 
not on infrastructural properties; 
second, we do also consider over-sampling when variation is high, 
whereas reference~\cite{DBLP:conf/mass/VirrankoskiS05} only considers under-sampling where measures are redundant.

In reference~\cite{DBLP:conf/aina/LiuXZWL13}, 
another distributed approach to spatial adaptive sampling is presented.
It is based on the assumption that neighbouring sampling nodes usually have similar readings 
    (high spatial correlation), 
hence can be grouped in a cluster to improve energy consumption.
The proposed algorithm uses such spatial correlation, 
two application-specific threshold parameters 
    (error tolerance and correlation range), 
and residual energy to elect cluster heads, 
with the goal of minimising the number of clusters, 
and the variance of their size.
The introduction of the two application-dependant parameters 
makes this proposal closer to ours, 
as they could be used to steer the adaptation toward domain-specific aspects, 
to some extent.
However, most of the calculations still rely on infrastructural properties 
rather than measures about the phenomenon of interest, 
and the focus is, once again, on energy saving, thus,
authors do not consider over-sampling but solely under-sampling.

Reference~\cite{DBLP:conf/ccnc/LinM07} 
shares with us the interest in performing adaptation based on 
the information observed by sensors, 
rather than on network, energy, or other infrastructural aspects. 
However, they consider a special case 
where the clusters must correspond to pre-determined ``sources of interest'', 
such as physical objects/devices in the environment. 
Moreover, clusters are formed with the secondary objective of being balanced, 
whereas we allow them to be of different sizes and shapes 
    (and even encourage to be so depending on the spatial distribution of the phenomenon of interest).
    
Reference~\cite{DBLP:conf/icac/LeeVP11} 
proposes SILENCE, a distributed, space-time adaptive sampling algorithm 
based on (space-time) correlation of measured values.
The foremost goal pursued is efficiency: 
SILENCE in fact strives to minimise communication and processing overhead, 
by minimising sampling redundancy 
and also adapting (i.e.\ slowing down) the scheduling speed of sensor devices.
However, yet again the basic assumption, 
and focus of the approach, 
is the case where spatial correlation of sampled values is high; 
whereas we explicitly focus on the opposite situation 
    (while still providing a working solution even in the case considered by SILENCE).
Furthermore, also SILENCE only considers down-sampling.

Finally, the ASample algorithm~\cite{DBLP:conf/sutc/SzczytowskiKS10} 
is the one approach most similar to ours, 
not in the techniques exploited, 
but in pursuing domain-driven adaptation, 
in doing so in a fully-distributed way, 
and in considering the opportunity to over-sample, too.
In particular, ASample builds a Voronoi tessellation of the area 
where the \ac{wsn} is deployed,
in a fully distributed way 
by considering only neighbourhood information, 
instead of the whole topology.
Such a tessellation considers a desired sampling accuracy, 
specified at the application level: 
while a given Voronoi region is within the accuracy bound, 
it keeps expanding; 
on the contrary, whenever the accuracy constraint is violated, 
a virtual centroid of a novel Voronoi region is spawned, 
with a value that is obtained through interpolation of the neighbouring regions.
This is an important aspect to consider, 
as it introduces synthetic data, which is something we avoid 
as we only increase sampling granularity if there are actual devices available in the target area.
Moreover, there is an assumption underlying the ASample approach 
that does not hold in general, 
and, specifically, it is in contrast with our intended goal 
    (obtaining potentially irregular clusters to reflect the irregular spatial distribution of the observed phenomenon): 
it is assumed that the smaller the area covered by the Voronoi region, 
the less representative the samples drawn are, 
hence the smaller the impact on the global sampling accuracy.
In our targeted scenarios, the opposite could be true, too: 
smaller regions represent sharp variance of measurements across space, 
and more accurately represent the irregularities of the underlying phenomenon.

\section{Distributed Aggregate Sampling: Model}
\label{s:problem}

In order to define the problem and characterise our approach,
 we leverage the event structure framework~\cite{DBLP:journals/tcs/NielsenPW81,ABDV-COORDINATION2018},
  which provides a general model of situated computations.
\addrev{
Within this formal framework,
 in this section we 
 describe
 the computational model (\Cref{ssec:comp-model}),
 self-stabilisation as a desired property of solutions in this model (\Cref{ssec:self-stab}),
 and the spatial sampling problem that we tackle in this manuscript (\Cref{ssec:problem-def}).
}
\addrev{
The computational model introduces the necessary terminology to understand both the problem formulation and the solution we propose in \Cref{s:solution}.
Introducing the self-stabilisation property is required to be able to evaluate such a solution in its effort to adapt to the dynamics of the phenomenon of interest, both formally as done in \Cref{ssec:formal-results} and practically as done later in \Cref{s:eval}.
}

\subsection{Computational Model}
\label{ssec:comp-model}

We consider a computational model where a set of \emph{devices} \revA{(typically comprising \emph{sensors} and \emph{actuators} to perceive and act upon the environment)}  compute at discrete steps called \emph{computation rounds} 
and interact with \emph{neighbour devices} by exchanging messages.
\revA{Executions of such systems can be modelled through \emph{event structures} \revB{\cite{DBLP:journals/tcs/NielsenPW81,DBLP:journals/ijpp/Pratt86}} as in \cite{TOCL2019,ABDV-COORDINATION2018}}\footnote{\label{footnote-clarification-es}\revA{
Our notion of an event structure does not use the conflict relation~\cite{DBLP:journals/tcs/NielsenPW81},
which is used to express non-determinism. Indeed, we only use partial ordering.
Though it could be called a pomset~\revB{\cite{DBLP:journals/ijpp/Pratt86}},
we use this terminology to conform to previous research~\cite{ABDV-COORDINATION2018,JLAMP2019}.
In our model, non-determinism is provided by the environment:
then, a single event structure is used to describe \emph{one} possible system execution,
and results referring to multiple system executions universally quantify over all the possible event structures.
}}.
Following the general approach in \cite{ABDV-COORDINATION2018},
we enrich the event structure with information about the devices where events occur.

\begin{defi}[Situated event structure] \label{def:event-structure}
	A \emph{situated event structure} (ES) is a triplet $\ap{\eventS,\neigh,\devof}$ where: 
	\begin{itemize}
		\item
		$\eventS$
		is a countable set of \textbf{\emph{events}};
		\item
		$\neigh \; \subseteq \eventS \times \eventS$ is a \textbf{\emph{messaging}} relation from a \textbf{\emph{sender}} event to a \textbf{\emph{receiver}} event (these are also called \textbf{\emph{neighbour} events});
		\item $\devof : \eventS \to \deviceS$, where $\deviceS$ is a finite set of device identifiers $\deviceId_i$, maps an event $\eventId \in \eventS$ to the device $\devof(\eventId) \in \deviceS$ where the event takes place.
	\end{itemize}
The elements of the triplet are such that:
	\begin{itemize}
		\item
		the transitive closure of $\neigh$ forms an irreflexive partial order $< \; \subseteq \eventS \times \eventS$,  called \textbf{\emph{causality}} relation
		(an event $\eventId$ is in the \textbf{\emph{past}} of another event $\eventId'$ if $\eventId < \eventId'$,
		in the \textbf{\emph{future}} if $\eventId' < \eventId$,
		or \textbf{\emph{concurrent}} otherwise);
		\item 		for any $\deviceId \in \deviceS$, the projection of the ES to the set of events $\eventS_\deviceId = \bp{\eventId \in \eventS ~\mid~ \devof(\eventId) = \deviceId}$ forms a well-order, i.e., a sequence $\eventId_0\neigh\eventId_1\neigh\eventId_2\neigh\dots$.
	\end{itemize}
\addrev{Additionally, we introduce the following notation:}
	\begin{itemize}
	%\item Notation $\outcoming(\eventId) = \{ \eventId' \in \eventS~\mid~\eventId~\neigh~\eventId' \}$ to denote the immediate future events of $\eventId$.
	\item $\receivers(\eventId) = \{ d(\eventId') \mid \eventId~\neigh~\eventId'\}$
	 to denote the set of receivers of $\eventId$, i.e., the devices receiving a message from $\eventId$;
%	\item For a function $f: X \to Y$, notation $f^{X^*}: X^* \to Y^*$ to denote the function mapping sets of $\{ x_i \in X \}$ to sets of corresponding $\{ f(x_i) \in Y \}$.
	\item $\tpast{\eventId_0}=\{\eventId: \eventId<\eventId_0 \}$ to denote the past event cone of $\eventId_0$ (\revA{which is a finite set, since we assume the system has a starting point in time at any device});
	\item $\tfuture{\eventId_0}=\{\eventId: \eventId_0<\eventId\}$ to denote the future event cone of $\eventId_0$;
	\item $\trestrict{X}{E'}$ to denote the projection of a set, function, or ES $X$ to the set of events $E' \subseteq \eventS$. Note that the projection of an event structure to the future event cone of an event is still a well-formed ES.
	\end{itemize}
\end{defi}

% https://waldyrious.net/viridis-palette-generator/
\definecolor{viridisYellow}{RGB}{253,231,37}
\definecolor{viridisGreen}{RGB}{94,201,98}
\definecolor{viridisDarkgreen}{RGB}{33,145,140}
\definecolor{viridisBlue}{RGB}{59,82,139}
\definecolor{viridisPurple}{RGB}{68,1,84}

\newcommand{\colorFuture}{viridisBlue!30}
\newcommand{\colorPresent}{viridisYellow!60}
\newcommand{\colorPast}{viridisGreen!30}

\newcommand{\colorRegionA}{viridisBlue!40}
\newcommand{\colorRegionB}{viridisYellow!60}
\newcommand{\colorRegionC}{viridisGreen!40}
\newcommand{\colorRegionD}{viridisPurple!40}
\newcommand{\colorRegionE}{viridisDarkgreen!40}
\newcommand{\colorRegionOpacity}{0.6}

\begin{figure}[h]
\newcommand{\toAbove}{1.2cm}
\newcommand{\toRight}{1cm}
\centering
\adjustbox{max width=\esResizeFactor\textwidth}{
\begin{tikzpicture}
\node[] (origin) at (0,0) {};

\foreach \dev/\evs/\offx/\freq in {%
1/3/ -0.8 / 3.5,%
2/5/ -0.3 / 2.3,%
3/4/ 0.2 / 2.6,%
4/6/ -0.3 / 1.8,%
5/3/ -0.8 / 3.8}
  \foreach \ev in {1,...,\evs}
        \node[present] (d\dev\ev) [above right = \dev*\toAbove and \ev*\toRight*\freq,xshift=-1cm+\offx*2cm,yshift=-0.9cm] {$\eventId_\ev^\dev$}; %$d\dev\ev$

\makeatletter
\long\def\ifnodedefined#1#2#3{%
    \@ifundefined{pgf@sh@ns@#1}{#3}{#2}%
}
\makeatother
\foreach \ref in {d3e2} {
\foreach \a in {1,...,5} {
 \foreach \ea in {1,...,10} {
  \foreach \b in {1,...,5} {
  \foreach \eb in {1,...,10} {
  \newtoggle{closer}
   % https://tex.stackexchange.com/questions/302256/draw-path-if-distance-is-below-threshold
   \ifnodedefined{d\a\ea}{
   	  \ifnodedefined{d\b\eb}{
   	  	\ifthenelse{\NOT \equal{\a}{\b} \OR \NOT \equal{\ea}{\eb} }{ %\NOT \equal{\a}{\b} OR \NOT \equal{\ea}{\eb} 
   	  	%\NOT \equal{d\a\ea}{d\b\eb}
		  \draw[->,evlink]
     		let \p1=(d\a\ea), \p2=(d\b\eb),
       		  \n3={\y2-\y1},
     		  \n2={\x2-\x1},
     		  \n1={veclen(\x2-\x1,\y2-\y1)} in
     		{\ifdim \n1 < 3.2cm
     		 \ifdim \n2 > 0.8cm
     		 (d\a\ea) -- (d\b\eb) 
     		 \fi   		 
     		 \fi};   	  
     	}{}
   	  }{}
   }{}
  }
}
}
}
%\node[event] (r\ref) at ($(\ref)+(0,2)$) [] {$ref$}; 
}

% THE FOLLOWING CONNECTS SUBSEQUENT EVENTS OF THE SAME DEVICE
\foreach \a in {1,...,5} {
    \foreach \ea in {1,...,10} {
        % \node[] (x) at (\ea,\ea) {\the\numexpr\ea+5\relax\par};
        \pgfmathtruncatemacro\next{\ea+1};
        \ifnodedefined{d\a\next}{
            \draw[->,evlink] (d\a\ea) -- (d\a\next);
        }
    }
}

\begin{scope}[on background layer]
    \filldraw[\colorFuture, line join=round, line width=1cm] plot coordinates{(d32) (d44) (d45) (d53) (d46) (d34) (d25) (d25) (d33) (d32)}--cycle;
    \filldraw[\colorPast, line join=round, line width=1cm] plot coordinates{(d32) (d42) (d41) (d21) (d11) (d22) (d32)}--cycle;
    \filldraw[\colorPresent, line join=round, line width=1cm] plot coordinates{(d32.north west) (d32.north east) (d32.south east) (d32.south west)}--cycle;
    
\end{scope}

\draw[<->] (0,0) -- node[xshift=-1cm]{Devices} (0,5.8);
\draw[->](0,0) -- node[yshift=-0.6cm]{Time} (11,0);
\end{tikzpicture}
}
\caption{Example of an event structure. 
In the node labels, superscripts denote device identifiers, while subscripts are progressive numbers denoting subsequent rounds at the same device. 
The blue (resp. green) background denotes the future (resp. past) of a reference event denoted with a yellow background.
}
\label{fig:event-struct}
\end{figure}

\noindent An example of ES is given in \Cref{fig:event-struct},
events denote computation rounds.
Notice that self-messages
(i.e., messages from an event to the next on the same device)
can be used to model persistence of \emph{state} over time.
\addrev{
Also, notice that two subsequent events at some device (e.g., $\eventId_4^4$ and $\eventId_5^4$ of device $4$ in \Cref{fig:event-struct})
may share a same sender event from a neighbour ($\eventId_2^3$ in this case):
in a real system, this could be due to two distinct communication acts with the same message,
as well as to a mechanism by which the receiving device reuses the most recently received message
from that neighbour through some \emph{message retention} policy
(which is an implementation mechanism useful to support stability of neighbourhoods).
}

\revA{
\begin{rem}[Communication and distributed execution]\label{remark:comm}
The description of a system execution as an event structure as per Definition~\ref{def:event-structure} abstracts from details regarding the actual communication and scheduling mechanisms used in a deployed system.
Concrete communication mechanisms may include point-to-point network channels, based on wired or wireless technologies, broadcasts, intermediaries (like, e.g., the cloud), or even stigmergic (i.e., environment-mediated) means.
The computation rounds may be scheduled 
 at a fixed frequency, 
 using a particular time distribution, or 
 reactively (e.g., in response to new sensor values or reception of messages from neighbours).
A more in-depth discussion of how aggregate computing systems may be deployed and executed can be found in~\cite{casadei2020pulverization,lmcs-timefluid}.
\end{rem}
}

In the computation model we consider, based on \cite{ABDV-COORDINATION2018},
each event $\eventId$ represents the execution of a program taking all incoming messages,
and producing an outgoing message
(sent to all neighbours)
and a result value associated with $\eventId$.
Such ``map'' of result values across all events defines a computational field, as follows.

\begin{defi}[Computational field] \label{def:field}
Let $\aEventS = \ap{\eventS,\neigh,\devof}$ be an event structure.
A \emph{computational field} on $\aEventS$ is a function $\field: \eventS \to \ValueSet$
 that maps every event $\eventId$ in $\eventS$ (also called the \emph{domain} of the field)
 to some value in a value set $\ValueSet$.
\end{defi}
Computational fields are essentially the ``distributed values'' which our model deals with;
hence computation is captured by the following definition.

\begin{defi}[Field computation]\label{def:field-computation}
Let $\aEventS = \ap{\eventS,\neigh,\devof}$ be an event structure, 
 and denote $\FieldSet_{\eventS,\ValueSet}$ the set of fields on domain $\eventS$ \revA{and co-domain} $\ValueSet$, i.e., $\FieldSet_{\eventS,\ValueSet} = \{ f: \eventS \to \ValueSet \}$.
\revA{Given two sets of values $\ValueSet,\ValueSet'$, a \emph{field computation}  over $\aEventS$
 is a function $\FieldComp_\aEventS: \FieldSet_{\eventS,\ValueSet} \to \FieldSet_{\eventS,\ValueSet'}$ mapping an input field to an output field on the same domain of $\aEventS$ (but possibly on a different co-domain).}
\end{defi}
This definition naturally extends to the case of zero or multiple input fields.

\addrev{
Now, we define the notion of a \emph{field-based program}, which we denote as a construct expressing a computation on any possible \emph{environment}, where an environment can be modelled by an event structure and fields over it denoting environmental values perceivable by devices
 (e.g., temperature fields would assign a temperature value to each event).
}

\begin{defi}[Field-based operator]\label{def:field-program}
A \emph{field-based operator} (or \emph{field-based program}) is a function $\Program$ taking an event structure as input and yielding the field computation that would occur on it, namely $\Program(\aEventS)=\FieldComp_\aEventS$.
\end{defi}

\revA{
In other words, a field-based operator works as a ``program'' (design-time): it applies to a certain event structure to generate a resulting field computation (run-time).
}
\labeltext{on-program-vs-operator}{\revB{
It would also be correct to say that 
 a field-based program
 provides an implementation of a field-based operator, similarly to 
 how, e.g., \emph{quick-sort} 
 provides an implementation of a sorting operator on lists.%
}}%
\labeltext{on-operator-computability}{\revB{
Notice also that we restrict our analysis to \emph{computable} field-based operators as per previous work~\cite{ABDV-COORDINATION2018}.
}}

\addrev{
There exist core languages and full-fledged programming languages for conveniently expressing field-based programs: these are known as \emph{field calculi} and \emph{aggregate programming languages}, respectively~\cite{JLAMP2019}. One of these is used in \Cref{ssec:ac-impl} to implement our adaptive spatial sampling algorithm. 
However, as the following example demonstrates,
 field-based programs can also be expressed, in our framework of event structures, by a \emph{local} perspective, in terms of how inputs and ingoing messages at one event are mapped to an output and outgoing messages.

\begin{exa}[Gradient field computation and operator\label{example:gradient}]
Term \emph{gradient} commonly refers to a kind of distributed data structures for estimating the distance from any device in a network to its closest \emph{source} device,
 and a family of distributed algorithms for building them~\cite{audrito2017ult}, which are very useful for implementing self-organising systems~\cite{Nagpal2002programmable-self-assembly,TOMACS2018,DBLP:conf/saso/WolfH07}.

In our framework, a distributed algorithm for building gradients (\emph{gradient operator}) can be modelled as 
 a field-based operator
 $\Program_G$
 mapping any event structure $\aEventS = \ap{\eventS,\neigh,\devof}$ 
 to a gradient field computation $\FieldComp_G$ over it.
A \emph{gradient field computation} $\FieldComp_G: \FieldSet_{\aEventS,\BoolType \times \MetricType} \to \FieldSet_{\aEventS,\R}$ %\FloatType
 is essentially a map:
\begin{itemize}
\item from an input field $f_i: \FieldSet_{\aEventS,\BoolType \times \MetricType}$ mapping each event $\eventId \in \eventS$ with a pair $(\mathit{source},\mathit{metric}) \in \BoolType \times \MetricType$,
where $\MetricType \subseteq \eventS \times \R$,
 denoting \emph{source} events/devices (those where $\mathit{source}=\True$), distinguished
 from non-source events/devices (those where $\mathit{source}=\False$),
 and a metric associating neighbouring events to an estimation of the corresponding spatial distance, $\mathit{metric}(\eventId) = \{ %\deviceId(\eventId') 
 (\eventId',x) \mid %\eventId, \eventId' \in \eventS \land 
 \eventId' \neigh \epsilon \land x \in \R \}$; %\forall \eventId \in \eventS
 %(i.e., a field of value type $\MetricType \subseteq \R$),
\item 
 to the output field stabilising (cf. \Cref{ssec:self-stab}) to the minimum distances to source devices.
\end{itemize}

A simple operator for a gradient computation could be implemented through the following function, local to an event $\eventId_0 \in \eventS$ receiving from a possibly empty set of sender events $\eventId_i$ (with $\eventId_0$ denoting the sender event at the same device, if any, and with $\eventId_{i>0}$ the sender events from other devices), of the input field's value and the message set $\MsgSet=\{ \eventId_i \mapsto g_i \}$ providing the neighbours' current gradient estimates:
$$
g(\mathit{source},\{ \eventId_i \mapsto \mathit{dist}_i \}, \{ \eventId_i \mapsto g_i \}) = 
 \begin{cases}
 0 & \mathit{source}=\True \\
 \min_{i > 0} \{ g_i + dist_i \} & \mathit{source}=\False \land  \{ \eventId_{i>0} \} \neq \emptyset \\
 +\infty & \text{otherwise}
 \end{cases}
$$
An example of the induced computation is shown in \Cref{fig:stabilising-field}, assuming a simple metric where $\mathit{dist}_0=0$ and $\mathit{dist}_{i>0} = 1$.
\end{exa}
}

\subsection{Self-stabilisation}
\label{ssec:self-stab}

We now provide the definitions necessary to model \emph{self-stabilisation} following the approach in \cite{TOMACS2018}.
Namely, 
the following definitions capture the idea of \emph{adaptiveness} whereby as the environment of computation stabilises,
then the result of computation stabilises too,
and such a result does not depend on previous transitory changes.

\begin{defi}[Static environment]\label{def:static-env}
An event structure $\aEventS = \ap{\eventS,\neigh,\devof}$ 
 is said to be a \emph{static environment} if it has stable topology, namely all events of a given device
 always share the same set of receivers, i.e., $\forall \eventId, \eventId', \devof(\eventId)=\devof(\eventId') \Rightarrow 
 \receivers(\eventId)=\receivers(\eventId')$.
\end{defi}

\revA{
Note that, following the approach in~\cite{TOMACS2018}, we introduce the notion of a static environment to capture the eventual situation
 in which the environment stops perturbing the system.
This is instrumental to rely on an abstract characterisation of self-stabilising computations,
 which are those in which the system keeps intercepting changes in the environment and adapting to them: whenever (and if) the environment becomes static, one can observe the result of that adaptation that eventually establishes.
}

\begin{defi}[Stabilising environment]\label{def:stab-env}
An event structure $\aEventS = \ap{\eventS,\neigh,\devof}$ 
 is said to be a \emph{stabilising environment}
 if it 
 is eventually static,
 i.e.,
 $\exists \eventId_0 \in \eventS 
 \textrm{~such that~} 
 \tfutureof{\aEventS}{\eventId_0} 
 = \ap{\tfutureof{\eventS}{\eventId_0},
  \tfutureof{\neigh}{\eventId_0}, %\neigh_{\eventId}^{+},
  \tfutureof{\devof}{\eventId_0} %\devof_{\eventId}^{+}
}$ is static. In this case we say it is \emph{static since} event $\eventId_0$.
\end{defi}

\begin{defi}[Stabilising field]\label{def:selfstab-field}
Let event structure $\aEventS = \ap{\eventS,\neigh,\devof}$ be a stabilising environment, static since event $\eventId_0$.
A field $\field: \eventS \to \ValueSet$
 is said \emph{stabilising} 
 if it eventually provides stable output \revA{(an output that does not change since some round)}, i.e.,
 $\exists \eventId > \eventId_0 \textrm{~such that~} \forall \eventId' > \eventId, \eventId'' > \eventId \textrm{~it holds that~} d(\eventId')=d(\eventId'') \implies f(\eventId')=f(\eventId'')$.
\end{defi}

\addrev{
An example of a stabilising field, which can also be thought as being generated by a gradient computation (cf. \autoref{example:gradient}), is provided in \Cref{fig:stabilising-field}.
The environment is static since event $\eventId_1^3$ (every event in the future event cone of $\eventId_1^3$ has the same set of receivers), and from event $\eventId_2^3$ (excluded) it holds that each device does not change the value it produces in its rounds.
}

\begin{figure}[h]
\newcommand{\toAbove}{1.2cm}
\newcommand{\toRight}{1cm}
\centering
\adjustbox{max width=\esResizeFactor\textwidth}{
\begin{tikzpicture}
\node[] (origin) at (0,0) {};

\foreach \dev/\evs/\offx/\freq in {%
1/3/ -0.8 / 3.5,%
2/5/ -0.3 / 2.3,%
3/4/ 0.2 / 2.6,%
4/6/ -0.3 / 1.8,%
5/3/ -0.8 / 3.8}
  \foreach \ev in {1,...,\evs}
        \node[present] (d\dev\ev) [above right = \dev*\toAbove and \ev*\toRight*\freq,xshift=-1cm+\offx*2cm,yshift=-0.8cm] {\scriptsize$\eventId_\ev^\dev$}; %$d\dev\ev$
        
\makeatletter
\long\def\ifnodedefined#1#2#3{%
    \@ifundefined{pgf@sh@ns@#1}{#3}{#2}%
}
\makeatother

\foreach \ref in {d3e2} {
\foreach \a in {1,...,5} {
 \foreach \ea in {1,...,10} {
  \foreach \b in {1,...,5} {
  \foreach \eb in {1,...,10} {
  \newtoggle{closer}
   % https://tex.stackexchange.com/questions/302256/draw-path-if-distance-is-below-threshold
   \ifnodedefined{d\a\ea}{
   	  \ifnodedefined{d\b\eb}{
   	  	\ifthenelse{\NOT \equal{\a}{\b} \OR \NOT \equal{\ea}{\eb} }{ %\NOT \equal{\a}{\b} OR \NOT \equal{\ea}{\eb} 
   	  	%\NOT \equal{d\a\ea}{d\b\eb}
		  \draw[->,evlink]
     		let \p1=(d\a\ea), \p2=(d\b\eb),
       		  \n3={abs(\y2-\y1)},
     		  \n2={\x2-\x1},
     		  \n1={veclen(\x2-\x1,\y2-\y1)} in
     		{\ifdim \n1 < 3.2cm
     		 \ifdim \n2 > 0.1cm
     		 \ifdim \n3 < 2.0cm
     		 (d\a\ea) -- (d\b\eb) 
     		 \fi   		 
     		 \fi
     		 \fi};   	  
     	}{}
   	  }{}
   }{}
  }
}
}
}
%\node[event] (r\ref) at ($(\ref)+(0,2)$) [] {$ref$}; 
}

%\draw[->] (d11) -- (d22);
\draw[->,evlink] (d12) -- (d24);
%\draw[->] (d13) -- (d25);
%\draw[->] (d22) -- (d12);
\draw[->,evlink] node (d14) [right=3cm of d13] {...} (d13) -- (d14);
\draw[->,evlink] node (d26) [right=1cm of d25] {...} (d25) -- (d26);
\draw[->,evlink] node (d35) [right=1cm of d34] {...} (d34) -- (d35);
\draw[->,evlink] node (d47) [right=1cm of d46] {...} (d46) -- (d47);
\draw[->,evlink] node (d54) [right=1cm of d53] {...} (d53) -- (d54);

\draw[->,evlink] node (d26b) [below right=1cm of d25] {} (d25) -- (d26b);
\draw[->,evlink] node (d26c) [above right=1cm of d25] {} (d25) -- (d26c);
\draw[->,evlink] node (d35b) [below right=1cm of d34] {} (d34) -- (d35b);
\draw[->,evlink] node (d35c) [above right=1cm of d34] {} (d34) -- (d35c);
\draw[->,evlink] node (d47b) [above right=1cm of d46] {} (d46) -- (d47b);

\draw[->,evlink] (d46) -- (d35);
\draw[->,evlink] (d13) -- (d26);

% THE FOLLOWING CONNECTS SUBSEQUENT EVENTS OF THE SAME DEVICE
\foreach \a in {1,...,5} {
 \foreach \ea in {1,...,10} {
 %\node[] (x) at (\ea,\ea) {\the\numexpr\ea+5\relax\par};
 \pgfmathtruncatemacro\next{\ea+1};
 \ifnodedefined{d\a\next}{
 	\draw[->,evlink] (d\a\ea) -- (d\a\next);
 }
}
}

\draw[<->] (0,0) -- node[xshift=-1cm]{Devices} (0,6);
\draw[->](0,0) -- node[yshift=-0.6cm]{Time} (12,0);

\node[] () [above=-0.1cm of d11, xshift=0.15cm] {$\infty$};
\node[] () [above=-0.1cm of d12] {$\infty$};
\node[] () [above=-0.1cm of d13, xshift=0.15cm] {$2$};
\node[] () [above=-0.1cm of d14] {$2$};

\node[] () [above=-0.1cm of d21] {$\infty$};
\node[] () [above=-0.1cm of d22] {$\infty$};
\node[] () [above=-0.1cm of d23] {$1$};
\node[] () [above=-0.1cm of d24] {$1$};
\node[] () [above=-0.1cm of d25] {$1$};
\node[] () [above=-0.1cm of d26] {$1$};

\node[] () [above=-0.1cm of d31] {$\infty$};
\node[] () [above=-0.1cm of d32] {$0$};
\node[] () [above=-0.1cm of d33, xshift=-0.1cm] {$0$};
\node[] () [above=-0.1cm of d34] {$0$};
\node[] () [above=-0.1cm of d35] {$0$};

\node[] () [above=-0.1cm of d41] {$\infty$};
\node[] () [above=-0.1cm of d42] {$\infty$};
\node[] () [above=-0.1cm of d43] {$\infty$};
\node[] () [above=-0.1cm of d44] {$1$};
\node[] () [above=-0.1cm of d45] {$1$};
\node[] () [above=-0.1cm of d46] {$1$};
\node[] () [above=-0.1cm of d47] {$1$};

\node[] () [above=-0.1cm of d51] {$\infty$};
\node[] () [above=-0.1cm of d52] {$\infty$};
\node[] () [above=-0.1cm of d53] {$2$};
\node[] () [above=-0.1cm of d54] {$2$};

\end{tikzpicture}
}
\caption{\addrev{Example of a stabilising field. We use labels above the nodes to denote the values computed in the corresponding events, assuming the program is a gradient operator as per~\autoref{example:gradient}.}
}
\label{fig:stabilising-field}
\end{figure}

\begin{defi}[Stabilising computation]\label{def:selfstab-computation}
A field computation $\FieldComp_\aEventS = \FieldSet_{\eventS,\ValueSet} \to \FieldSet_{\eventS,\ValueSet}$ is said \emph{stabilising} if,
 when applied to a stabilising input field,
 it yields a stabilising output field.
\end{defi}

\begin{defi}[Self-stabilising operator]\label{def:selfstab-operator}
A field-based operator (or program) $\Program$ is said self-stabilising, if in any stabilising environment $\aEventS$ it yields a stabilising computation $\FieldComp_\aEventS$ such that, for any pair of input fields $f_1,f_2$ eventually equal, i.e. $\trestrict{f_1}{\tfuture{\eventId}}=\trestrict{f_2}{\tfuture{\eventId}}$ for some event $\eventId$, their output is eventually equal too, i.e., there exists a $\eventId'>\eventId$ such that $\trestrict{\FieldComp_\aEventS(f_1)}{\tfuture{\eventId'}}=\trestrict{\FieldComp_\aEventS(f_2)}{\tfuture{\eventId'}}$
\end{defi}

\labeltext{on-decidability}{\revB{%
Notice that universally quantifying over event structures, i.e., considering infinitely many system executions,
makes finding decision procedures for properties like stabilisation undecidable \emph{in general}.
However, this does not prevent us from reasoning about such properties for a specific program, as we carry on in this paper---and as developed in previous works, e.g., in \cite{TOMACS2018}.
\footnote{On the other hand, note that most of our definitions could be given considering finite runs, where proving decidability could be easier---but this is not developed for the sake of generality.}
}}

\subsection{Problem Definition}
\label{ssec:problem-def}

We start by introducing the notion of \emph{regional partition},
which is a finite set of non-overlapping contiguous clusters of devices:
a notion that prepares the ground to that of an \emph{\sampling{}} which we introduce in this paper.

\begin{defi}[Regional partition field\revA{, contiguous regions}]\label{def:es-partition}
Let $\aEventS = \ap{\eventS,\neigh,\devof}$ 
 be a stabilising environment
 static since event $\eventId_0$.
A \emph{regional partition field} is a stabilising field $\field: \eventS \to \ValueSet$ on $\aEventS$ such that:
 \begin{itemize}
 \item \textbf{(finiteness)} the image $\Img(f)=\{f(x) \mid x \in \eventS\}$ is a finite set of values; % $\{r_i\}$ denoting \emph{regions}; 
 \item \textbf{(eventual contiguity)} there exists an event $\eventId'_0 > \eventId_0$ such that for any pair of events $\eventId_1, \eventId_n \in \tfuture{\eventId'_0}$, \revA{$f(\eventId_1)=f(\eventId_n)$ implies that there is a sequence of events $\eventId_1 < ... < \eventId_n$ connecting $\eventId_1$ to $\eventId_n$ where $f(\eventId_i)=f(\eventId_1)=f(\eventId_n) \; \forall 1 \leq i \leq n$}.
 \end{itemize}
Note that the set of domains of regions induced by $f$ is defined by $\Regions(f)=\{f^{-1}(v) : v\in\Img(f)\}$.
\revA{Moreover, given two regions $\eventS, \eventS' \in \Regions(f)$, we say that they are \emph{contiguous} if $\exists \eventId \in \eventS, \eventId' \in \eventS' : \eventId \neigh \eventId' \lor \eventId' \neigh \eventId$.}
\end{defi}
An example of a regional partition field is shown in \Cref{fig:partition-field}.
Notice that for any pair of events in the same space-time region there exists a path of events entirely contained in that region.
Also, notice that, by this definition, different disjoint regions denoted by the same value $r$ are not possible.

\begin{defi}[\Sampling]\label{def:sampler}
An \emph{\sampling} is a stabilising computation $\FieldComp_S: \FieldSet_{\eventS,\ValueSet} \to \FieldSet_{\eventS,\ValueSet}$
 that, given an input field to be sampled,
 yields as output a regional partition field. 
\end{defi}

\begin{figure}[h]
\newcommand{\toAbove}{1.2cm}
\newcommand{\toRight}{1cm}
\centering
\adjustbox{max width=\esResizeFactor\textwidth}{
\begin{tikzpicture}
\node[] (origin) at (0,0) {};

\foreach \dev/\evs/\offx/\freq in {%
1/3/ -0.8 / 3.5,%
2/5/ -0.3 / 2.3,%
3/4/ 0.2 / 2.6,%
4/6/ -0.3 / 1.8,%
5/3/ -0.8 / 3.8}
  \foreach \ev in {1,...,\evs}
        \node[present] (d\dev\ev) [above right = \dev*\toAbove and \ev*\toRight*\freq,xshift=-1cm+\offx*2cm,yshift=-0.8cm] {$\eventId_\ev^\dev$}; %$d\dev\ev$

\makeatletter
\long\def\ifnodedefined#1#2#3{%
    \@ifundefined{pgf@sh@ns@#1}{#3}{#2}%
}
\makeatother

\foreach \ref in {d3e2} {
\foreach \a in {1,...,5} {
 \foreach \ea in {1,...,10} {
  \foreach \b in {1,...,5} {
  \foreach \eb in {1,...,10} {
  \newtoggle{closer}
   % https://tex.stackexchange.com/questions/302256/draw-path-if-distance-is-below-threshold
   \ifnodedefined{d\a\ea}{
   	  \ifnodedefined{d\b\eb}{
   	  	\ifthenelse{\NOT \equal{\a}{\b} \OR \NOT \equal{\ea}{\eb} }{ %\NOT \equal{\a}{\b} OR \NOT \equal{\ea}{\eb} 
   	  	%\NOT \equal{d\a\ea}{d\b\eb}
		  \draw[->,evlink]
     		let \p1=(d\a\ea), \p2=(d\b\eb),
       		  \n3={abs(\y2-\y1)},
     		  \n2={\x2-\x1},
     		  \n1={veclen(\x2-\x1,\y2-\y1)} in
     		{\ifdim \n1 < 3.2cm
     		 \ifdim \n2 > 0.1cm
     		 \ifdim \n3 < 2.0cm
     		 (d\a\ea) -- (d\b\eb) 
     		 \fi   		 
     		 \fi
     		 \fi};   	  
     	}{}
   	  }{}
   }{}
  }
}
}
}
%\node[event] (r\ref) at ($(\ref)+(0,2)$) [] {$ref$}; 
}

% THE FOLLOWING CONNECTS SUBSEQUENT EVENTS OF THE SAME DEVICE
\foreach \a in {1,...,5} {
 \foreach \ea in {1,...,10} {
 %\node[] (x) at (\ea,\ea) {\the\numexpr\ea+5\relax\par};
 \pgfmathtruncatemacro\next{\ea+1};
 \ifnodedefined{d\a\next}{
 	\draw[->,evlink] (d\a\ea) -- (d\a\next);
 }
}
}

\draw[->] node (d54b) [right=1cm of d53] {...} ;
\draw[->] node (d26c) [right=1cm of d25] {...} ;
\draw[->] node (d35b) [right=1cm of d34] {...} ;
\draw[->] node (d14b) [right=2.8cm of d13] {...} ;
\draw[->] node (d47b) [right=1cm of d46] {...} ;

\begin{scope}[on background layer]
    \filldraw[draw=\colorRegionA,fill=\colorRegionA, line join=round, line width=1cm,fill opacity=\colorRegionOpacity,draw opacity=\colorRegionOpacity] plot coordinates{
	(d11) (d22) (d26) (d47) (d44) (d11)
}--cycle;
    \filldraw[draw=\colorRegionC,fill=\colorRegionC, line join=round, line width=1cm,fill opacity=\colorRegionOpacity,draw opacity=\colorRegionOpacity] plot coordinates{
    (d42) (d52) (d54) (d52.south) (d43) (d42)
}--cycle;
    \filldraw[draw=\colorRegionB,fill=\colorRegionB, line join=round, line width=1cm,fill opacity=\colorRegionOpacity,draw opacity=\colorRegionOpacity] plot coordinates{
    (d13.south) (d26b.south) (d26b.north) (d13.north)
}--cycle;
    \filldraw[draw=\colorRegionA,fill=\colorRegionA, line join=round, line width=1cm,fill opacity=\colorRegionOpacity,draw opacity=\colorRegionOpacity] plot coordinates{
	(d51.south west) (d51.south east) (d51.north east) (d51.north west)
}--cycle;
\end{scope}

\draw[<->] (0,0) -- node[xshift=-1cm]{Devices} (0,6);
\draw[->](0,0) -- node[yshift=-0.6cm]{Time} (12,0);

\draw[->,evlink] (d12) -- (d24);
\draw[->,evlink] (d13) -- (d26c);
\end{tikzpicture}
}
\caption{Example of a regional partition field with regions $r_{blue}$,  $r_{green}$, $r_{yellow}$, $r_{white}$ (the background is used to denote the output of the field).
Notice that contiguity does not hold everywhere and anytime but only since event $\eventId_2^3$. 
}
\label{fig:partition-field}
\end{figure}

\noindent Once we have defined an \sampling{} process 
 in terms of its inputs, outputs, and stabilising dynamics,
 we need a way to measure the error introduced by the \sampling{}.
To this purpose, we introduce the notion of a \emph{stable snapshot},
 namely a field 
 consisting of a sample of one event per device
 from the stable portion of a stabilising field. 

\begin{defi}[Stable snapshot] \label{def:spatialSection}
Let $\aEventS = \ap{\eventS,\neigh,\devof}$ be an event structure, 
and $\field: \eventS \to \ValueSet$ be a stabilising field on $\aEventS$ which provides stable output from $\eventId_0 \in \eventS$.
We define a \emph{stable snapshot} of field $\field$
 as a field obtained by restricting $\field$ to a subset of events in the future event cone of $\eventId_0$ \emph{and}
 with exactly one event per device, i.e., a field $\field_S: \eventS_S \to \ValueSet$ such that $\eventS_S \subseteq \tfuture{\eventId_0}$,
 \emph{and} 
 $\forall \eventId, \eventId' \in \eventS_S: d(\eventId) = d(\eventId') \implies \eventId=\eventId'$,
 \emph{and} $\forall \eventId \in \tfuture{\eventId_0}, \exists \eventId' \in \eventS_S: d(\eventId') = d(\eventId)$.
\end{defi}

\newcommand{\ssed}[0]{error-distance}

\begin{defi}[Stable snapshot \ssed{}] \label{def:stablesnapshotdist}
\revA{We call \emph{stable snapshot \ssed} any metric $\fmetric: \FieldSet_{\eventS,\ValueSet} \times \FieldSet_{\eventS,\ValueSet} \to \mathbb{R}^+_0$
 over stable snapshots that feature same domain (event structure) and codomain (set of values)}.
\end{defi}

\revT{We are now able to characterise adequacy properties for a sampling operator, intuitively capturing the fact that sampling correctly trades-off the size of regions with their accuracy.
We first start by introducing a notion that handles accuracy, stating that any of the produced regions won't cause the \ssed{} to be over a certain threshold.}

\begin{defi}[\Sampling{} error]\label{def:error}
Let $\FieldComp_\aEventS: \FieldSet_{\eventS,\ValueSet} \to \FieldSet_{\eventS,\ValueSet}$ be an \sampling{}, and consider an input field $f_i: \eventS \to \ValueSet$ and corresponding output regional partition $f_o: \eventS \to \ValueSet$.
We say that \emph{$f_o$ samples $f_i$ within error $\samplingErr$ according to \ssed{} $\fmetric$}, if the \ssed{} of stable snapshots of $f_i$ and $f_o$ in any region is not bigger than $\samplingErr$, that is:
let $f_i^s$ and $f_o^s$ be stable snapshots of $f_i$ and $f_o$, then for any region $\eventS'\in\Regions(f_o^s)$, we have $\fmetric(\trestrict{f_i^s}{\eventS'},\trestrict{f_o^s}{\eventS'})\leq\samplingErr$.
\end{defi}
\revT{
Note that \emph{accuracy} can be generally achieved simply by partitions defining many small regions---up to the corner case in which all regions include just one device, hence trivially induce zero \ssed{}.
Therefore, we are also interested in \emph{efficiency}, namely the ability of a regional partition to rely on as few regions as possible.
Without a centralised approach, however, partitioning is necessarily sub-optimal,
since it can rely only on local interaction/competition among regions,
hence it should be expected that some regions will stop ``expanding'' as they reach a smaller threshold.
Additionally, there can also be corner cases where regions with very small \ssed{} are created, e.g., because what remains to be covered in an iterative selection of regions is simply a very small part of the network, or one with rather uniform values introducing little sampling errors.
What we may require from an adequate sampling operator, however, is that such regions are somewhat not the norm.
This is formally captured by the following definition, essentially introducing a ``lower bound'' for the \ssed{} of regions.}
 
\begin{defi}[Local optimality of a regional partition]\label{def:optimal}
Let $\FieldComp_\aEventS: \FieldSet_{\eventS,\ValueSet} \to \FieldSet_{\eventS,\ValueSet}$ be an \sampling{}, consider an input field $f_i$ and corresponding output regional partition $f_o$ such that 
$f_o$ samples $f_i$ within error $\samplingErr$ according to distance $\fmetric$\revA{, and
denote with $f_i^s$ and $f_o^s$ the stable snapshots of $f_i$ and $f_o$, respectively (cf. Definition \ref{def:error})}.
\revT{We say that $f_o$ is \emph{locally optimal} under error $\samplingErr$ and with efficiency $k$ ($k>0$) if all pairs of contiguous regions $\eventS',\eventS''\in\Regions(f_o)$ are such that
$\fmetric(\trestrict{f_i^s}{\eventS'\cup\eventS''},\trestrict{f_o^s}{\eventS'\cup\eventS''})\geq k \cdot \samplingErr{}$.}
\end{defi}
\revT{For example, we will show that the algorithm we propose guarantees $k = 0.5$ (see \Cref{ssec:formal-results}).
Note that we call this notion ``local optimality'' to stress the fact that an identified partition is not necessarily the best one that could be found, but it is one that cannot be significantly improved with a small change, such as combining two regions---a small improvement is possible, depending on the efficiency factor $k$.
This notion well fits our goal of dealing with dynamic phenomena and large-scale environments,
 where one is more geared towards finding good heuristics for self-organising behaviour.}

So, we are now ready to define the goal operator for this paper.

\begin{defi}[Effective sampling operator] \label{def:optsampling}
An \emph{effective sampling operator} with \revT{efficiency $k$} is a self-stabilising operator $\Program_\samplingErr$, parametric in the error bound $\samplingErr$, such that in any stabilising environment $\aEventS$ and stabilising input $f_i$, a locally optimal regional partition \revT{with efficiency $k$} and within error $\samplingErr$ is produced.
\end{defi}

\newcommand{\algname}{{\textsc{AggregateSampler}}}
\newcommand{\pse}{path sampling error}

\section{Aggregate Computing-based Solution}\label{s:solution}

\addrev{
In this section, we define a \emph{space-based adaptive sampling} algorithm, called \algname{}, (\Cref{ssec:algo}),
discuss its implementation in \emph{aggregate computing}~\cite{BealIEEEComputer2015,JLAMP2019} (\Cref{ssec:ac-impl}),
and prove the algorithm is a self-stabilising, effective sampler with efficiency at least $k=0.5$ (\Cref{ssec:formal-results}).
}
\addrev{
The algorithm is defined in terms of the computational model described in \Cref{ssec:comp-model}, as well as its implementation, and is the one evaluated in \Cref{s:eval}.
The proofs are based on the definitions of \Cref{ssec:self-stab} and \Cref{ssec:problem-def}.
}

\subsection{\algname{} Algorithm for Adaptive Spatial Sampling}
\label{ssec:algo}

The problem of creating partitions in a self-organising way is very much related to a problem of multi-leader election
 ~\cite{FGCS2020-scr,pianini2022acsos-leader-election}.

\revT{Building on this idea, our approach starts by solving a \emph{sparse leaders election problem}~\cite{DBLP:books/mk/Lynch96}, for which self-stabilising solutions exist~\cite{Mo2020leader-election,DBLP:conf/podc/BurmanCCDNSX21,pianini2022acsos-leader-election}.
Leaders are used as \emph{samplers} of the input field. During the election of leaders/samplers, we associate them with larger and larger regions of ``follower devices'' that will provide the sampled value as output.
During execution of the algorithm, such regions will expand until the desired \ssed{} can be kept under the threshold $\eta$.
This process is managed so that there won't be any overlap with other regions, and so that no devices of the network remain outside of some region
(i.e., each device will follow exactly one leader).

To ensure that regions are connected, and won't overcome the threshold independently of the chosen leader, we adopt as \ssed{} one based on ``distance among devices'', as follows.
The algorithm can be configured to adopt any strategy that is able to turn input and output fields into a metric $m$ for devices: such a metric is as usual a function mapping a pair of neighbour devices to a non-negative real number, called the ``local sampling distance'' of the two devices---intuitively, the higher the physical distance of devices and the higher the difference of input values and output values of the two devices, the higher is $m$ for that pair.
Given this metric, any pair of devices of the network can be associated with a \emph{\pse}, which is the size of the shortest path (according to the metric) connecting the two devices.
The proposed algorithm will then produce regions adopting as \ssed{} $\fmetric$ the maximum \pse{} of any pair of devices in the region, and it will turn out that any pair of contiguous regions combined will necessarily give \ssed{} greater than $0.5*\eta$ (efficiency $k=0.5$).}

The algorithm is defined as follows (see \Cref{fig:algoexample-regions}):
\begin{enumerate}
	\item each device announces its candidature for leadership;
	\item each device propagates to its neighbours the candidature of the device it currently recognises as leader, its sampled value, and the \pse{} from it,
    fostering the expansion of its corresponding region;
    \item \label{algostep-discard-candidacy} devices discard candidatures whose \pse{} from the leader exceeds half the expected threshold (\revT{$\samplingErr/2$});
	\item \label{algostep-select-candidacy} in case multiple valid candidatures (i.e., those that are not discarded) reach a device, one is selected based on a \emph{competition policy}.
\end{enumerate}
\revA{The specific strategies for
    computing the local sampling distance
    and the leader competition policy are application-dependent---we will provide some instances in \Cref{s:eval}.
}

\begin{figure}[h]
\begin{tikzpicture}

\newcommand{\cRegion}[3]{
\begin{scope}[on background layer]
    \filldraw[draw=#1,fill=#1, line join=round, line width=1cm,fill opacity=\colorRegionOpacity,draw opacity=\colorRegionOpacity] plot coordinates{
	(n#2.north west) (n#3.north east)
	(n#3.south east) (n#2.south west)
}--cycle;
\end{scope}
}

% create nodes
\node[dev] (n0) [] {};
\foreach \dn [evaluate=\dn as \prev using \dn-1] in {1,...,10}
  \node[dev] (n\dn) [right=of n\prev] {};

% create arrows
\foreach \dn %[evaluate=\dn as \succ using \dn+1] 
  in {0,...,9}
    \pgfmathsetmacro\succ{\dn + 1}
	\draw[<->] (n\dn) -- ($(n\succ)+(-0.3,0)$);

% create regions
\cRegion{\colorRegionA}{0}{4}
\cRegion{\colorRegionB}{5}{9}
\cRegion{\colorRegionC}{10}{10}

% create leaders
\foreach \dn in {2,7,10}
  \node[dev,fill=black] (over\dn) at (n\dn) {};

% show errors
\draw[<->,red] ($(n0)+(0,0.3)$) -- ($(n2)+(0,0.3)$) node [midway,yshift=0.3cm] {$\eta/2$};
\draw[<->,red] ($(n2)+(0,0.3)$) -- ($(n4)+(0,0.3)$) node [midway,yshift=0.3cm] {$\eta/2$};
\draw[<->,red] ($(n5)+(0,0.3)$) -- ($(n7)+(0,0.3)$) node [midway,yshift=0.3cm] {$\eta/2$};
\draw[<->,red] ($(n7)+(0,0.3)$) -- ($(n9)+(0,0.3)$) node [midway,yshift=0.3cm] {$\eta/2$};

\draw[<->,red] ($(n0)+(0,-0.3)$) -- ($(n4)+(0,-0.3)$) node [midway,yshift=-0.3cm] {$\eta$};
\draw[<->,red] ($(n5)+(0,-0.3)$) -- ($(n9)+(0,-0.3)$) node [midway,yshift=-0.3cm] {$\eta$};

\draw[<->,red] ($(n10)+(-0.3,-0.3)$) -- ($(n10)+(+0.3,-0.3)$) node [midway,yshift=-0.3cm] {$<< \eta$};

\end{tikzpicture}

\caption{\revA{
Example of a regional partitioning (with three contiguous regions) created by the algorithm on a simplified system where devices are arranged on a line. Notation: black dots denote the leaders/samplers; the coloured areas denote regions; and the red extension lines are used to denote the error-distances. Note that no device in a region can have path sampling error greater than $\eta/2$ with respect to the leader, and that very small regions can still exist in corner cases (e.g., the green region on the right).}
}
\label{fig:algoexample-regions}
\end{figure}

\subsubsection*{Competition and leader strength.}

Although competition among leaders could be realised in several ways,
many techniques may lead to non-self-stabilising behaviour:
for instance,
if the winning leader is selected randomly in the set of those whose error is under threshold,
regions may keep changing even in a static environment.
In this work, we propose a simple strategy:
every leader associates its candidature with the local value of a field that we call \emph{leader strength};
in case of competing candidatures,
the highest such value is selected as winner,
breaking the symmetry.
The leader strength can be of any orderable type,
and its choice impacts the overall selection of the regions by imposing a selection priority over leaders
(hence on region-generation points).
\addrev{
If two candidate leaders have the same strength,
 then we prefer the closest one.
If we are in the (unlikely) situation
 of perfect symmetry,
 with two equally-strong candidate leaders at the same distance,
 then their device identifier is used to break symmetry.
}

\subsubsection*{Region expansion and \pse{}.}

Inspired by previous work on distributed systems whose computation is independent of device distribution~\cite{TAAS2017},
the proposed approach essentially accumulates the \pse{} along the path from the leader device towards other devices
along a gradient (cf. \autoref{example:gradient}),
a distributed data structure that can be generated through self-stabilising computations~\cite{TOMACS2018}\revA{~(cf. \Cref{ssec:self-stab})}. 
We thus have two major drivers:
\begin{enumerate}
	\item the leader strength affects the creation of regions by influencing the positions of their source points;
	\item the \pse{} influences the expansion in space of the region across all directions,
	mandating its size and (along with the interaction with other regions) its shape.
\end{enumerate}
For instance,
a metric could be the absolute value of the difference in the perceived signal \revA{(e.g., a value sampled from a sensor---cf. \Cref{ssec:comp-model})} between
two devices:
devices perceiving very different values would tend not to cluster together (even if spatially close),
as they would perceive each other as farther away (leading to irregular shapes).

As the simulations in the next section verify, 
connecting region expansion with the error-distance
 (i.e., using the error-distance as a distance metric for gradient computation)
enables the determination of locally optimal sampling regions.
We recall that the local optimality property 
 means that all regions are essentially needed except for corner cases.

\subsection{Aggregate Computing-based Implementation}
\label{ssec:ac-impl}

\newcommand{\fmpName}[0]{expansionLogic}
\begin{figure}[th]
\begin{lstlisting}
// Definition of the record (product type) Sample + accessor functions %\label{code:protelis:candidacy:init}%
def Sample(symmetryBreaker, distance, leaderId) = [symmetryBreaker, distance, leaderId]
def breakSymmetry(sample) = sample.get(0)
def sampleDistance(sample) = sample.get(1)
def areaCenter(sample) = sample.get(2)
def discard() = Sample(POSITIVE_INFINITY, POSITIVE_INFINITY, POSITIVE_INFINITY) %\label{code:protelis:candidacy:end}%
// Logic to control the propagation of candidacies
def %\fmpName{}%(sample, localId, radius) = %\label{code:protelis:validate:def:init}%
  mux (areaCenter(sample) == localId || sampleDistance(sample) >= radius) {
    discard()
  } else {
    sample
  }%\label{code:protelis:validate:def:end}%

def %\algname{}%(mid, radius, symmetryBreaker, metric) { %\label{code:protelis:fun:init}%
  let local = Sample(-symmetryBreaker, 0, mid)
  areaCenter(
    share (received <- local) { %\label{code:protelis:share:init}%
      let candidacies = received.set(1, received.get(1) + metric()) %\label{code:protelis:distance}%
      let filtered = %\fmpName{}%(candidacies, mid, radius) %\label{code:protelis:validate:call}%
      min(local, foldMin(discard(), filtered)) %\label{code:protelis:min}%
    } %\label{code:protelis:share:end}%
  )
} %\label{code:protelis:fun:end}%
\end{lstlisting}
\caption{Source code of the algorithm.}
\label{fig:algorithm}
\end{figure}

\addrev{
An implementation of the algorithm expressed in the Protelis aggregate programming language~\cite{PianiniSAC2015} is shown in \Cref{fig:algorithm}.
In aggregate computing, a so-called \emph{aggregate program} such as the one shown in \Cref{fig:algorithm}, is repeatedly run by all the devices: it expresses a logic for mapping the \emph{local context} (given by sensor readings and messages from neighbours) to an \emph{output value} and an \emph{output message} to be sent to all the neighbours.
In other words, aggregate computing leverages the computational model described in \Cref{ssec:comp-model},
 where each event 
 denotes a full execution of the aggregate program
 against the event's inputs,
 determining the message payload passed to receiving events.

The core of the program is function \algname{},
 which consists of the following main elements:
\begin{itemize}
\item \emph{sampler candidacies are modelled as ordered triplets},
        using 3-element tuples, with corresponding accessor functions
        (\Cref{fig:algorithm}, Lines~\ref{code:protelis:candidacy:init}--\ref{code:protelis:candidacy:end}), of the elements:
        \begin{enumerate}
        \item \lstinline|simmetryBreaker|: a value used to break symmetry, capturing the ``strength'' of a candidacy;
        \item \lstinline|distance|: a value capturing the distance to the sampler node of a candidacy (e.g., computed through a gradient---cf. \autoref{example:gradient});
        \item \lstinline|leaderId|: holding the device identifier of the candidate sampler;
		\end{enumerate}         
\item function \texttt{\fmpName{}} (\Cref{fig:algorithm}, \ref{code:protelis:validate:def:init}--\ref{code:protelis:validate:def:end}) is defined to determine when a candidacy has to be discarded, i.e., when it comes from the device itself or when the distance of the candidate sampler is greater or equal than a \lstinline|radius| parameter;
\item function \texttt{\algname{}} (\Cref{fig:algorithm}, \ref{code:protelis:fun:init}--\ref{code:protelis:fun:end}) is the entry point of the algorithm, parametrised in terms of the executing device identifier (\lstinline|mid|), 
a maximum spatial range of candidacies (\lstinline|radius|),
a value to break symmetry (\lstinline|symmetryBreaker|),
and a \lstinline|metric| function providing distances to neighbours;
\item \lstinline|share(x <- init)\{ e \}| is a bidirectional communication construct~\cite{DBLP:journals/lmcs/AudritoBDPV20},
        that works as follows: the declared variable \lstinline|x|, which is set to \lstinline|init| at the first round,
        collects the evaluations of the overall \lstinline|share| expression in neighbour devices
        (including the device itself),
        and the new value for the current device
        (which is the data item that will be sent to neighbours)
        is obtained by evaluating expression \lstinline|e|; 
\item \emph{the distance field of the neighbour candidacies is updated},
        (\Cref{fig:algorithm}, Line~\ref{code:protelis:distance}),
        by adding to each candidacy provided by a neighbour the local distance w.r.t. that neighbour\footnote{Note that this operation, together with the \lstinline|share| application, essentially provides the same structure as the basic gradient algorithm discussed in \autoref{example:gradient}.} (as provided by \lstinline|metric|);
\item \emph{neighbours' candidacies that are too far or support the current device are de-prioritised}
        (\Cref{fig:algorithm}, Lines~\ref{code:protelis:validate:call}),
                by function \texttt{\fmpName()}; 
                and, finally
\item \emph{selecting the winner over the processed candidacies through minimisation},
        by \lstinline|min| and \lstinline|foldMin|
        (\Cref{fig:algorithm}, Line~\ref{code:protelis:min}),
        which minimise over the filtered candidacy triplets,
        with default candidacy as the one with the lowest priority (provided by \lstinline|discard()|), and against the \lstinline|local| candidacy.
\end{itemize}
}
\revA{
The above described algorithm is an effective sampling operator (\autoref{def:optsampling}) as long as (i) \emph{half the path sampling error $\samplingErr$ is used as parameter \lstinline|radius|}, so that function \texttt{\fmpName{}} does not expand regions beyond the given error $\samplingErr$, and (ii) \emph{an additive \lstinline|metric| is used}, so that it is impossible to decrease the error by expanding any given region (at best, it will stay the same).
}

\addrev{
\subsection{Formal Analysis}
\label{ssec:formal-results}

In this section, we prove that the proposed solution is self-stabilising (cf. \autoref{def:selfstab-computation}, \autoref{def:selfstab-operator}) and that it represents an effective sampling operator leading to a bounded-error locally optimal regional partition (cf. \autoref{def:sampler}, \autoref{def:optimal}, \autoref{def:optsampling}).

Since our \sampling{} must be a stabilising computation (see \autoref{def:sampler}), we start by proving that our algorithm is self-stabilising.
We do so by exploiting the framework in \cite{TOMACS2018,DBLP:journals/lmcs/AudritoBDPV20},
 which defines a set of self-stabilising \emph{fragments}
 which can be composed together to yield self-stabilising operators (\autoref{def:selfstab-operator}).
\revA{
In particular, in \cite{TOMACS2018} it is proved that any closed expression in the self-stabilising fragment is self-stabilising, by structural induction on the syntax of expressions and programs (\cite{TOMACS2018}, Appendix E, Lemma 2): values and variables are already self-stabilised,
 a function application self-stabilises (by the inductive hypothesis) if its arguments are self-stabilising,
 and similar considerations can be done for the other program fragments.
}
}

\begin{figure}[th]
\centering
\newsavebox{\figstepa}
\begin{lrbox}{\figstepa}% Store first listing
\begin{lstlisting}[basicstyle=\ttfamily\scriptsize]
def updateDistance(x, metric) {
  x.set(1, x.get(1) + metric())
  x
}
def fR(x, prev) = x
def fMP(x, localId, radius, metric) =
  %\fmpName{}%(updateDistance(x, metric), localId, radius)
def minHoodLoc(e, loc) = min(loc, foldMin(discard(), e))
def %\algname{}%(mid, radius, symmetryBreaker, metric) {
    let local = Sample(-symmetryBreaker, 0, mid)
    areaCenter(
        share (x <- local) {
            let candidacies = x.set(1, x.get(1) + metric())
            let filtered = %\fmpName{}%(candidacies, mid, radius)
            min(local, foldMin(discard(), filtered))
        }
    )
}
\end{lstlisting}
\end{lrbox}
\newsavebox{\figstepb}
\begin{lrbox}{\figstepb}% Store first listing
\begin{lstlisting}[basicstyle=\ttfamily\scriptsize]
// ... omitted ...
    areaCenter(
        share (x <- local) {
            let filtered = fMP(x, mid, radius, metric)
            min(local, foldMin(discard(), filtered))
        }
    )
// ... omitted ...
\end{lstlisting}
\end{lrbox}
\newsavebox{\figstepc}
\begin{lrbox}{\figstepc}% Store first listing
\begin{lstlisting}[basicstyle=\ttfamily\scriptsize]
// ... omitted ...
    share (x <- local) {
        let filtered = fMP(x, mid, radius, metric)
        min(local, foldMin(discard(), filtered))
    }.get(2)
// ... omitted ...
\end{lstlisting}
\end{lrbox}
\newsavebox{\figstepd}
\begin{lrbox}{\figstepd}% Store first listing
\begin{lstlisting}[basicstyle=\ttfamily\scriptsize]
// ... omitted ...
    share (x <- local) {
        minHoodLoc(fMP(x, mid, radius, metric), local)
    }.get(2)
// ... omitted ...
\end{lstlisting}
\end{lrbox}

\subfloat[First step of the conversion to a minimising-\lstinline|share| form:
    \lstinline|received| has been renamed to \lstinline|x|;
    functions \lstinline|updateDistance|, \lstinline|fR|, \lstinline|fMP|, and \lstinline|minHoodLoc| have been defined.\label{fig:step1}]{\begin{minipage}{\textwidth}
\usebox{\figstepa}\end{minipage}}

\subfloat[The update to the distance field and the call to
\texttt{\fmpName{}} have been replaced by \lstinline|fMP|.\label{fig:step2}]{\begin{minipage}{\textwidth}\usebox{\figstepb}\end{minipage}}

\subfloat[Replace \lstinline|areaCenter| with its definition (selection of the second element in the tuple).\label{fig:step3}]{\begin{minipage}{\textwidth}\usebox{\figstepc}\end{minipage}}

\subfloat[Replace the \lstinline|share| body with a call to \lstinline|minHoodLoc|.\label{fig:step4}]{\begin{minipage}{\textwidth}\usebox{\figstepd}\end{minipage}}

\caption{Syntactic steps for passing from \Cref{fig:algorithm} to \Cref{code:protelis:forproof}.}
\end{figure}

\begin{figure}
\begin{lstlisting}
def updateDistance(x, metric) {
  x.set(1, x.get(1) + metric())
  x
}
def fR(x, prev) = x // raising function
def fMP(x, localId, radius, metric) = // monotonic progr.
  %\fmpName{}%(updateDistance(x, metric), localId, radius)
def minHoodLoc(e, loc) = // minimum of loc and e's values
  min(loc, foldMin(discard(), e))
def %\algname{}%(mid, radius, symmetryBreaker, metric) {
  let local = Sample(-symmetryBreaker, 0, mid)
  share (x <- local) {
    fR(minHoodLoc(fMP(x, mid, radius, metric), local), x)
  }.get(2)
}
\end{lstlisting}
  \caption{
    Protelis code from \Cref{fig:algorithm} rewritten to conform to the \emph{minimising share} self-stabilising pattern as per the Proof of \autoref{proof:self-stab}.
  }
  \label{code:protelis:forproof}
\end{figure}

\begin{thm}[\algname{} is self-stabilising\label{proof:self-stab}]
\end{thm}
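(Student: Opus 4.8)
The proof will proceed by reducing \algname{} to a known self-stabilising schema rather than reasoning about convergence from scratch. Concretely, the plan is to: (i) justify that the Protelis program of \Cref{fig:algorithm} is semantically equivalent to the rewritten program of \Cref{code:protelis:forproof}, by tracking the purely syntactic refactoring steps of \Cref{fig:step1}--\Cref{fig:step4} (renaming \texttt{received} to \texttt{x}, extracting the helper functions \texttt{updateDistance}, \texttt{fR}, \texttt{fMP}, \texttt{minHoodLoc}, inlining \texttt{areaCenter} as \texttt{.get(2)}, and folding the \texttt{share} body into a single \texttt{minHoodLoc} call under a raising function \texttt{fR}); and (ii) observe that the resulting expression matches the \emph{minimising-\texttt{share}} self-stabilising fragment of \cite{TOMACS2018,DBLP:journals/lmcs/AudritoBDPV20}, so that self-stabilisation follows from the composition/closure result recalled just before the theorem statement (every closed expression built from self-stabilising fragments is self-stabilising, by structural induction on syntax).

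\textbf{Obligations to discharge.} Applying the minimising-\texttt{share} result requires checking, for the body \texttt{fR(minHoodLoc(fMP(x, mid, radius, metric), local), x)}, the side conditions of that pattern. First, the \emph{value domain} must be an appropriate partial order with a maximal (``top'') element: here candidacies are triplets \texttt{Sample(symmetryBreaker, distance, leaderId)} compared lexicographically, with \texttt{discard() = Sample($\infty$,$\infty$,$\infty$)} as top; I would note this is a total order (ties on strength broken by distance, then by device id), which makes \texttt{min} and \texttt{foldMin} well-defined and deterministic. Second, \texttt{fMP} must be a \emph{raising}/non-decreasing map in the sense required by the pattern: it only ever returns its argument unchanged or \texttt{discard()} (when the candidacy points to the device itself or its accumulated distance reaches \texttt{radius}), and the distance component is updated by \emph{adding} the non-negative \texttt{metric()} along each hop; monotonicity of the accumulation plus the upward-closed discard test are exactly what the schema asks for. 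Third, the external inputs must be stabilising fields: \texttt{mid}, \texttt{radius}, \texttt{symmetryBreaker} are constants on a static environment, \texttt{metric} is derived from (stabilising) input and output fields, and \texttt{local = Sample(-symmetryBreaker, 0, mid)} is therefore stabilising; finally, the top-level \texttt{.get(2)} projection is a stateless pointwise function, hence preserves stabilisation of the underlying \texttt{share} field. Putting these together with the inductive closure result yields the claim.

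\textbf{Main obstacle.} The delicate point is the \emph{termination/convergence} condition of the minimising \texttt{share}: one must rule out that the additively-accumulated \texttt{distance} component keeps the minimisation chasing ever-changing values, i.e.\ that the set of candidacy values reachable at each device from a stabilising input is finite and the induced ordering has no problematic infinite descending chains. The key observations to formalise are that the network is finite, there are finitely many potential leaders (each device contributes at most one \texttt{local} candidacy), the per-hop \texttt{metric()} stabilises to a fixed non-negative value, and the \texttt{radius = $\samplingErr/2$} cap in \texttt{fMP} discards any candidacy whose accumulated distance reaches the threshold---so only finitely many non-top candidacies survive at any device, and the \texttt{min}-computation over that finite poset stabilises. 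I expect this finiteness/boundedness argument (together with carefully matching it to the precise hypotheses of the fragment lemma of \cite{TOMACS2018}) to be the part requiring the most care; the remaining monotonicity and stateless-projection checks are routine.
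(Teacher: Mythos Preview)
Your proposal is correct and follows essentially the same route as the paper: both rewrite the Protelis code of \Cref{fig:algorithm} into the form of \Cref{code:protelis:forproof} via the syntactic steps of \Cref{fig:step1}--\Cref{fig:step4}, then invoke the minimising-\texttt{share} self-stabilisation result of \cite{TOMACS2018,DBLP:journals/lmcs/AudritoBDPV20}. The paper's proof is in fact terser than yours: it checks only that \texttt{fR} (identity on the first argument) is a valid raising function and that \texttt{fMP} is monotonic progressive with no \texttt{share}-bound variable in its extra arguments, delegating the convergence/well-foundedness argument you flag as the ``main obstacle'' entirely to the cited theorems rather than rehearsing it; also note that in the paper's terminology ``raising'' is the property of \texttt{fR}, not of \texttt{fMP} (which is ``monotonic progressive''), so keep those labels straight when you write it up.
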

\begin{proof}
\newcommand{\fR}{\ensuremath{\mathit{fR}}}
\newcommand{\share}{\ensuremath{\mathit{share}}}
\newcommand{\fMP}{\ensuremath{\mathit{fMP}}}
\newcommand{\minHoodLoc}{\ensuremath{\mathit{minHoodLoc}}}
\addrev{
In \cite{DBLP:journals/lmcs/AudritoBDPV20},
 it is proved that an expression of the following form (called a \emph{minimising $\share$} pattern) is self-stabilising:
 }
\begin{lstlisting}
share(x <- e) { fR(minHoodLoc(fMP(x), e), x) }
\end{lstlisting}
\addrev{
where (see Section~5.2 in \cite{TOMACS2018} and, especially, Section~4.7 and Figure~7 in \cite{DBLP:journals/lmcs/AudritoBDPV20}):
\begin{itemize}
  \item \lstinline|fR(x, prev)| is a ``raising function'',
  with respect to partial orders,
  of  \lstinline|x| and \lstinline|prev|
  (the value of \lstinline|x| at the previous round);
  \item \lstinline|fMP| is a monotonic progressive function of \lstinline|x|,
  which can take further arguments as far as they are self-stabilising expressions
  that do not contain the \lstinline|share|-bounded variable \lstinline|x|; and 
  \item \lstinline|minHoodLoc(e, loc)| selects the minimum among the neighbours' values of expression \lstinline|e| and the current device's local value \lstinline|loc|.
\end{itemize}
\revA{Now, the block of Protelis code (\Cref{fig:algorithm})
in Lines~\ref{code:protelis:fun:init}--\ref{code:protelis:fun:end}
can be rewritten as shown in \Cref{code:protelis:forproof}
that
conforms to the minimising \texttt{share} pattern, where:
\begin{itemize}
  \item the raising function \lstinline|fR| is an identity on the first parameter,
  which is a trivially valid raising function
  (see Example 5.5 in \cite{TOMACS2018});
  \item function \texttt{\fmpName{}} is a valid monotonic progressive function \lstinline|fMP| of \lstinline|x|,
  since it transforms neighbours' candidacies supporting the current device
  and those at a distance farther than \lstinline|radius|
  to the highest value for the data type (cf. \lstinline|discard()|), leaving the others unaltered;
  none of the provided additional arguments
  (\lstinline|id|, \lstinline|radius|, and \lstinline|metric|)
  contains the \lstinline|share|-bounded variable \lstinline|x|.
\end{itemize}
More gradually, the transformation can be obtained by:
\begin{enumerate}
    \item renaming \lstinline|received| to \lstinline|x|, and defining functions \lstinline|fR|, \lstinline|fMP|, and \lstinline|minHoodLoc| (\Cref{fig:step1});
    \item realising that \lstinline|fMP| is a valid replacement for the combination of distance field update and call to \texttt{\fmpName{}} and replacing accordingly (\Cref{fig:step2});
    \item replacing \lstinline|areaCenter| with its definition (\Cref{fig:step3});
    \item replacing the \lstinline|share| body with \lstinline|minHoodLoc|, as they perform the same operation (\Cref{fig:step4});
    \item adding a call to \lstinline|fR|, which is an identity function, does not alter the behaviour of the code and leads directly to the code in \Cref{code:protelis:forproof}.
\end{enumerate}
}

The other elements in the program
  are only operations on local data 
  which 
  are also self-stabilising expressions.
Since the \algname{} function consists exclusively of self-stabilising expressions,
 it is in turn self-stabilising~\cite{TOMACS2018,DBLP:journals/lmcs/AudritoBDPV20}.
}
\end{proof}

\addrev{
\begin{thm}[\algname{} is an effective sampling operator\label{thm:effective-sampling}]
\end{thm}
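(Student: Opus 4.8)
The plan is to discharge the four requirements packed into \autoref{def:optsampling}, under the two provisos recalled just before the theorem (the parameter \lstinline|radius| is set to $\samplingErr/2$, and the supplied \lstinline|metric| is a symmetric, additive local sampling distance): \textbf{(1)} \algname{} is a self-stabilising operator; \textbf{(2)} on any stabilising environment $\aEventS$ and stabilising input $f_i$ it yields a \emph{regional partition field} $f_o$ (\autoref{def:es-partition}); \textbf{(3)} $f_o$ samples $f_i$ within error $\samplingErr$ (\autoref{def:error}); and \textbf{(4)} $f_o$ is locally optimal with efficiency $k=0.5$ (\autoref{def:optimal}). Requirement \textbf{(1)} is exactly \autoref{proof:self-stab}, so it hands us, for any such $\aEventS$ and $f_i$, a \emph{stable} output $f_o$, and I would carry out all the remaining reasoning on the stable snapshots $f_i^s,f_o^s$. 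The key simplification is that, as stipulated by the algorithm (cf.\ \Cref{ssec:algo} and \Cref{fig:algoexample-regions}), $\fmetric(\trestrict{f_i^s}{\eventS'},\trestrict{f_o^s}{\eventS'})$ is just the maximum \pse{} over pairs of devices in $\eventS'$, i.e.\ the maximum over $d_1,d_2\in\eventS'$ of $\mathrm{dist}_m^{\eventS'}(d_1,d_2)$, the shortest-path length under the induced metric $m$ using only devices in $\eventS'$; so each of the three remaining goals becomes a statement about such shortest-path distances inside device sets.

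\emph{Requirements (2) and (3).} In the stable state every device holds one triplet, whose $\mathit{leaderId}$ field is its region label; since there are finitely many device identifiers, $\Img(f_o^s)$ is finite, giving the finiteness clause. For eventual contiguity I would use that the \lstinline|share|/\lstinline|min| body with additive distance update behaves as a self-stabilised gradient (\autoref{example:gradient}): a non-leader device of region $R_\ell$ selects, over its own $\mathit{local}$ triplet, the candidacy \emph{forwarded} by some neighbour whose output triplet already has $\mathit{leaderId}=\ell$, hence that neighbour is itself in $R_\ell$; tracing these ``parent'' links back (equivalently, observing that $\ell$'s candidacy can only have reached the device along a path of $R_\ell$-devices) produces a path of $R_\ell$-devices ending at $\ell$, which after stabilisation lifts to a causal chain of events on which $f_o$ is constant. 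For the error bound, since \lstinline|radius| $=\samplingErr/2$ and $m$ is additive, \texttt{\fmpName{}} suppresses any candidacy whose accumulated distance has reached $\samplingErr/2$; hence, inducting along the gradient of $R_\ell$ from its leader (distance $0$), every $d\in R_\ell$ carries a distance value $<\samplingErr/2$ that is the length of an $R_\ell$-internal path from $\ell$ to $d$, so $\mathrm{dist}_m^{R_\ell}(\ell,d)<\samplingErr/2$. By the triangle inequality (symmetry of $m$), $\mathrm{dist}_m^{R_\ell}(d_1,d_2)\le\mathrm{dist}_m^{R_\ell}(d_1,\ell)+\mathrm{dist}_m^{R_\ell}(\ell,d_2)<\samplingErr$ for all $d_1,d_2\in R_\ell$; maximising over pairs gives $\fmetric(\trestrict{f_i^s}{\eventS'},\trestrict{f_o^s}{\eventS'})<\samplingErr$ for every region $\eventS'$, which is \textbf{(3)}.

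\emph{Requirement (4).} Take two contiguous regions $\eventS'=R_{\ell'}$, $\eventS''=R_{\ell''}$ and call $\ell'$ the \emph{dominant} leader, i.e.\ the one whose candidacy wins the competition wherever both candidacies co-occur; this is well defined once \lstinline|symmetryBreaker| (with device identifiers as ultimate tie-breaker) induces a strict total order on devices, so the winner does not depend on the evaluating device. Since no device of $R_{\ell''}$ follows $\ell'$, the candidacy of $\ell'$ reaches \emph{no} device of $R_{\ell''}$; by the propagation mechanism, every path starting at $\ell'$ that lies in $R_{\ell'}$ except for its final out-going edge must have length $\ge\samplingErr/2$ (otherwise $\ell'$'s candidacy would survive past that edge and, being dominant, be adopted there). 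Now pick any path inside $\eventS'\cup\eventS''$ from $\ell'$ to $\ell''$; walking along it, let $e\to e'$ be its first edge leaving $R_{\ell'}$. All vertices up to $e$ lie in $R_{\ell'}$, so the prefix up to $e'$ is a path of exactly that shape, hence of length $\ge\samplingErr/2$, and therefore so is the whole path. Consequently $\mathrm{dist}_m^{\eventS'\cup\eventS''}(\ell',\ell'')\ge\samplingErr/2$, whence $\fmetric(\trestrict{f_i^s}{\eventS'\cup\eventS''},\trestrict{f_o^s}{\eventS'\cup\eventS''})\ge 0.5\,\samplingErr=k\samplingErr$. Putting \textbf{(1)}--\textbf{(4)} together yields the theorem.

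The step I expect to be delicate is \textbf{(4)}, because of the interplay between the discard threshold of \texttt{\fmpName{}} and the leader-competition policy. First, the ``dominant leader'' must genuinely be position-independent: one has to check that breaking ties in \lstinline|symmetryBreaker| by distance and then device identifier does not reintroduce a device-dependent winner (or, more simply, assume \lstinline|symmetryBreaker| itself injective), since otherwise two nearby equal-strength leaders could each retain a tiny region and violate the efficiency bound. Second, the error-distance of the merged region is defined through \emph{shortest} paths in $\eventS'\cup\eventS''$, not through the particular path along which a candidacy happened to be discarded; so the crossing-point argument must establish that \emph{every} $\eventS'\cup\eventS''$-path from the dominant leader into the other region accumulates at least $\samplingErr/2$ before it first leaves the dominant region---this is where the exact ``$\ge$'' form of the discard condition and the additivity of $m$ have to be used carefully. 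By contrast, \textbf{(2)}, \textbf{(3)} and the use of \textbf{(1)} are mostly bookkeeping on top of \autoref{proof:self-stab} and the gradient of \autoref{example:gradient}.
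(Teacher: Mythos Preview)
Your proposal is correct and follows essentially the same route as the paper's own proof: self-stabilisation from \autoref{proof:self-stab}, the triangle inequality through the leader for the $\samplingErr$ bound, and the dominant-leader argument for efficiency $k=0.5$. Your version is in fact more thorough than the paper's, which does not explicitly verify the regional-partition-field clauses (your requirement \textbf{(2)}) and uses a terser ``if the \pse{} were below $\samplingErr/2$ then $\deviceId''$ would have followed $\deviceId'$'' in place of your careful crossing-edge argument for \textbf{(4)}.
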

\begin{proof}
To prove that our algorithm 
 represents an effective sampling operator $\Program_\samplingErr = \text{\algname{}}$,
 we have to prove that it yields, on any stabilising input field $f_i$, 
 an output stabilising field $f_o$ of locally optimal regional partitions.
As per \autoref{proof:self-stab}, $\Program_\samplingErr$ is self-stabilising, so on a stable input it will yield a stable output:  let $f_o^s$ be its snapshot, and $f_i^s$ the corresponding input.

\revT{On the one hand, accuracy is guaranteed since $\algname$ ensures that no device has \pse{} greater than $\samplingErr/2$ from the leader: for the triangular inequality property of metric spaces ($m(a,b)\leq m(a,c) + m(c,b)$) this ensures that stable snapshot distance $\fmetric$ does not overcome $\samplingErr$.

On the other hand, for local optimality under error $\samplingErr$ and distance $\fmetric$,
 there must not exist two contiguous regions 
 $\eventS',\eventS''\in\Regions(f_o^s)$,
 with samplers $\deviceId'$ and $\deviceId''$, where 
$\fmetric(\trestrict{f_i^s}{\eventS'\cup\eventS''},\trestrict{f_o^s}{\eventS'\cup\eventS''}) \leq \samplingErr/2$.
Suppose two such regions exists, and let $\deviceId'$ be stronger than $\deviceId''$ (i.e., it has higher symmetry breaker). 
Then, the \pse{} between $\deviceId'$ and $\deviceId''$ is necessarily higher than $\samplingErr/2$, because of the steps \ref{algostep-discard-candidacy} and \ref{algostep-select-candidacy}  of the algorithm: in fact, if it were smaller than $\samplingErr/2$ then $\deviceId''$ would have followed $\deviceId'$, and would not have been a leader (cf. \Cref{fig:algoexample-regions-forproof}).}
\end{proof}
}

\begin{figure}[h]
\begin{tikzpicture}

\newcommand{\cRegion}[3]{
\begin{scope}[on background layer]
    \filldraw[draw=#1,fill=#1, line join=round, line width=1cm,fill opacity=\colorRegionOpacity,draw opacity=\colorRegionOpacity] plot coordinates{
	(n#2.north west) (n#3.north east)
	(n#3.south east) (n#2.south west)
}--cycle;
\end{scope}
}

% create nodes
\node[dev] (n0) [] {};
\foreach \dn [evaluate=\dn as \prev using \dn-1] in {1,...,10}
  \node[dev] (n\dn) [right=of n\prev] {};

% create arrows
\foreach \dn %[evaluate=\dn as \succ using \dn+1] 
  in {0,...,9}
    \pgfmathsetmacro\succ{\dn + 1}
	\draw[<->] (n\dn) -- ($(n\succ)+(-0.3,0)$);

% create regions
\cRegion{\colorRegionA}{0}{4}
\cRegion{\colorRegionB}{5}{9}
\cRegion{\colorRegionC}{10}{10}

% create leaders
\foreach \dn in {0,5,10}
  \node[dev,fill=black] (over\dn) at (n\dn) {};

\draw[<->,red] ($(n0)+(0,-0.3)$) -- ($(n4)+(0,-0.3)$) node [midway,yshift=-0.3cm] {$\eta/2$};
\draw[<->,red] ($(n5)+(0,-0.3)$) -- ($(n9)+(0,-0.3)$) node [midway,yshift=-0.3cm] {$\eta/2$};

\draw[<->,red] ($(n10)+(-0.3,-0.3)$) -- ($(n10)+(+0.3,-0.3)$) node [midway,yshift=-0.3cm] {$<< \eta$};

\end{tikzpicture}

\begin{tikzpicture}[node distance=0.8cm]

\newcommand{\cRegion}[3]{
\begin{scope}[on background layer]
    \filldraw[draw=#1,fill=#1, line join=round, line width=1cm,fill opacity=\colorRegionOpacity,draw opacity=\colorRegionOpacity] plot coordinates{
	(n#2.north west) (n#3.north east)
	(n#3.south east) (n#2.south west)
}--cycle;
\end{scope}
}

% create nodes
\node[dev] (n0) [] {};
\foreach \dn [evaluate=\dn as \prev using \dn-1] in {1,...,12}
  \node[dev] (n\dn) [right=of n\prev] {};

% create arrows
\foreach \dn %[evaluate=\dn as \succ using \dn+1] 
  in {0,...,11}
    \pgfmathsetmacro\succ{\dn + 1}
	\draw[<->] (n\dn) -- ($(n\succ)+(-0.3,0)$);

% create regions
\cRegion{\colorRegionA}{0}{2}
\cRegion{\colorRegionB}{3}{5}
\cRegion{\colorRegionC}{6}{6}
\cRegion{\colorRegionD}{7}{9}
\cRegion{\colorRegionE}{10}{12}

% create leaders
\foreach \dn in {0,3,6,9,12}
  \node[dev,fill=black] (over\dn) at (n\dn) {};

%% show errors
\draw[<->,red] ($(n0)+(0,-0.3)$) -- ($(n2)+(0,-0.3)$) node [midway,yshift=-0.3cm] {$\eta/2$};
\draw[<->,red] ($(n3)+(0,-0.3)$) -- ($(n5)+(0,-0.3)$) node [midway,yshift=-0.3cm] {$\eta/2$};
\draw[<->,red] ($(n12)+(0,-0.3)$) -- ($(n10)+(0,-0.3)$) node [midway,yshift=-0.3cm] {$\eta/2$};
\draw[<->,red] ($(n9)+(0,-0.3)$) -- ($(n7)+(0,-0.3)$) node [midway,yshift=-0.3cm] {$\eta/2$};
\draw[<->,red] ($(n6)+(-0.3,-0.3)$) -- ($(n6)+(+0.3,-0.3)$) node [midway,yshift=-0.3cm] {$<< \eta$};

\end{tikzpicture}

\caption{\revA{
Examples of regional partitionings with efficiency $0.5$. Notation: black dots denote the leaders; the coloured areas denote regions; and the red extension lines are used to denote the error-distances. Note how the union of the green singleton region (associated to the weakest leader) with its neighbouring regions would make the \ssed{} of the latter exceed $\eta/2$---if that would not be the case, then the former region would not have existed in the first place (cf. \autoref{thm:effective-sampling}).}
}
\label{fig:algoexample-regions-forproof}
\end{figure}

\revA{
\subsection{Cost analysis}
\label{ssec:cost-analysis}

Executing one cycle of the proposed algorithm requires operating over the information received from all the neighbours,
which is, of course, proportional to their number.
Operations within the \lstinline|share| block
(Lines \ref{code:protelis:distance}--\ref{code:protelis:min} in \Cref{fig:algorithm})
are, indeed, operations on fields:
by the aggregate computing semantics, they are evaluated for each neighbour.
Thus, computationally, the cost of the algorithm is proportional to neighbourhood size: 
larger neighbourhoods require more effort.

From the point of view of message size,
the payload of the algorithm has two components:
the data type used to represent the sample, and, potentially,
additional data that needs to be shared to compute the result of the \lstinline|metric| function.
The former depends on the actual types used for \lstinline|symmetryBreaker| and \lstinline|leaderId|,
the latter on the type of distance returned by \lstinline|metric|.

For example, assuming a classic TCP/IPv6 network and devices with a single network interface,
we could use the local MAC address (6 bytes) as \lstinline|symmetryBreaker|,
the IP address (16 bytes) as \lstinline|leaderId|,
and a 4-byte floating-point number as return type of \lstinline|metric|,
resulting in a payload of 26 bytes per device per round.
Additionally, however,
further data might have to be shared to compute the \lstinline|metric|;
for instance, if devices are equipped with a GPS, they may compute distances by sharing their coordinates
and using the Haversine algorithm.
Assuming a local sensor named \lstinline|gps|, the Protelis code for such implementation of \lstinline|metric| could be:
}
\begin{lstlisting}[language=Protelis]
def distanceWithGps() {
  let latLong = env.get("gps")
  haversine(latLong, nbr(latLong))
}
\end{lstlisting}
\revA{
The \lstinline|nbr| call would incur into an additional network cost, as the local position would be shared with all neighbours.
Assuming a couple of 4-bytes floating-point numbers for latitude and longitude,
that would result into an additional 8 bytes per device per round,
bringing the total up to 34 bytes from the initial 26 bytes.
Notice, however, that this additional cost could get nullified if the \lstinline|metric| function is implemented
in a way that does not require additional data to be shared
(for instance, by using the wireless signal strength as a proxy for the distance, or by using the hop distance).

If the algorithm is implemented in an aggregate computing language,
and no application-specific optimisation is devised,
an identifier for each interaction (\lstinline|share| or \lstinline|nbr| call) is attached to the message,
so the payload would also include the size of one or two identifiers
(again, depending on whether the \lstinline|metric| requires data to be shared).
}

\section{Evaluation}
\label{s:eval}

\addrev{
This section discusses the evaluation of the algorithm proposed in \Cref{s:solution}
against the properties defined in \Cref{s:problem}
by means of simulation.
We first present 
 the evaluation goals (\Cref{ssec:eval-goals}),
 scenarios (\Cref{scenarios}),
 parameters (\Cref{ssec:eval-params}),
 evaluation metrics (\Cref{ssec:eval-metrics}),
 and main implementation details (\Cref{ssec:eval-impl}),
 and finally provide a discussion of the results (\Cref{ssec:eval-results}).
The whole experimental framework has been published as
permanently available artefacts~\cite{zenodo-space-fluid-experiments,zenodo-space-fluid-experiments2} in Zenodo,
with instructions for replicating the results.
}

\subsection{Evaluation Goals}
\label{ssec:eval-goals}

In this section,
we validate the behaviour of the proposed effective aggregate sampling algorithm.
The goals of the evaluation are the following:
\begin{itemize}
	\item \emph{stabilisation}:
    we expect the algorithm to be \emph{self-stabilising}
    (as per \autoref{def:selfstab-operator} and \autoref{proof:self-stab}),
    and thus to behave in a self-stabilising way under different conditions;
	\item \emph{high information (entropy)}:
    we expect the algorithm to split areas with different measurements,
    namely, to dynamically increase the number of regions on a per-need basis
    to minimise the \emph{aggregate sampling error} (as per \autoref{def:error});
	\item \emph{error-controlled upscaling}:
    we expect the algorithm to not abuse of region creation,
    but to keep the minimum number of regions (hence of the largest possible size\revA{---efficiency}) required to maintain accuracy
    (as per \autoref{def:optimal} and \autoref{thm:effective-sampling}),
    intuitively, grouping together devices with similar measurements.
\end{itemize}
Clearly, upscaling and high information density are at odds:
maximum information is achieved by maximising the number of regions,
and thus assigning each device a unique region;
however, doing so would prevent any upscaling.
On the other hand, the maximum possible upscaling 
would be achieved when all devices belong to the same region,
thus minimising information.
We want our regions to change in space ``fluidly'' and opportunistically tracking the situation at hand,
achieving a trade-off between upscaling and amount of information (as per \autoref{def:optsampling}).

\subsection{Scenarios}
\label{scenarios}

We challenge the proposed approach by letting the algorithm operate on synthetic and realistic scenarios.

In the synthetic scenarios, we use different deployments of one thousand devices and different data sources.
We deploy devices into a square arena with different topologies:
\begin{enumerate}[label=\emph{\roman*)}]
	\item \emph{grid (regular grid)}: devices are regularly located in a grid;
	\item \emph{pgrid (perturbed/irregular grid)}: starting from a grid,
	devices' positions are perturbed randomly on both axes;
	\item \emph{uniform}: positions are generated with a uniform random distribution;
	\item \emph{exp (exponential random)}: positions are generated with a uniform random distribution on one axis
	and with an exponential distribution on the other,
	thus challenging device-distribution sensitivity.
\end{enumerate}
In all cases, we avoid network segmentation by forcing each device to communicate \emph{at least} with the eight closest devices.
We simulate the system when sampling the following phenomena:
\begin{enumerate}[label=\emph{\roman*)}]
	\item \emph{Constant}:
	the signal is the same across the space,
	we expect the system to upscale as much as possible;
	\item \emph{Uniform}:
	the signal has maximum entropy,
	each point in space has a random value,
	we thus expect the system to create many small regions;
	\item \emph{Bivariate Gaussian (gauss)}:
	the signal has higher value at the centre of the network,
	and lower towards the borders,
	producing a Gaussian curve whose expected value is located at the centre of the network,
	we expect regions to be smaller where the data changes more quickly;
	\item \emph{Multiple bivariate Gaussian (multi-gauss)}:
	similar to the previous case,
	but the signal value is built by summing three bivariate Gaussian whose expected value is one third
	of the previous Gaussian,
	and whose expected values are located along the diagonal of the network
	(bottom-left corner, centre, top-right corner);
	\item \emph{Dynamic}: the system cycles across the previous states,
	we use this configuration to investigate whether and how the proposed solution adapts to changes in the structure of the signal.
\end{enumerate}

\revA{In the realistic scenario,
we use air quality data from the European Environment Agency~\cite{pm10},
and, specifically, the PM10 data from February 2020 (included) to May 2020 (excluded).
We position the sensor stations in their correct position as reported by the agency,
and assume logical connectivity with close-by stations.
We force each station to communicate with at least the closest stations,
and we ensure that no network segmentation exists by enforcing full network reachability.
This results in a much sparser network than the synthetic ones,
and whose variance in the number of neighbours is much higher:
some stations located in places far from geographical Europe
(such as Réunion and other French overseas departments) have very few connections (possibly, a single one),
while sensors located in dense urban areas can have dozens.
To emulate energy-constrained devices, such as LoRaWAN motes,
we limit the operating frequency of each device to $\nicefrac{1}{1800}$Hz
(namely, one round every half hour on average).
}

\subsection{Parameters}
\label{ssec:eval-params}

The proposed solution can be tuned by three main parameters:
the leader strength, the error tolerance, and the distance metric.
In the experiments, we fix the error tolerance to a constant value,
while we choose among three different alternatives for the leader strength and the distance metric.

\revA{For the leader strength parameter
we use the local concentration of $PM_{10}$ in the realistic experiment, 
while in the synthetic one we consider:}
\begin{enumerate}[label=\emph{\roman*)}]
	\item \emph{value}: the local value of the tracked signal $s$;
	\item \emph{mean}: the neighbourhood-mean value of the tracked signal $s$,
    assuming $N$ to be the set of neighbours (including the local device),
    and $s_i$ to be the value of the tracked signal at device $i\in N$,
    the value is computed as:
    
    $$M=\frac{\sum\limits_{i \in N}{s_i}}{|N|}$$
    
	\item \emph{variance}: the neighbourhood-variance of the tracked signal $s$,
    assuming $M_i$ to be the neighbourhood-mean computed at device $i\in N$,
    the value is computed as:
    $$\frac{\sum\limits_{i \in N}(M_i-s_i)^2}{|N|}$$
\end{enumerate}

\newcommand{\pmten}{$PM_{10}$}

\begin{figure}
    \begin{center}
    \captionsetup[subfigure]{width=0.3\textwidth}
    \subfloat[\pmten{} concentrations (darker areas have higher \pmten).]{
        \includegraphics[width=0.32\linewidth]{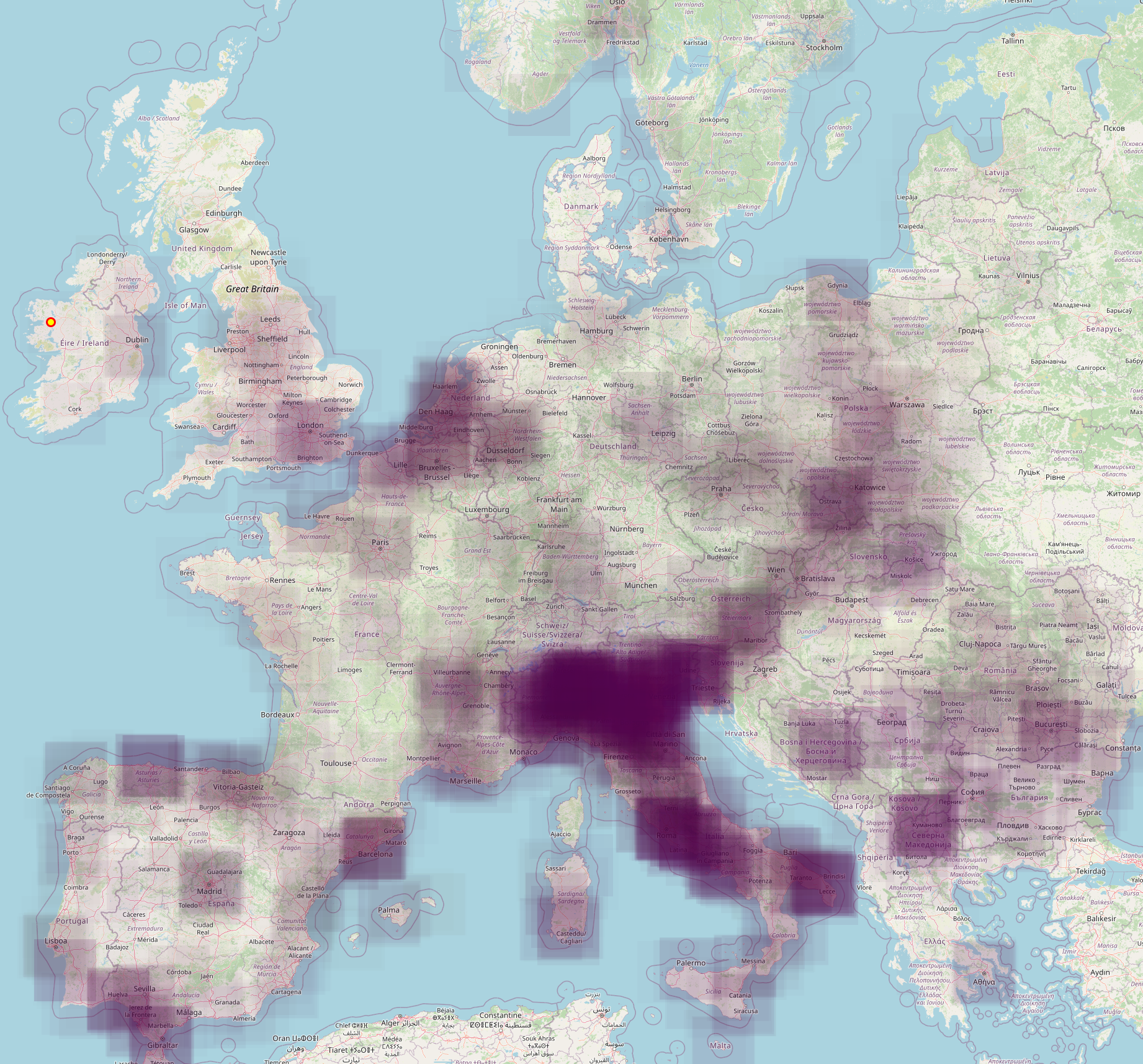}
     }
     \subfloat[Partitions computed using $dist$.]{
        \includegraphics[width=0.32\linewidth]{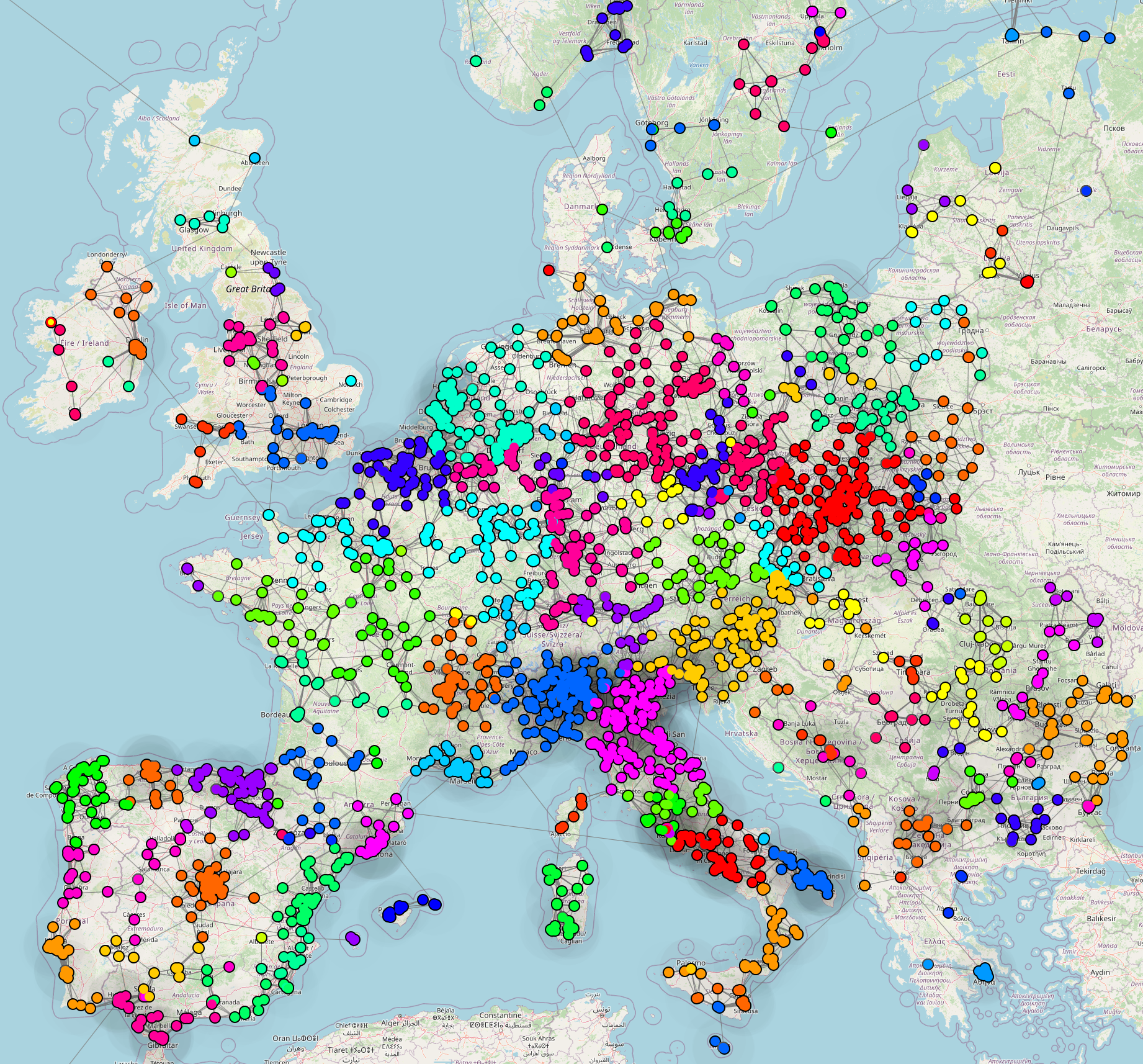}
     }
     \subfloat[Partitions computed using $dist_B$.]{
        \includegraphics[width=0.32\linewidth]{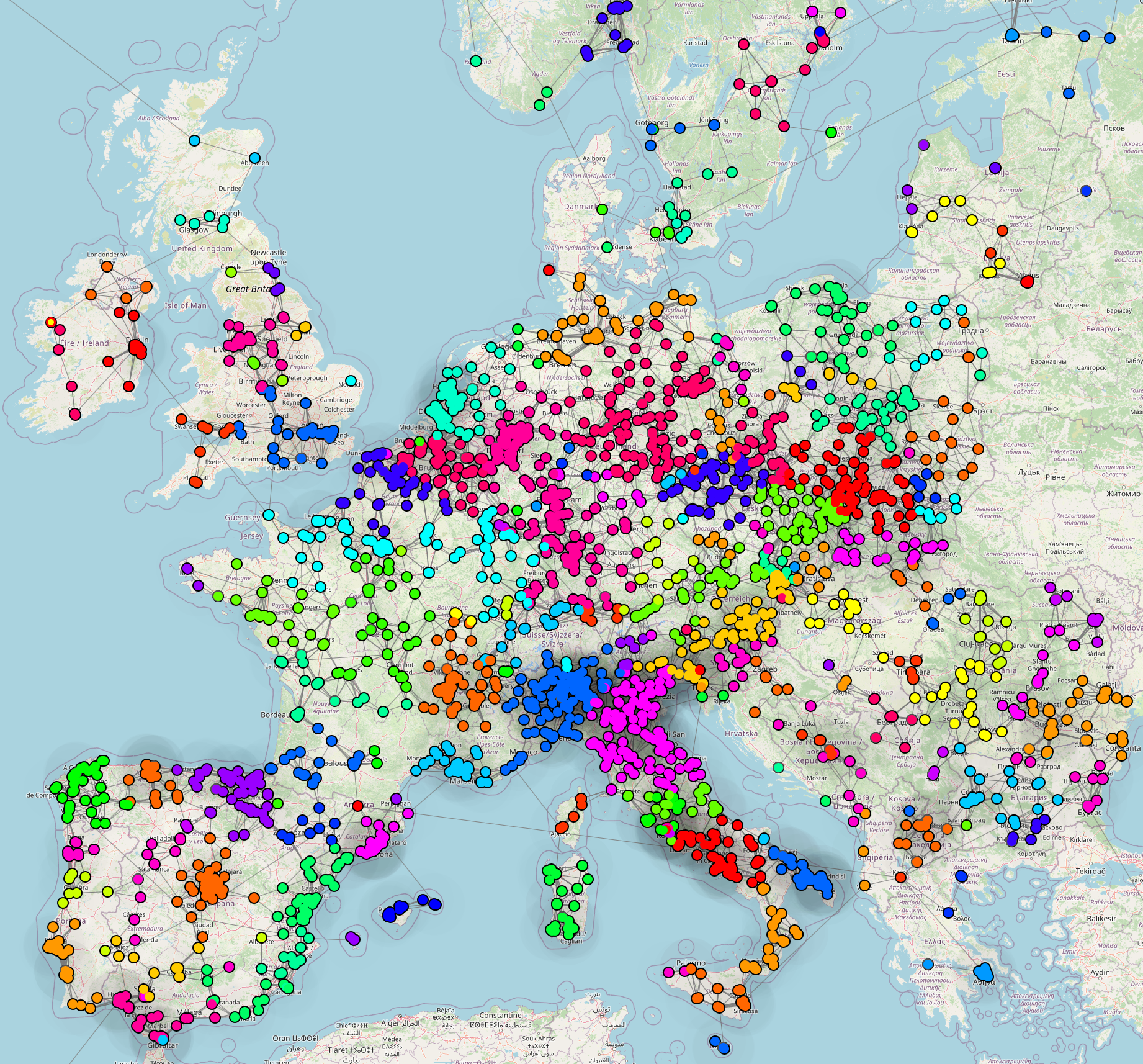}
     }\\
     \subfloat[Partitions computed using $\sigma(PM_{10})$.]{     
        \includegraphics[width=0.32\linewidth]{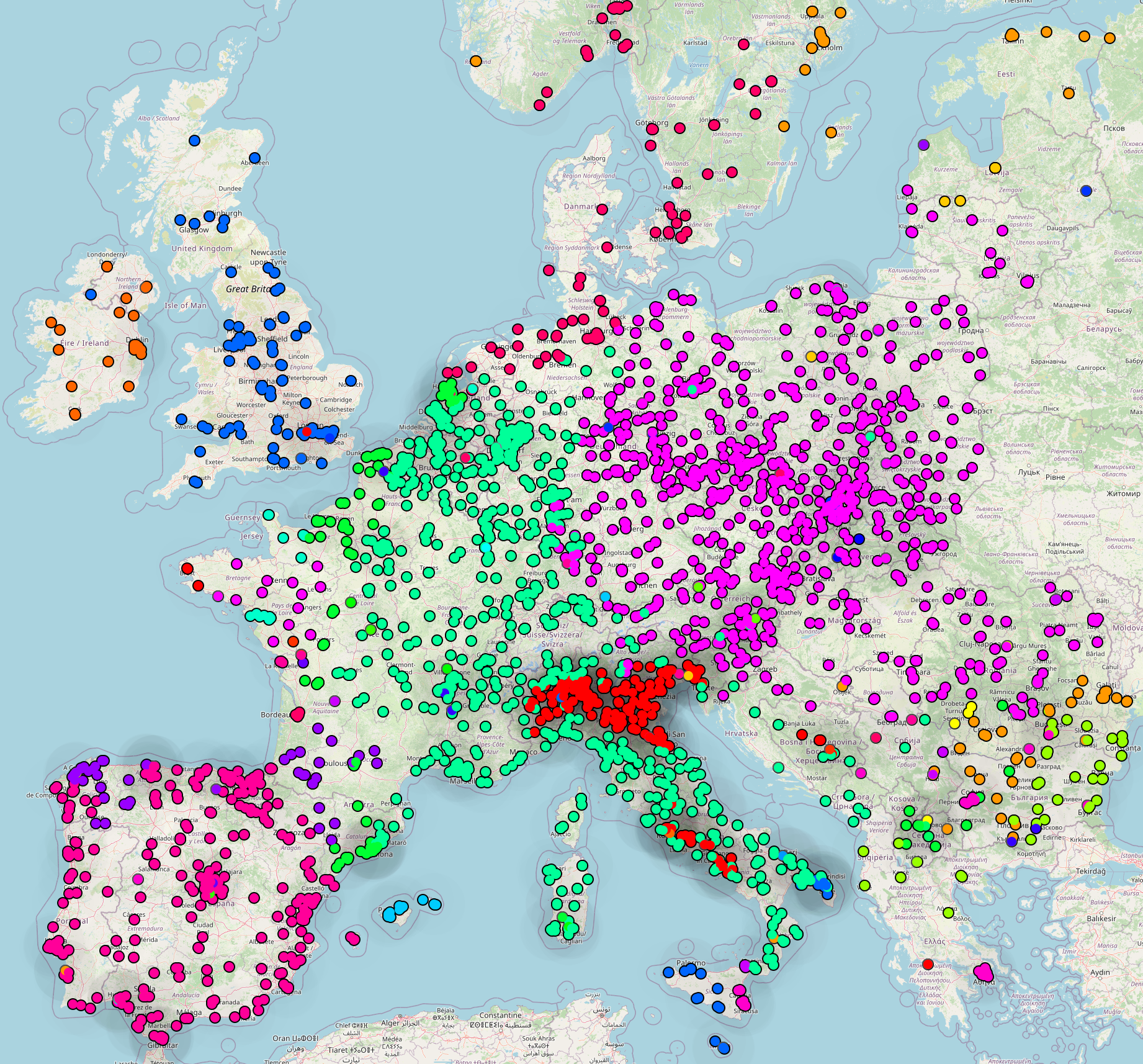}
     }
     \subfloat[Partitions computed using $\sigma(PM_{10})_B$.]{ 
        \includegraphics[width=0.32\linewidth]{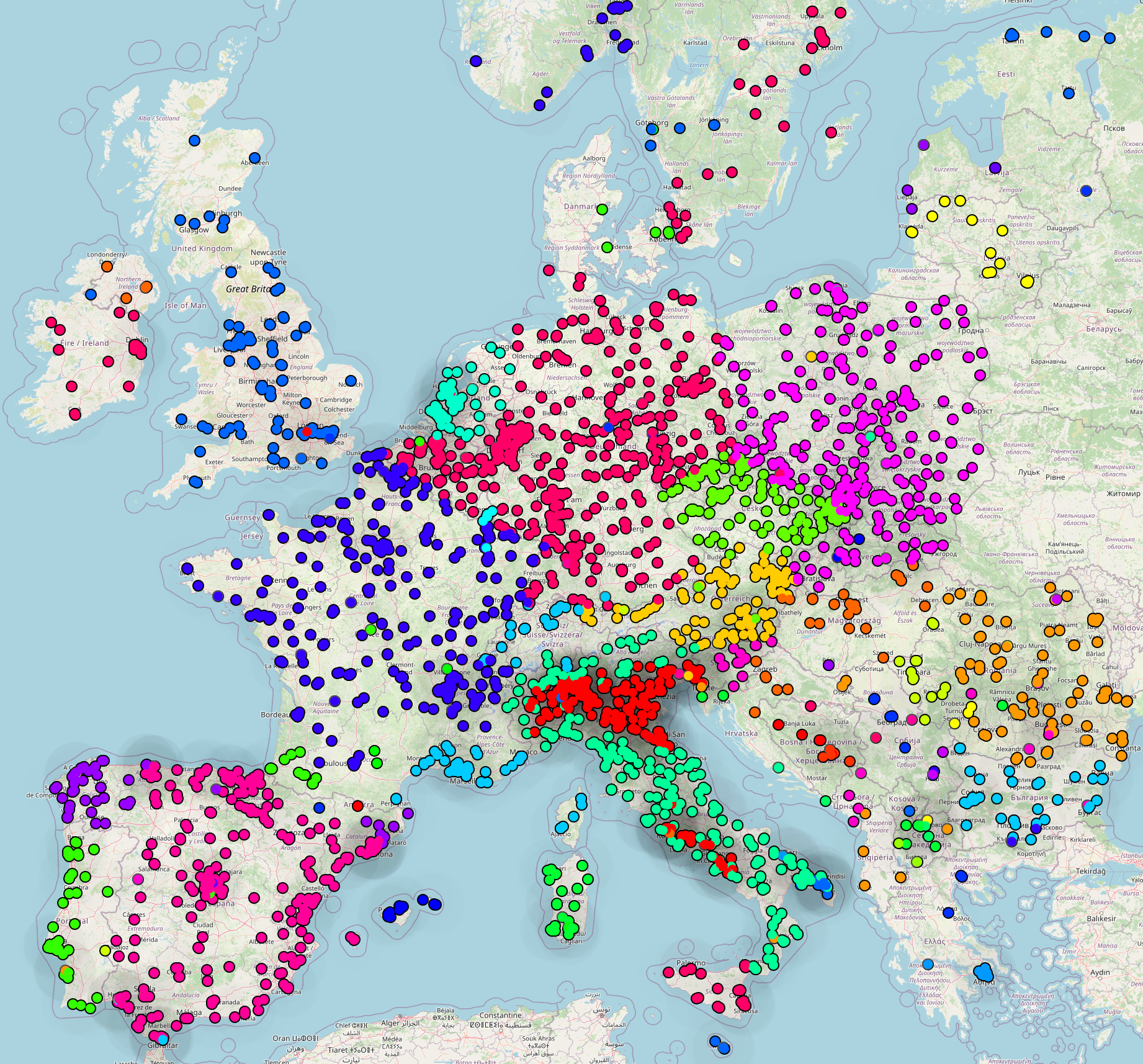}
     }
        \caption{
            \revA{Snapshots produced by the simulator at the same moment.
%            %
            Notice that the latter metrics ($\sigma(PM_{10})$ and $\sigma(PM_{10})_B$) are capable of capturing and managing contiguous areas with similar levels of air quality,
            tracking the underlying spatial structure of the signal.}
        }
        \label{fig:snapshots}
    \end{center}
\end{figure}

For the metric, in the synthetic scenario, we consider:
\begin{enumerate}[label=\emph{\roman*)}]
	\item \emph{distance}: the spatial distance is used as distance metric;
	\item \emph{diff}: assuming that $s_i$ is the value of the tracked signal at device $i$,
    the distance between two neighbouring devices $a$ and $b$ is measured as:
    $$e_{ab}=e_{ba}=min(\epsilon, |s_a - s_b|)$$
    where $\epsilon \in \mathbb{R_+}$, $\epsilon = 0$ iff $a=b$, $0 < \epsilon \ll 1$ otherwise,
    we bound the minimum value to preserve the triangle inequality;
	\item \emph{mix}: we mix the two previous metrics
	so that both the error and the physical distance affect in the distance definition;
    i.e., assuming $\overline{ab}$ to be the spatial distance between devices $a$ and $b$,
    we measure the mix metric as:
    $$\overline{ab}\cdot{}e_{ab}$$
\end{enumerate}
\revA{For the realistic scenarios we use instead:
\begin{enumerate}[label=\emph{\roman*)}]
    \item $dist$: the spatial distance is used as distance metric;
    \item $dist_B$: same as $dist$, but country borders are considered as barriers;
    \item $\sigma(PM_{10})$: we weight the distance between neighbouring devices by a factor that depends on
    the Air Quality Index (AQI) value at the device location---devices with more different AQIs are considered more distant;
    \item $\sigma(PM_{10})_B$: same as $\sigma(PM_{10})$, but country borders are considered as barriers.
\end{enumerate}
The idea is to challenge the algorithm by looking at how it behaves when operating on a network with a sparser
and more heterogeneous structure, as well as to investigate the impact of arbitrary limits
and non-linearities (e.g., the country borders) unrelated with the underlying signal included in the expansion metrics.}
\revA{\Cref{fig:snapshots} shows a snapshot in time of the partitioning generated by the simulator in the realistic scenario
for each of the aforementioned metrics.}

\begin{figure}[!b]
	\centering
	\includegraphics[width=\imgfactor\columnwidth]{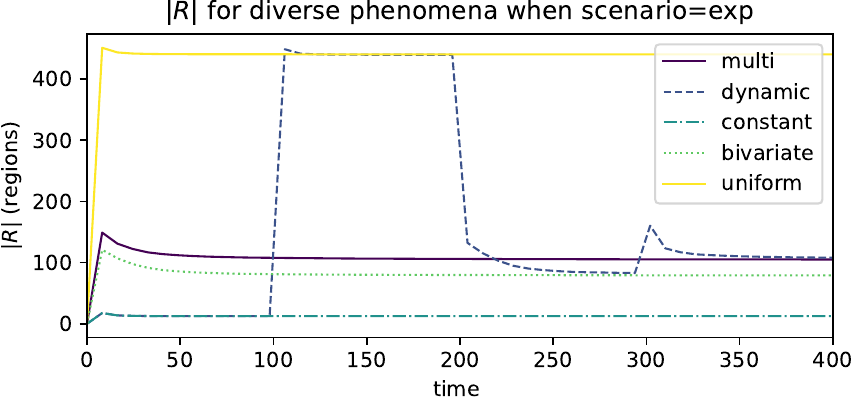}
	\includegraphics[width=\imgfactor\columnwidth]{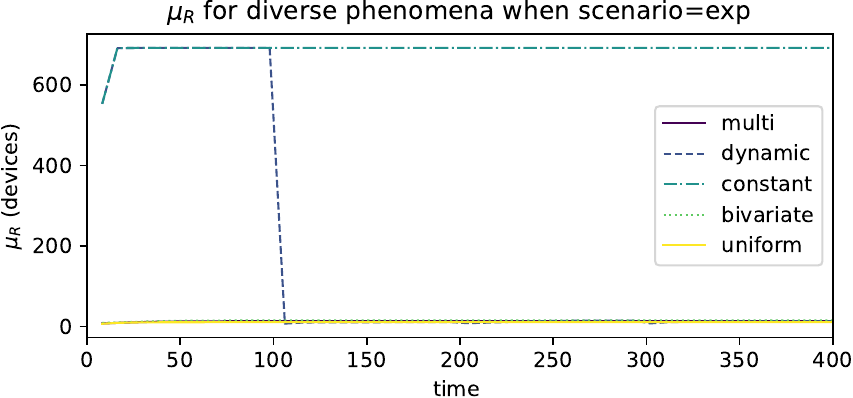}
	\includegraphics[width=\imgfactor\columnwidth]{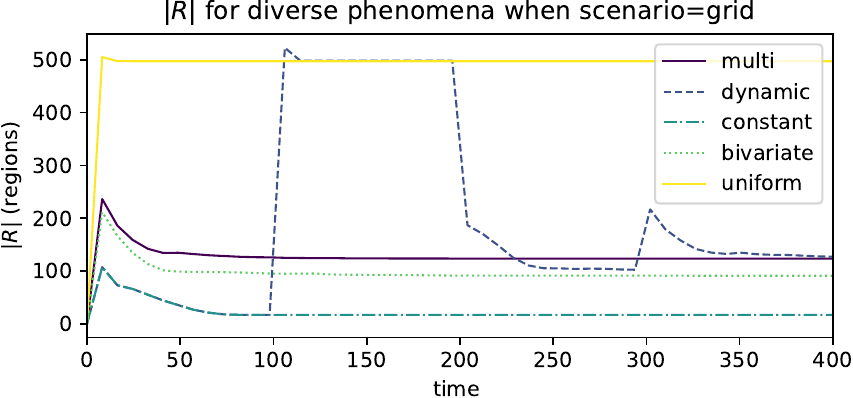}
	\includegraphics[width=\imgfactor\columnwidth]{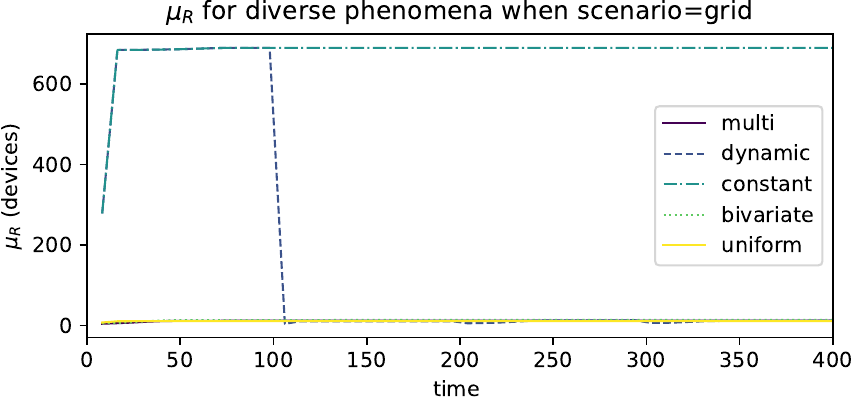}
	\includegraphics[width=\imgfactor\columnwidth]{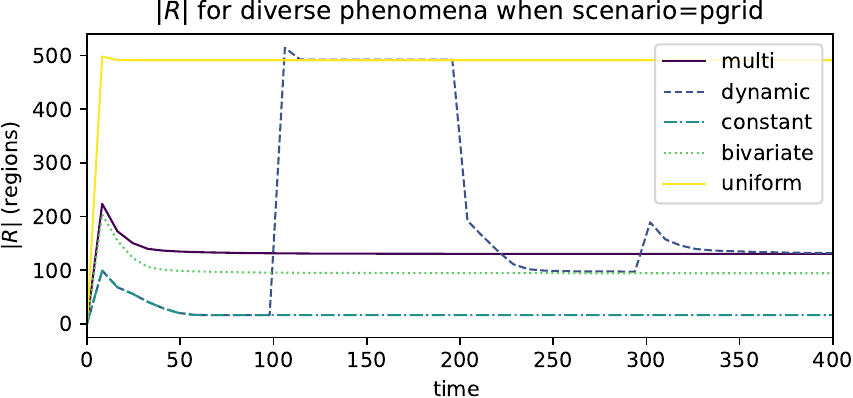}
	\includegraphics[width=\imgfactor\columnwidth]{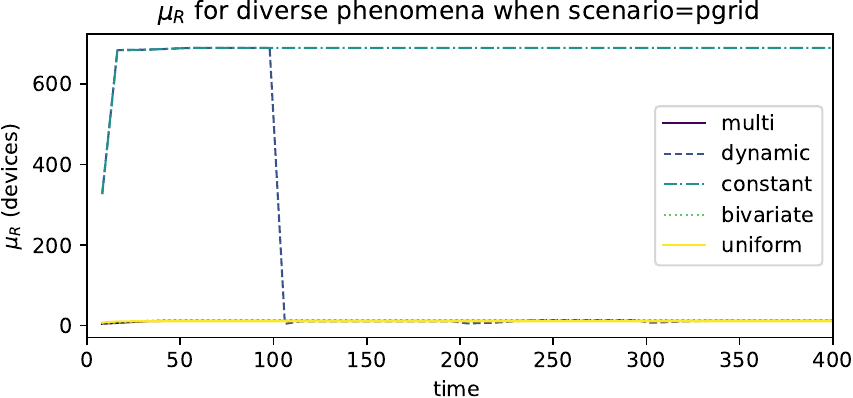}
	\includegraphics[width=\imgfactor\columnwidth]{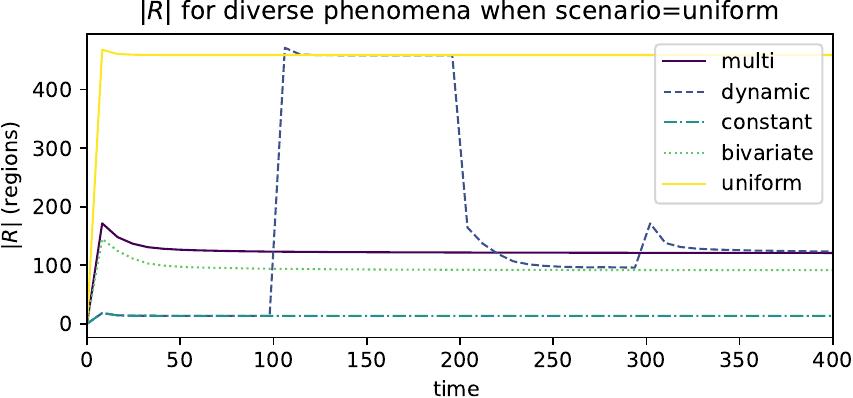}
	\includegraphics[width=\imgfactor\columnwidth]{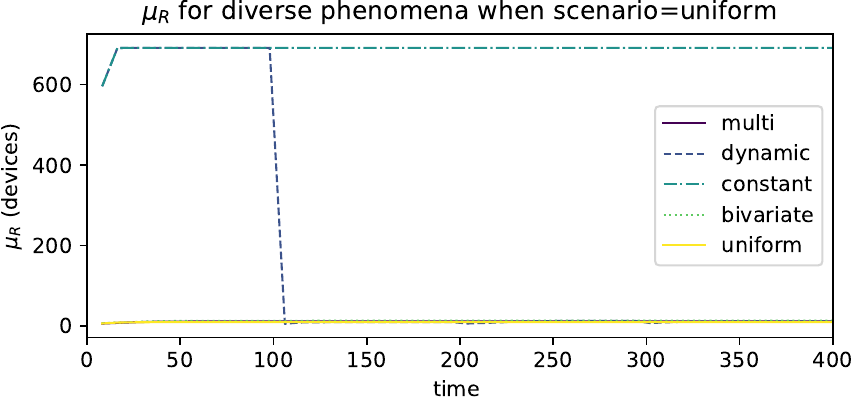}
	\caption{
		Region count (left column) and size (right column) across deployments and scenarios.
		The system behaves very similarly regardless of the device disposition.
		As expected, the higher information density leads to a larger number of smaller regions.
		The dynamic scenario shows that the partitions change in response to changes in the signal.
	}
	\label{fig:region-count-deployment}
\end{figure}

\begin{figure}[!h]
	\centering
	\includegraphics[width=\imgfactor\columnwidth]{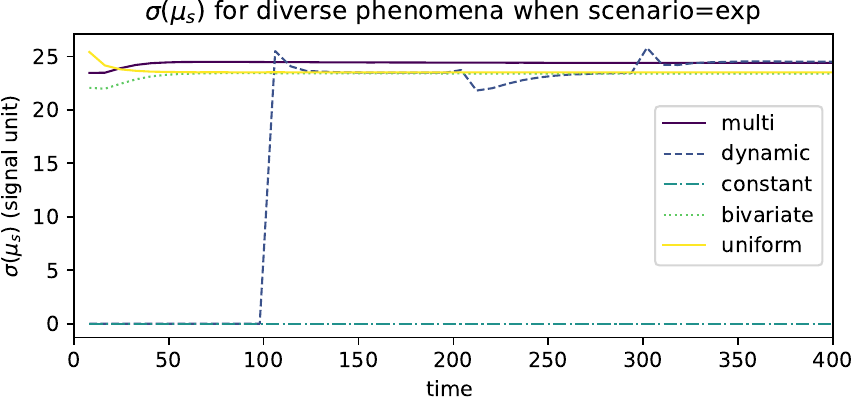}
	\includegraphics[width=\imgfactor\columnwidth]{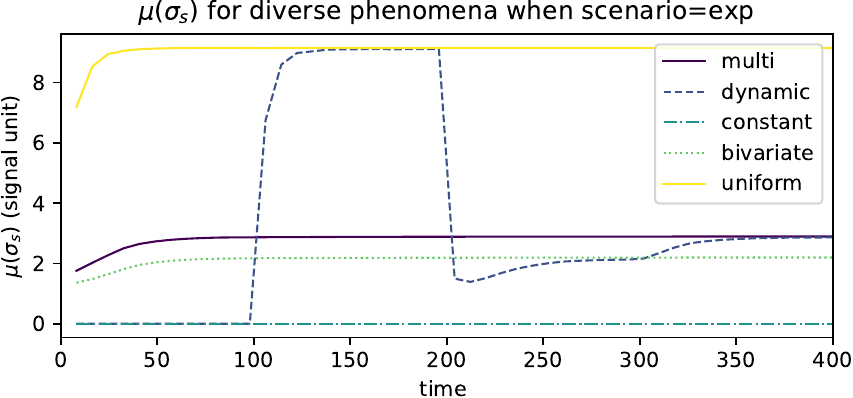}
	\includegraphics[width=\imgfactor\columnwidth]{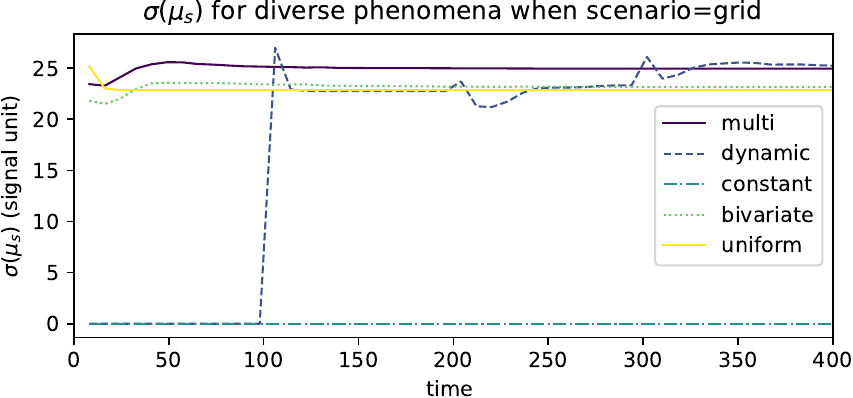}
	\includegraphics[width=\imgfactor\columnwidth]{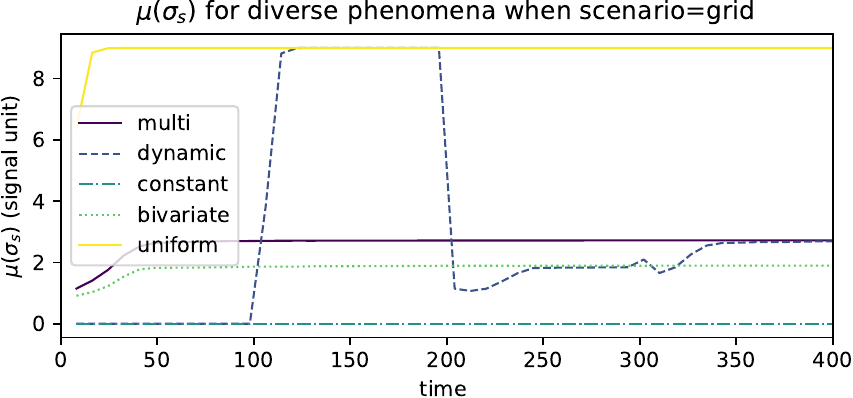}
	\includegraphics[width=\imgfactor\columnwidth]{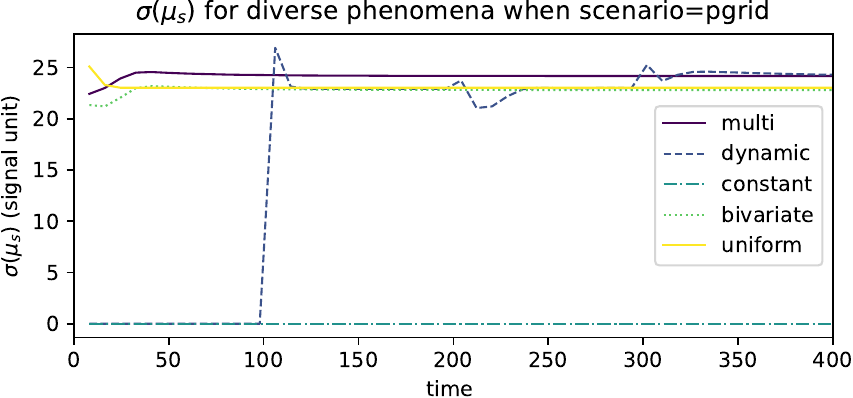}
	\includegraphics[width=\imgfactor\columnwidth]{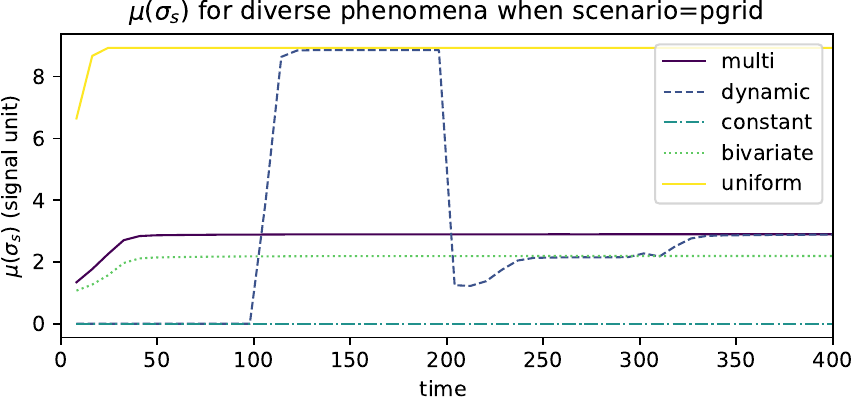}
	\includegraphics[width=\imgfactor\columnwidth]{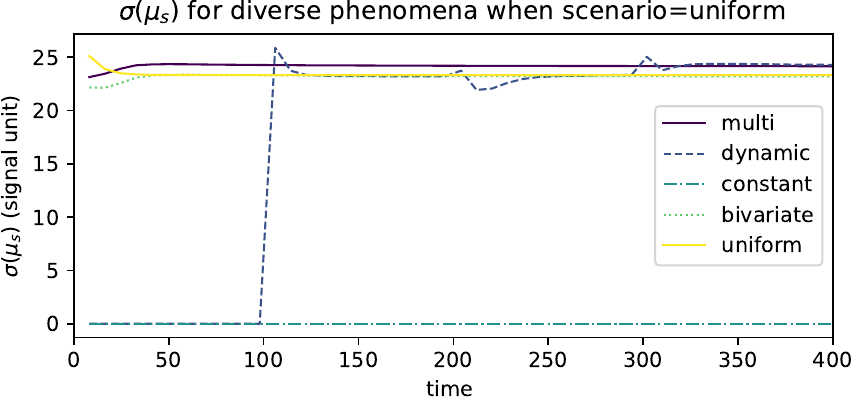}
	\includegraphics[width=\imgfactor\columnwidth]{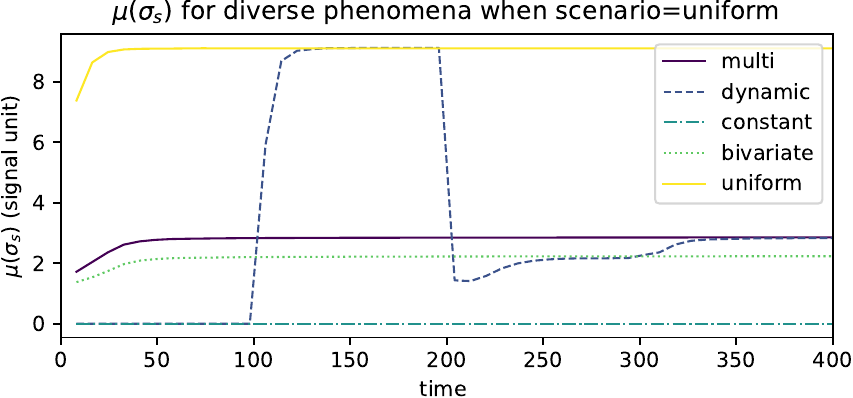}
	\caption{
		Standard deviation of the mean region value (left)
		and mean standard deviation (right)
		across deployments and scenarios,
		indicating respectively how much the regions readings differ from each other
		(the higher the more different)
		and how the regions are internally similar
		(the lower the more homogeneous are regions).
		The constant and uniform random signals work as baselines: in the former case,
		very large areas gets formed, while in the latter most regions count a single device (as expected).
		In the other cases, inter-region differences is maximised
		(they get as high as the most extreme case)
		keeping internal consistency under control.
	}
	\label{fig:intra-inter-deployment}
\end{figure}

\subsection{Evaluation Metrics}
\label{ssec:eval-metrics}

We evaluate the system behaviour by considering the following evaluation metrics.
Assume, at any time instant,
 that a set $D$ of devices is partitioned into a set of regions
$R = R_1 \cup \dots \cup R_{|R|}$,
where each
$R_r$ is a set of devices $\{ D^{r}_1,\ \dots,\ D^{r}_{|R_r|} \}$,
and each device $D^r_d$ senses the local value of the tracked signal $s^r_d$:
\begin{itemize}
    \item \emph{Region count $|R|$ (regions).}
Counting the regions provides an indication about efficiency (\autoref{def:optimal}): more partitions should be expected in environments where the sampled signal has higher entropy.
    \item \emph{Mean region size $\mu_{R} = \nicefrac{\sum_{i=1}^{|R|}{|R_i|}}{|R|}$ (devices).}
Related to the region count, but density-sensitive: when devices are distributed irregularly (as in the \textbf{exp} deployment, see \Cref{scenarios}), we expect this metric to be less predictable.

    \item \emph{Standard deviation of the mean of the signal in regions
$\sigma(\mu_s)$
(same unit of the signal).}
This is a proxy for inter-region difference, with higher values denoting larger differences between different regions.
The mean signal inside region $R_r$ is computed as:
$$\mu_s^{R_r}=\frac{\sum\limits_{i=1}^{|R_r|}{s^r_i}}{|R_r|}$$
while the mean of the means of the signal is

$$\mu_s^{R}=\frac{\sum_{i=1}^{|R|}{\mu_s^{R_i}}}{|R|}$$

thus

$$\sigma(\mu_s) = \sqrt{\frac{1}{|R|}\sum\limits_{i=1}^{|R|}(\mu_s^{R_i} - \mu_s^R)^2}$$

    \item \emph{Mean standard deviation of the signal in regions
$\mu(\sigma_s)$
(same unit of the signal).}
This is a proxy for the intra-region error.
The lower this value, the more similar are the signal readings inside regions,
hence the lower the error induced by the grouping (\autoref{def:error}).
The standard deviation of the tracked signal inside region $R_r$ is computed as:
$$\sigma_s^{R_r} = \sqrt{\frac{1}{|R_r|}\sum\limits_{i=1}^{|R_r|}(s^r_i - \mu_s^{R_r})^2}$$
thus
$$\mu(\sigma_s) = \frac{\sum\limits_{i=1}^{|R|}{\sigma_s^{R_i}}}{|R|}$$

    \item \emph{Standard deviation of the standard deviation of the signal in regions
$\sigma(\sigma_s)$
(same unit of the signal).}
Proxy metric for the consistency of partitioning.
Higher values suggest that partitions have different internal error,
hence behave differently (striving to satisfy \autoref{def:optsampling}).
It is computed as:
$$\sigma(\sigma_s) = \sqrt{\frac{1}{|R|}\sum_{i=1}^{|R|}(\sigma_s^{R_r} - \mu(\sigma_s))^2}$$

\end{itemize}

\subsection{Implementation and Reproducibility}
\label{ssec:eval-impl}

We rely on an implementation coded in the Protelis aggregate programming language~\cite{PianiniSAC2015}.
The simulations are implemented in the Alchemist simulator~\cite{PianiniJOS2013}.
The data analysis leverages Xarray~\cite{xarray} and matplotlib~\cite{matplotlib}.

In the synthetic scenarios,
for each element in the Cartesian product of the
device deployment type,
signal form,
leader strength,
and distance metric,
an experiment was carried out.
Each experiment has been repeated 100 times with different random seeds, resulting in multiple simulation runs per experiment.
Random seeds control both the evolution of the system
(i.e., the order in which devices compute)
and their position on the arena
(except for the regular grid deployment, which is not randomised).

\revA{For the realistic scenario,
since the position of the devices is mandated by the real-world deployment,
we run 10 simulation repetitions for each experiment;
in this case, the random seed controls the evolution of the system
(the order in which the devices compute).}

The presented results are obtained from taking the average of the metrics across all the repetitions;
when a chart does not mention some parameters,
then the results that are presented are also
averaged across all values the parameter may assume for the simulation set.
The experiment has been open sourced,
publicly released\footnote{
    \url{https://github.com/DanySK/Experiment-2022-Coordination-Space-Fluid}
}\footnote{
    \url{https://github.com/DanySK/experiment-2023-lmcs-pm10-pollution-space-sampling}
},
documented,
equipped with a continuous integration system to guarantee replicability,
and published as a permanently available, reusable artefact~\cite{zenodo-space-fluid-experiments}.

\begin{figure}[!h]
	\centering
	\includegraphics[width=\imgfactor\columnwidth]{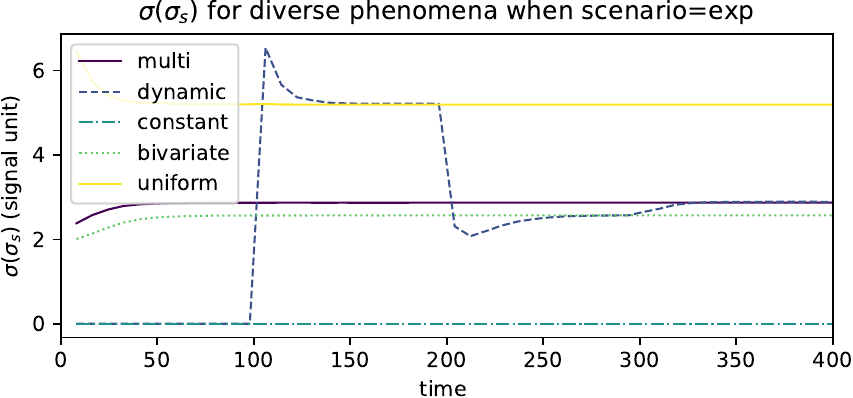}
	\includegraphics[width=\imgfactor\columnwidth]{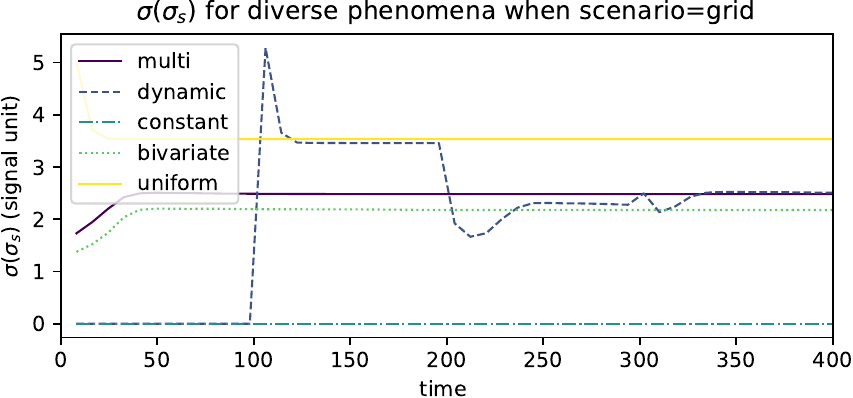}
	\includegraphics[width=\imgfactor\columnwidth]{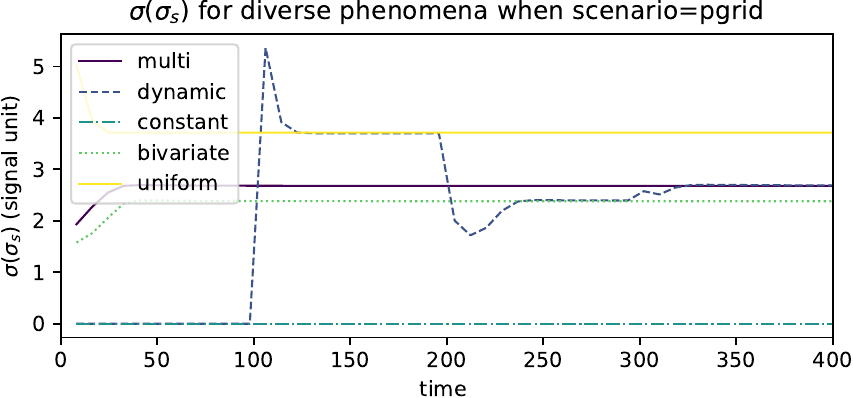}
	\includegraphics[width=\imgfactor\columnwidth]{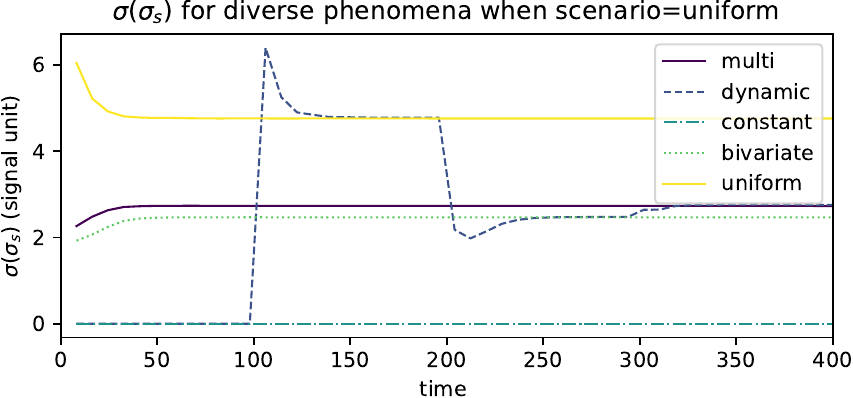}
	\caption{
		Intra-region partitioning homogeneity,
		measured as the standard deviation across regions of the standard deviation of the signal inside regions.
		Higher values denote that different regions are more heterogeneous,
		i.e., that some have larger errors than others.
	}
	\label{fig:intra-deployment}
\end{figure}

\begin{figure}[!h]
	\centering
	\includegraphics[width=\imgfactor\columnwidth]{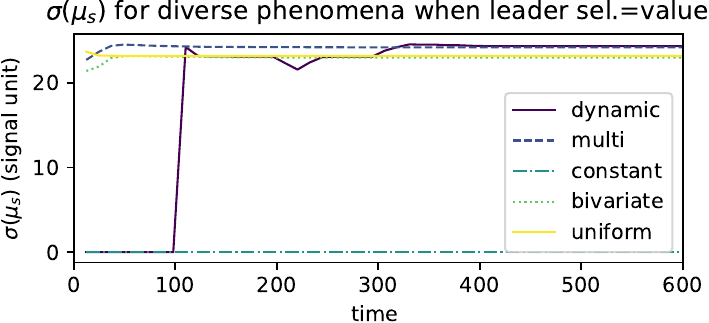}
	\includegraphics[width=\imgfactor\columnwidth]{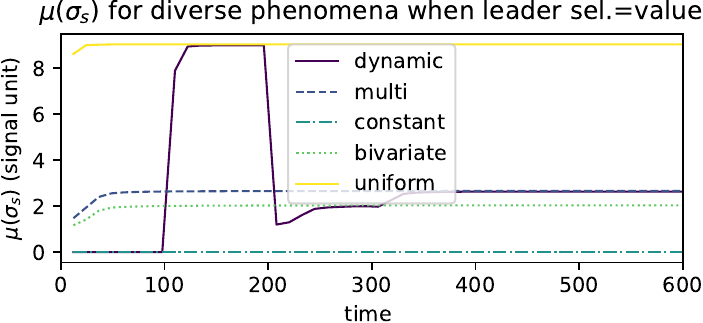}
	\includegraphics[width=\imgfactor\columnwidth]{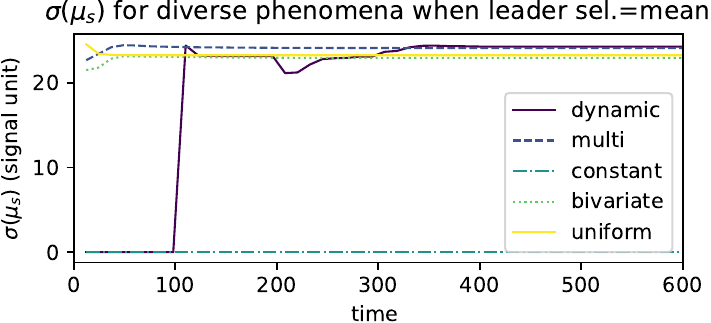}
	\includegraphics[width=\imgfactor\columnwidth]{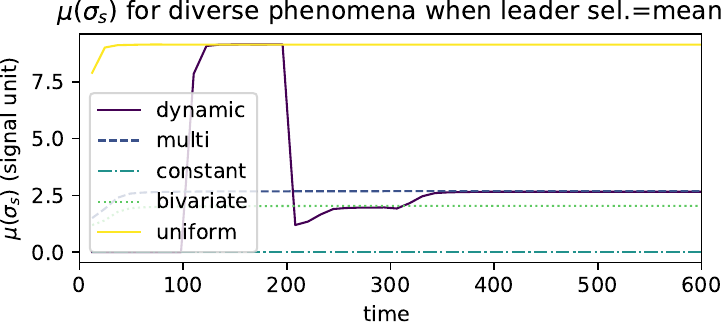}
	\includegraphics[width=\imgfactor\columnwidth]{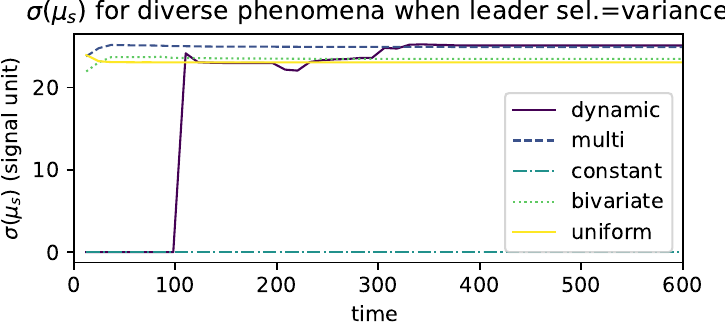}
	\includegraphics[width=\imgfactor\columnwidth]{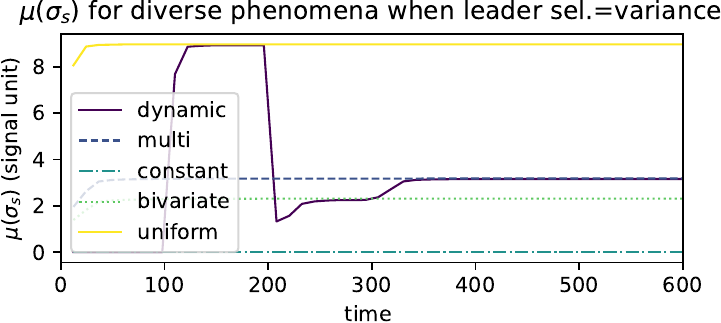}
	\caption{
		Effect of different policies for leader selection.
		The behaviour of the system is similar regardless of the way the region leader is selected.
	}
	\label{fig:leader-selection}
\end{figure}

\begin{figure}[!h]
	\centering
	\includegraphics[width=\imgfactor\columnwidth]{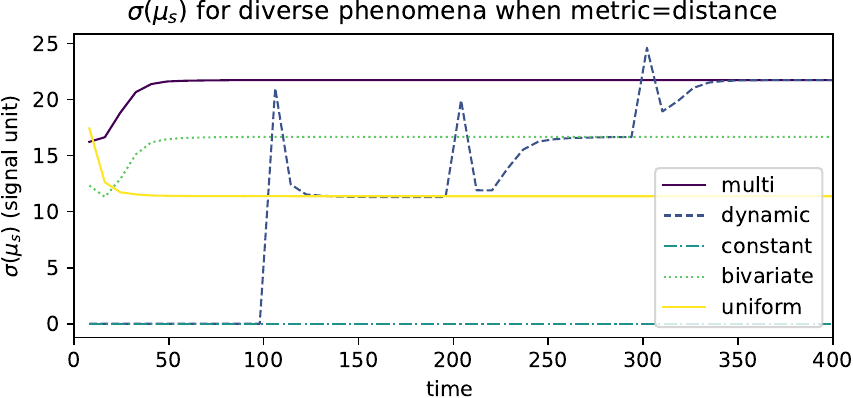}
	\includegraphics[width=\imgfactor\columnwidth]{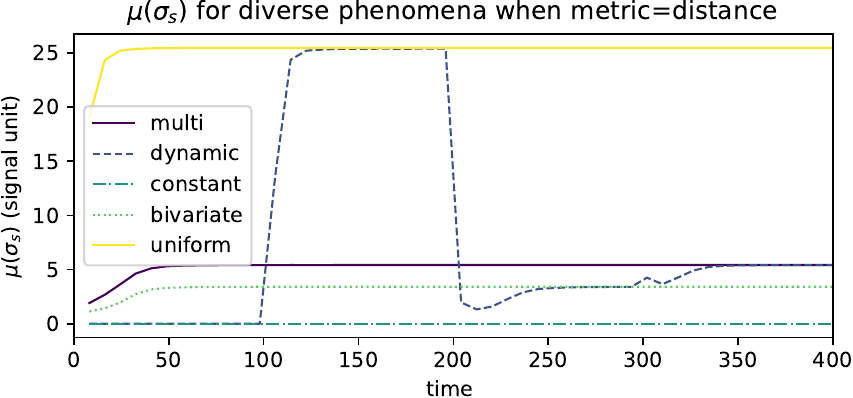}
	\includegraphics[width=\imgfactor\columnwidth]{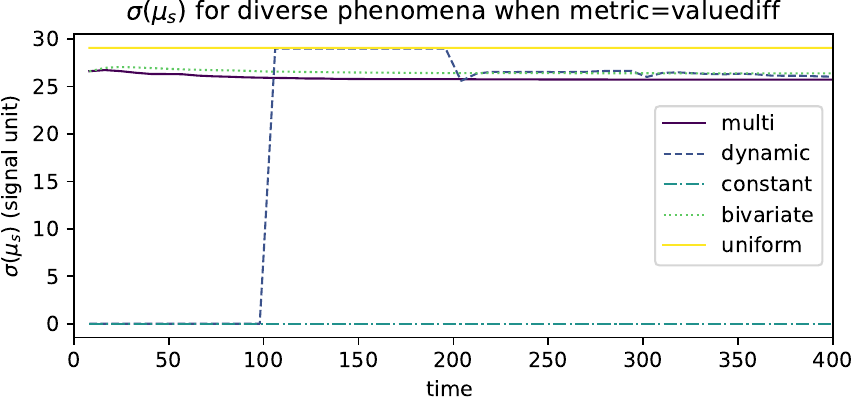}
	\includegraphics[width=\imgfactor\columnwidth]{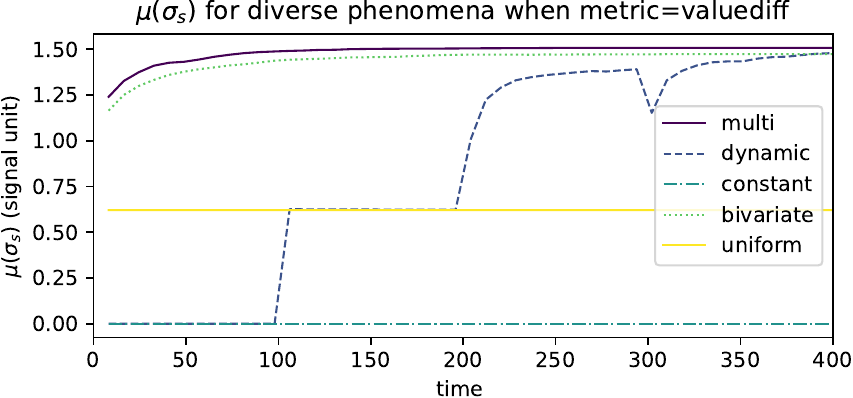}
	\includegraphics[width=\imgfactor\columnwidth]{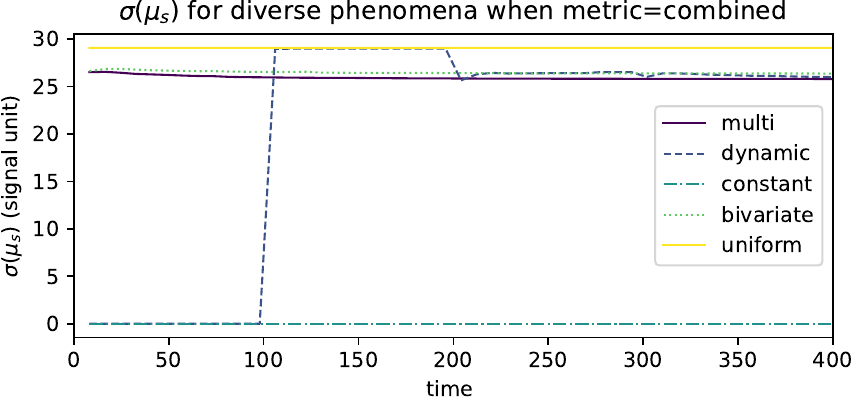}
	\includegraphics[width=\imgfactor\columnwidth]{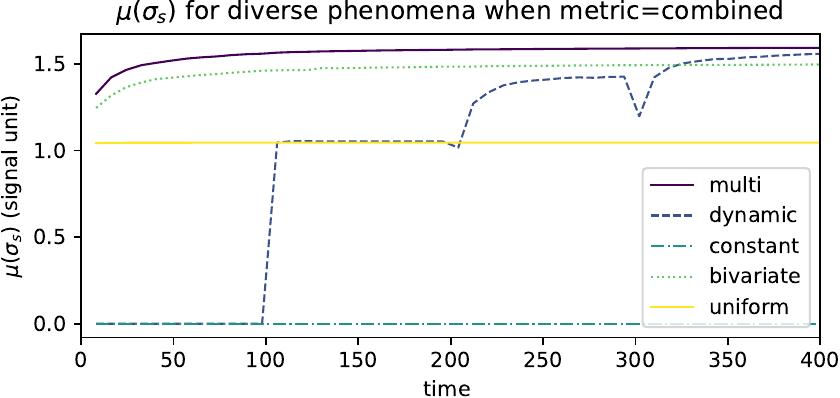}
	\caption{
		Effect of different error measurement metrics.
		The system is very sensible to the metric used to accumulate error,
		which directly impacts the way distance is perceived,
		thus determining the maximum size and number of areas.
	}
	\label{fig:metric}
\end{figure}

\begin{figure}[!h]
    \centering
    \includegraphics[width=\imgfactor\columnwidth]{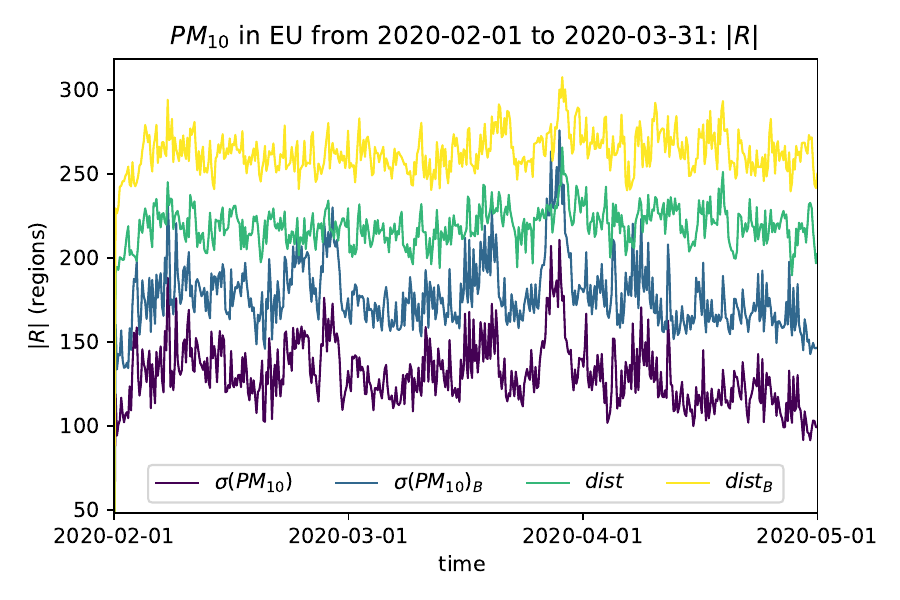}
    \includegraphics[width=\imgfactor\columnwidth]{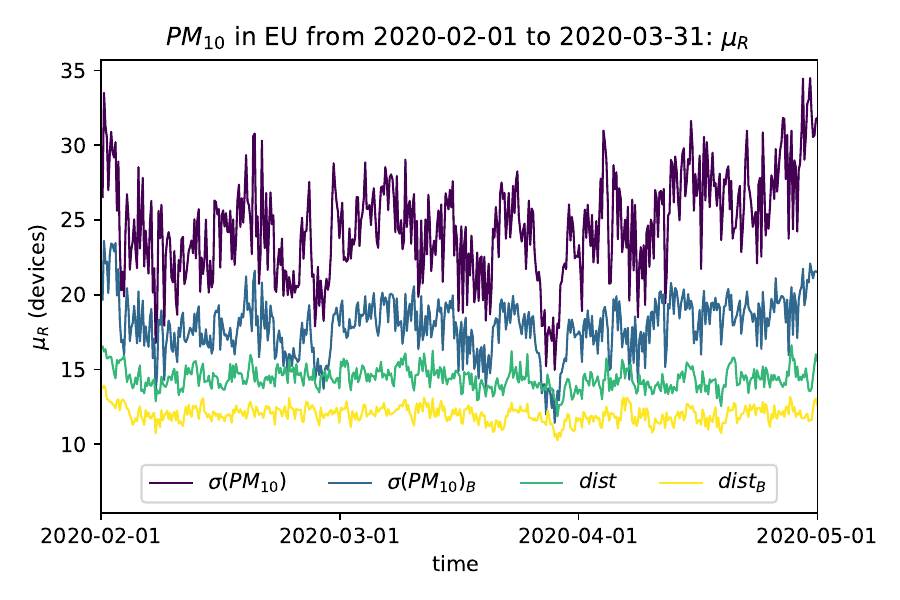}
    \includegraphics[width=\imgfactor\columnwidth]{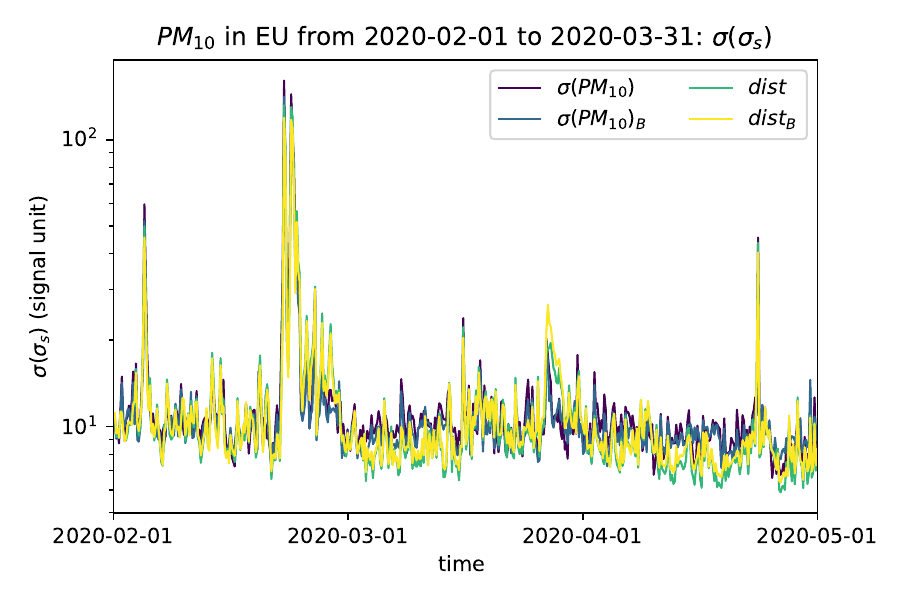}
    \includegraphics[width=\imgfactor\columnwidth]{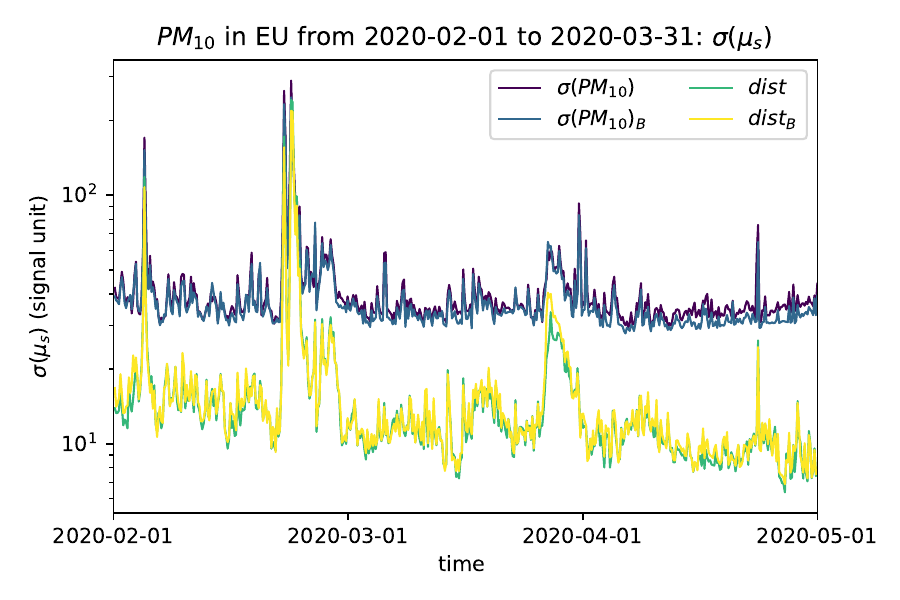}
    \caption{
        \revA{Region count (top left),
        mean region size (top right),
        intra-region error (bottom left), and
        inter-region difference (bottom right)
        with different metrics
        for the realistic experiment.
        When fed with a metric that considers both distance and error appropriately,
        the algorithm is able to create regions in a way that is both efficient and accurate.
        This is especially true for the ability to maximise the difference among different regions (bottom right).
        The algorithm is also robust to the presence of arbitrary-set limits,
        such as the country borders:
        although introducing them as an impassable barrier generates (with the same metric) a higher region count with a smaller mean size (top charts),
        the quality of the solution is only marginally affected (bottom charts).
        }
    }
    \label{fig:charts-real}
\end{figure}

\subsection{Results}
\label{ssec:eval-results}

\revA{The full analysis counts 637 charts; the interested reader can check them out in the experiment repository.
In this manuscript, we show the most relevant ones, that we believe help to shed light on the behaviour of the proposed algorithm.
}

In \Cref{fig:region-count-deployment},
we show that our implementation stabilises,
since after a short transition all values become stable.
Obviously, in the dynamic case, these transitions are present throughout the experiment.
As expected,
the aggregate sampler defines a number of regions that differs depending on the underlying phenomenon under observation.
From \Cref{fig:intra-inter-deployment} and \Cref{fig:intra-deployment},
we notice that the system indeed tries to maximise inter-region differences
and minimise intra-region differences,
thus effectively addressing the trade-off between \emph{high information (entropy)} and \emph{error-controlled upscaling} (as per \autoref{def:optsampling}).
Finally, \Cref{fig:leader-selection} and \Cref{fig:metric} show how the algorithm reacts to changing parameters.
As expected,
while modifying the leader selection policy has minimal impact on the behaviour of the system,
changing the error-distance metric greatly affects its behaviour.
In all cases, we notice that the driver signal with higher information entropy
(uniform) generates a larger number of smaller regions than all other signals,
while the one with the lowest information entropy
(constant)
always produces few (usually one) large regions.
The reason is that the leader selection impacts the originating point of a region,
but it is its expansion (driven by the metric) that ultimately determines its extension and shape.

\revA{These findings, coming from the analysis of the results produced by the synthetic environments,
are confirmed when real-world data is used instead.
In \Cref{fig:charts-real}, indeed, we see that the algorithm is very sensitive to the choice of the metric
(as expected);
yet, it is pretty robust to the presence of arbitrarily-set spatial limits such as country borders:
although they inevitably lead to more regions of a smaller size,
the key performance indicators (high inter-regional difference and low intra-regional difference)
remain pretty much unchanged.
}

\section{Conclusions and Future Work}
\label{s:conc}

In this paper, 
we tackled the problem of defining an \sampler{} 
sensitive to the spatial dynamics of the phenomenon under observation.
In particular,
we wanted to
minimise the sampling error
(minimum when all available sampling devices are used)
while also minimising the regions count---two contrasting goals.

We formalised the problem  
within the framework of event structures and field-based coordination, 
suitable to represent situated, large-scale, and dynamic computations.
We thus designed a spatial adaptive aggregate sampler
based on a leader election strategy 
that dynamically creates and grows/shrinks sampling clusters (or regions)
based on
 two main control ``knobs'':
the error-distance metric and the leader strength.
\addrev{
We proved that the proposed algorithm is self-stabilising and enjoys a local optimality property.
}
Through simulation, the proposed algorithm is shown to satisfy the mentioned trade-off. % (\emph{effective sampling}, \autoref{def:optsampling}).

As measuring performance and efficiency of such an adaptive algorithm is far from trivial, 
we exploited several metrics to validate intended behaviour.
However, as a follow-up work we would like to synthesize a single indicator able to measure both accuracy and efficiency, 
using information theory such as those derived from entropy 
(e.g.\ mutual information).
Also, we are analysing openly available air pollution datasets to design new simulations based on real-world data,
to better emphasise the impact that our aggregate sampler could have for policy making based on spatial phenomena.
Finally, future work will be devoted to investigating
how space-fluid sampling can integrate with time-fluid aggregate computations~\cite{lmcs-timefluid}.

\section*{Acknowledgment}
This work has been supported by the MIUR PRIN 2017 Project
``Fluidware'' (N. 2017KRC7KT) and the EU FSE PON R\&I 2014-2020.

\bibliographystyle{alphaurl}
\bibliography{biblio}

\newcommand{\etalchar}[1]{$^{#1}$}
\begin{thebibliography}{MMP{\etalchar{+}}17}

\bibitem[ABD{\etalchar{+}}20]{DBLP:journals/lmcs/AudritoBDPV20}
Giorgio Audrito, Jacob Beal, Ferruccio Damiani, Danilo Pianini, and Mirko
  Viroli.
\newblock Field-based coordination with the share operator.
\newblock {\em Log. Methods Comput. Sci.}, 16(4), 2020.
\newblock URL: \url{https://lmcs.episciences.org/6816}.

\bibitem[ABDV18]{ABDV-COORDINATION2018}
Giorgio Audrito, Jacob Beal, Ferruccio Damiani, and Mirko Viroli.
\newblock Space-time universality of field calculus.
\newblock In {\em Coordination Models and Languages - 20th {IFIP} {WG} 6.1
  International Conference, {COORDINATION} 2018, Held as Part of the 13th
  International Federated Conference on Distributed Computing Techniques,
  DisCoTec 2018, Madrid, Spain, June 18-21, 2018. Proceedings}, volume 10852 of
  {\em Lecture Notes in Computer Science}, pages 1--20. Springer, 2018.
\newblock \href {https://doi.org/10.1007/978-3-319-92408-3_1}
  {\path{doi:10.1007/978-3-319-92408-3_1}}.

\bibitem[ACDV17]{audrito2017ult}
Giorgio Audrito, Roberto Casadei, Ferruccio Damiani, and Mirko Viroli.
\newblock Compositional blocks for optimal self-healing gradients.
\newblock In {\em 11th {IEEE} International Conference on Self-Adaptive and
  Self-Organizing Systems, {SASO} 2017, Tucson, AZ, USA, September 18-22,
  2017}, pages 91--100. {IEEE} Computer Society, 2017.
\newblock URL: \url{http://doi.ieeecomputersociety.org/10.1109/SASO.2017.18},
  \href {https://doi.org/10.1109/SASO.2017.18}
  {\path{doi:10.1109/SASO.2017.18}}.

\bibitem[AVD{\etalchar{+}}19]{TOCL2019}
Giorgio Audrito, Mirko Viroli, Ferruccio Damiani, Danilo Pianini, and Jacob
  Beal.
\newblock A higher-order calculus of computational fields.
\newblock {\em {ACM} Transactions on Computational Logic}, 20(1):1--55, jan
  2019.
\newblock \href {https://doi.org/10.1145/3285956} {\path{doi:10.1145/3285956}}.

\bibitem[BC02]{DBLP:conf/infocom/BhardwajC02}
Manish Bhardwaj and Anantha~P. Chandrakasan.
\newblock Bounding the lifetime of sensor networks via optimal role
  assignments.
\newblock In {\em Proceedings {IEEE} {INFOCOM} 2002, The 21st Annual Joint
  Conference of the {IEEE} Computer and Communications Societies, New York,
  USA, June 23-27, 2002}, pages 1587--1596. {IEEE} Computer Society, 2002.
\newblock \href {https://doi.org/10.1109/INFCOM.2002.1019410}
  {\path{doi:10.1109/INFCOM.2002.1019410}}.

\bibitem[BC03]{DBLP:conf/infocom/BandyopadhyayC03}
Seema Bandyopadhyay and Edward~J. Coyle.
\newblock An energy efficient hierarchical clustering algorithm for wireless
  sensor networks.
\newblock In {\em Proceedings {IEEE} {INFOCOM} 2003, The 22nd Annual Joint
  Conference of the {IEEE} Computer and Communications Societies, San Franciso,
  CA, USA, March 30 - April 3, 2003}, pages 1713--1723. {IEEE} Computer
  Society, 2003.
\newblock \href {https://doi.org/10.1109/INFCOM.2003.1209194}
  {\path{doi:10.1109/INFCOM.2003.1209194}}.

\bibitem[BC04]{DBLP:journals/cn/BandyopadhyayC04}
Seema Bandyopadhyay and Edward~J. Coyle.
\newblock Minimizing communication costs in hierarchically-clustered networks
  of wireless sensors.
\newblock {\em Comput. Networks}, 44(1):1--16, 2004.
\newblock \href {https://doi.org/10.1016/S1389-1286(03)00320-7}
  {\path{doi:10.1016/S1389-1286(03)00320-7}}.

\bibitem[BCC{\etalchar{+}}21]{DBLP:conf/podc/BurmanCCDNSX21}
Janna Burman, Ho{-}Lin Chen, Hsueh{-}Ping Chen, David Doty, Thomas Nowak,
  Eric~E. Severson, and Chuan Xu.
\newblock Time-optimal self-stabilizing leader election in population
  protocols.
\newblock In {\em {PODC} '21: {ACM} Symposium on Principles of Distributed
  Computing, Virtual Event, Italy, July 26-30, 2021}, pages 33--44. {ACM},
  2021.
\newblock \href {https://doi.org/10.1145/3465084.3467898}
  {\path{doi:10.1145/3465084.3467898}}.

\bibitem[BPV15]{BealIEEEComputer2015}
Jacob Beal, Danilo Pianini, and Mirko Viroli.
\newblock Aggregate programming for the internet of things.
\newblock {\em {IEEE} Computer}, 48(9):22--30, 2015.
\newblock URL: \url{http://dx.doi.org/10.1109/MC.2015.261}, \href
  {https://doi.org/10.1109/MC.2015.261} {\path{doi:10.1109/MC.2015.261}}.

\bibitem[BVPD17]{TAAS2017}
Jacob Beal, Mirko Viroli, Danilo Pianini, and Ferruccio Damiani.
\newblock Self-adaptation to device distribution in the internet of things.
\newblock {\em {ACM} Transactions on Autonomous and Adaptive Systems},
  12(3):12:1--12:29, 2017.
\newblock \href {https://doi.org/10.1145/3105758} {\path{doi:10.1145/3105758}}.

\bibitem[CMP{\etalchar{+}}22]{DBLP:conf/coordination/CasadeiMPVZ22}
Roberto Casadei, Stefano Mariani, Danilo Pianini, Mirko Viroli, and Franco
  Zambonelli.
\newblock Space-fluid adaptive sampling: {A} field-based, self-organising
  approach.
\newblock In {\em Coordination Models and Languages - 24th {IFIP} {WG} 6.1
  International Conference, {COORDINATION} 2022, Held as Part of the 17th
  International Federated Conference on Distributed Computing Techniques,
  DisCoTec 2022, Lucca, Italy, June 13-17, 2022, Proceedings}, volume 13271 of
  {\em Lecture Notes in Computer Science}, pages 99--117. Springer, 2022.
\newblock \href {https://doi.org/10.1007/978-3-031-08143-9_7}
  {\path{doi:10.1007/978-3-031-08143-9_7}}.

\bibitem[Cox99]{Cox99RiskAnalysis}
Louis~Anthony Cox.
\newblock Adaptive spatial sampling of contaminated soil.
\newblock {\em Risk Analysis}, 19(6):1059--1069, 1999.
\newblock \href {https://doi.org/10.1023/A:1007022409290}
  {\path{doi:10.1023/A:1007022409290}}.

\bibitem[CPP{\etalchar{+}}20]{casadei2020pulverization}
Roberto Casadei, Danilo Pianini, Andrea Placuzzi, Mirko Viroli, and Danny
  Weyns.
\newblock Pulverization in cyber-physical systems: Engineering the
  self-organizing logic separated from deployment.
\newblock {\em Future Internet}, 12(11):203, 2020.

\bibitem[FRWZ07]{DBLP:journals/wc/FasoloRWZ07}
Elena Fasolo, Michele Rossi, J{\"{o}}rg Widmer, and Michele Zorzi.
\newblock In-network aggregation techniques for wireless sensor networks: a
  survey.
\newblock {\em {IEEE} Wirel. Commun.}, 14(2):70--87, 2007.
\newblock \href {https://doi.org/10.1109/MWC.2007.358967}
  {\path{doi:10.1109/MWC.2007.358967}}.

\bibitem[FSM{\etalchar{+}}13]{DBLP:journals/nc/Fernandez-MarquezSMVA13}
Jose~Luis Fernandez{-}Marquez, Giovanna Di~Marzo Serugendo, Sara Montagna,
  Mirko Viroli, and Josep~Llu{\'{\i}}s Arcos.
\newblock Description and composition of bio-inspired design patterns: a
  complete overview.
\newblock {\em Nat. Comput.}, 12(1):43--67, 2013.
\newblock \href {https://doi.org/10.1007/s11047-012-9324-y}
  {\path{doi:10.1007/s11047-012-9324-y}}.

\bibitem[GA14]{DBLP:conf/rss/GargA14}
Sahil Garg and Nora Ayanian.
\newblock Persistent monitoring of stochastic spatio-temporal phenomena with a
  small team of robots.
\newblock In {\em Robotics: Science and Systems X, University of California,
  Berkeley, USA, July 12-16, 2014}, 2014.
\newblock URL: \url{http://www.roboticsproceedings.org/rss10/p38.html}, \href
  {https://doi.org/10.15607/RSS.2014.X.038}
  {\path{doi:10.15607/RSS.2014.X.038}}.

\bibitem[GC09]{DBLP:conf/cdc/GrahamC09}
Rishi Graham and Jorge Cort{\'{e}}s.
\newblock Cooperative adaptive sampling via approximate entropy maximization.
\newblock In {\em Proceedings of the 48th {IEEE} Conference on Decision and
  Control, {CDC} 2009, combined withe the 28th Chinese Control Conference,
  December 16-18, 2009, Shanghai, China}, pages 7055--7060. {IEEE}, 2009.
\newblock \href {https://doi.org/10.1109/CDC.2009.5400511}
  {\path{doi:10.1109/CDC.2009.5400511}}.

\bibitem[HH17]{xarray}
S.~Hoyer and J.~Hamman.
\newblock xarray: {N-D} labeled arrays and datasets in {Python}.
\newblock {\em Journal of Open Research Software}, 5(1), 2017.
\newblock \href {https://doi.org/10.5334/jors.148}
  {\path{doi:10.5334/jors.148}}.

\bibitem[HP11]{DBLP:journals/iet-wss/HamoudaP11}
Yousef E.~M. Hamouda and Chris~I. Phillips.
\newblock Adaptive sampling for energy-efficient collaborative multi-target
  tracking in wireless sensor networks.
\newblock {\em {IET} Wirel. Sens. Syst.}, 1(1):15--25, 2011.
\newblock \href {https://doi.org/10.1049/iet-wss.2010.0059}
  {\path{doi:10.1049/iet-wss.2010.0059}}.

\bibitem[{Hun}07]{matplotlib}
J.~D. {Hunter}.
\newblock Matplotlib: A 2d graphics environment.
\newblock {\em Computing in Science Engineering}, 9(3):90--95, May 2007.
\newblock \href {https://doi.org/10.1109/MCSE.2007.55}
  {\path{doi:10.1109/MCSE.2007.55}}.

\bibitem[LM07]{DBLP:conf/ccnc/LinM07}
Yen{-}Ting Lin and Seapahn Megerian.
\newblock Sensing driven clustering for monitoring and control applications.
\newblock In {\em 4th {IEEE} Consumer Communications and Networking Conference,
  {CCNC} 2007, Las Vegas, NV, USA, January 11-13, 2007}, pages 202--206.
  {IEEE}, 2007.
\newblock \href {https://doi.org/10.1109/CCNC.2007.47}
  {\path{doi:10.1109/CCNC.2007.47}}.

\bibitem[LVP11]{DBLP:conf/icac/LeeVP11}
Eun~Kyung Lee, Hariharasudhan Viswanathan, and Dario Pompili.
\newblock {SILENCE:} distributed adaptive sampling for sensor-based autonomic
  systems.
\newblock In {\em Proceedings of the 8th International Conference on Autonomic
  Computing, {ICAC} 2011, Karlsruhe, Germany, June 14-18, 2011}, pages 61--70.
  {ACM}, 2011.
\newblock \href {https://doi.org/10.1145/1998582.1998594}
  {\path{doi:10.1145/1998582.1998594}}.

\bibitem[LXZ{\etalchar{+}}13]{DBLP:conf/aina/LiuXZWL13}
Zhidan Liu, Wei Xing, Bo~Zeng, Yongchao Wang, and Dongming Lu.
\newblock Distributed spatial correlation-based clustering for approximate data
  collection in {WSNs}.
\newblock In {\em 27th {IEEE} International Conference on Advanced Information
  Networking and Applications, {AINA} 2013, Barcelona, Spain, March 25-28,
  2013}, pages 56--63. {IEEE} Computer Society, 2013.
\newblock \href {https://doi.org/10.1109/AINA.2013.26}
  {\path{doi:10.1109/AINA.2013.26}}.

\bibitem[Lyn96]{DBLP:books/mk/Lynch96}
Nancy~A. Lynch.
\newblock {\em Distributed Algorithms}.
\newblock Morgan Kaufmann, 1996.

\bibitem[MADB20]{Mo2020leader-election}
Yuanqiu Mo, Giorgio Audrito, Soura Dasgupta, and Jacob Beal.
\newblock A resilient leader election algorithm using aggregate computing
  blocks.
\newblock {\em {IFAC}-{PapersOnLine}}, 53(2):3336--3341, 2020.
\newblock \href {https://doi.org/10.1016/j.ifacol.2020.12.1497}
  {\path{doi:10.1016/j.ifacol.2020.12.1497}}.

\bibitem[MCP12]{DBLP:journals/ejwcn/MysorewalaCP12}
Muhammad~F. Mysorewala, Lahouari Cheded, and Dan~O. Popa.
\newblock A distributed multi-robot adaptive sampling scheme for the estimation
  of the spatial distribution in widespread fields.
\newblock {\em {EURASIP} J. Wirel. Commun. Netw.}, 2012:223, 2012.
\newblock \href {https://doi.org/10.1186/1687-1499-2012-223}
  {\path{doi:10.1186/1687-1499-2012-223}}.

\bibitem[MHD21]{DBLP:journals/corr/abs-2105-10018}
Sandeep Manjanna, Ani Hsieh, and Gregory Dudek.
\newblock Scalable multi-robot system for non-myopic spatial sampling.
\newblock {\em CoRR}, abs/2105.10018, 2021.
\newblock URL: \url{https://arxiv.org/abs/2105.10018}, \href
  {https://arxiv.org/abs/2105.10018} {\path{arXiv:2105.10018}}.

\bibitem[MMP{\etalchar{+}}17]{Manjarres2017hvac-predictive-control}
Diana Manjarres, Ana Mera, Eugenio Perea, Adelaida Lejarazu, and Sergio
  Gil-Lopez.
\newblock An energy-efficient predictive control for {HVAC} systems applied to
  tertiary buildings based on regression techniques.
\newblock {\em Energy and Buildings}, 152:409--417, October 2017.
\newblock \href {https://doi.org/10.1016/j.enbuild.2017.07.056}
  {\path{doi:10.1016/j.enbuild.2017.07.056}}.

\bibitem[MR04]{DBLP:journals/adhoc/MhatreR04}
Vivek Mhatre and Catherine Rosenberg.
\newblock Design guidelines for wireless sensor networks: communication,
  clustering and aggregation.
\newblock {\em Ad Hoc Networks}, 2(1):45--63, 2004.
\newblock \href {https://doi.org/10.1016/S1570-8705(03)00047-7}
  {\path{doi:10.1016/S1570-8705(03)00047-7}}.

\bibitem[MSM18]{DBLP:journals/corr/abs-1811-01303}
Hossein~K. Mousavi, Qiyu Sun, and Nader Motee.
\newblock Space-time sampling for network observability.
\newblock {\em CoRR}, abs/1811.01303, 2018.
\newblock URL: \url{http://arxiv.org/abs/1811.01303}, \href
  {https://arxiv.org/abs/1811.01303} {\path{arXiv:1811.01303}}.

\bibitem[MZ19]{imran2019smart-waste}
Imran Mahmood and Junaid~Ahmed Zubairi.
\newblock Efficient waste transportation and recycling: Enabling technologies
  for smart cities using the internet of things.
\newblock {\em IEEE Electrification Magazine}, 7(3):33--43, 2019.
\newblock \href {https://doi.org/10.1109/MELE.2019.2925761}
  {\path{doi:10.1109/MELE.2019.2925761}}.

\bibitem[Nag02]{Nagpal2002programmable-self-assembly}
Radhika Nagpal.
\newblock Programmable self-assembly using biologically-inspired multiagent
  control.
\newblock In {\em Proceedings of the first international joint conference on
  Autonomous agents and multiagent systems part 1 - {AAMAS}'02}. {ACM} Press,
  2002.
\newblock \href {https://doi.org/10.1145/544741.544839}
  {\path{doi:10.1145/544741.544839}}.

\bibitem[NPW81]{DBLP:journals/tcs/NielsenPW81}
Mogens Nielsen, Gordon~D. Plotkin, and Glynn Winskel.
\newblock Petri nets, event structures and domains, part {I}.
\newblock {\em Theor. Comput. Sci.}, 13:85--108, 1981.
\newblock \href {https://doi.org/10.1016/0304-3975(81)90112-2}
  {\path{doi:10.1016/0304-3975(81)90112-2}}.

\bibitem[{\"{O}}FL04]{DBLP:journals/tac/OgrenFL04}
Petter {\"{O}}gren, Edward Fiorelli, and Naomi~Ehrich Leonard.
\newblock Cooperative control of mobile sensor networks: Adaptive gradient
  climbing in a distributed environment.
\newblock {\em {IEEE} Trans. Autom. Control.}, 49(8):1292--1302, 2004.
\newblock \href {https://doi.org/10.1109/TAC.2004.832203}
  {\path{doi:10.1109/TAC.2004.832203}}.

\bibitem[PCV{\etalchar{+}}21]{lmcs-timefluid}
Danilo Pianini, Roberto Casadei, Mirko Viroli, Stefano Mariani, and Franco
  Zambonelli.
\newblock Time-fluid field-based coordination through programmable distributed
  schedulers.
\newblock {\em Logical Methods in Computer Science}, Volume 17, Issue 4,
  November 2021.
\newblock \href {https://doi.org/10.46298/lmcs-17(4:13)2021}
  {\path{doi:10.46298/lmcs-17(4:13)2021}}.

\bibitem[PCV22]{pianini2022acsos-leader-election}
Danilo Pianini, Roberto Casadei, and Mirko Viroli.
\newblock Self-stabilising priority-based multi-leader election and network
  partitioning.
\newblock In {\em {IEEE} International Conference on Autonomic Computing and
  Self-Organizing Systems, {ACSOS} 2022, Virtual, CA, USA, September 19-23,
  2022}, pages 81--90. {IEEE}, 2022.
\newblock \href {https://doi.org/10.1109/ACSOS55765.2022.00026}
  {\path{doi:10.1109/ACSOS55765.2022.00026}}.

\bibitem[PCVN21]{FGCS2020-scr}
Danilo Pianini, Roberto Casadei, Mirko Viroli, and Antonio Natali.
\newblock Partitioned integration and coordination via the self-organising
  coordination regions pattern.
\newblock {\em Future Generation Computing Systems}, 114:44--68, 2021.
\newblock \href {https://doi.org/10.1016/j.future.2020.07.032}
  {\path{doi:10.1016/j.future.2020.07.032}}.

\bibitem[Pia22]{zenodo-space-fluid-experiments}
Danilo Pianini.
\newblock Danysk/experiment-2022-coordination-space-fluid: 0.5.0-dev08+67e7add,
  2022.
\newblock URL: \url{https://zenodo.org/record/6473292}, \href
  {https://doi.org/10.5281/ZENODO.6473292} {\path{doi:10.5281/ZENODO.6473292}}.

\bibitem[Pia23a]{pm10}
Danilo Pianini.
\newblock Aggregated pm10 data for europe, 2023.
\newblock URL: \url{https://zenodo.org/record/7546591}, \href
  {https://doi.org/10.5281/ZENODO.7546591} {\path{doi:10.5281/ZENODO.7546591}}.

\bibitem[Pia23b]{zenodo-space-fluid-experiments2}
Danilo Pianini.
\newblock Danysk/experiment-2023-lmcs-pm10-pollution-space-sampling: 2.3.0,
  2023.
\newblock URL: \url{https://zenodo.org/record/7712978}, \href
  {https://doi.org/10.5281/ZENODO.7712978} {\path{doi:10.5281/ZENODO.7712978}}.

\bibitem[PMV13]{PianiniJOS2013}
Danilo Pianini, Sara Montagna, and Mirko Viroli.
\newblock Chemical-oriented simulation of computational systems with
  {ALCHEMIST}.
\newblock {\em Journal of Simulation}, 7(3):202--215, 2013.
\newblock \href {https://doi.org/10.1057/jos.2012.27}
  {\path{doi:10.1057/jos.2012.27}}.

\bibitem[Pra86]{DBLP:journals/ijpp/Pratt86}
Vaughan~R. Pratt.
\newblock Modeling concurrency with partial orders.
\newblock {\em Int. J. Parallel Program.}, 15(1):33--71, 1986.
\newblock \href {https://doi.org/10.1007/BF01379149}
  {\path{doi:10.1007/BF01379149}}.

\bibitem[PSS{\etalchar{+}}13]{DBLP:journals/sac/PeyrardSSBN13}
Nathalie Peyrard, R{\'{e}}gis Sabbadin, Daniel Spring, Barry~W. Brook, and
  Ralph~Mac Nally.
\newblock Model-based adaptive spatial sampling for occurrence map
  construction.
\newblock {\em Stat. Comput.}, 23(1):29--42, 2013.
\newblock \href {https://doi.org/10.1007/s11222-011-9287-3}
  {\path{doi:10.1007/s11222-011-9287-3}}.

\bibitem[PVB15]{PianiniSAC2015}
Danilo Pianini, Mirko Viroli, and Jacob Beal.
\newblock Protelis: practical aggregate programming.
\newblock In {\em Proceedings of the 30th Annual {ACM} Symposium on Applied
  Computing, Salamanca, Spain, April 13-17, 2015}, pages 1846--1853, 2015.
\newblock \href {https://doi.org/10.1145/2695664.2695913}
  {\path{doi:10.1145/2695664.2695913}}.

\bibitem[RHK{\etalchar{+}}05]{DBLP:conf/iros/RahimiHKSE05}
Mohammad~H. Rahimi, Mark~H. Hansen, William~J. Kaiser, Gaurav~S. Sukhatme, and
  Deborah Estrin.
\newblock Adaptive sampling for environmental field estimation using robotic
  sensors.
\newblock In {\em 2005 {IEEE/RSJ} International Conference on Intelligent
  Robots and Systems, Edmonton, Alberta, Canada, August 2-6, 2005}, pages
  3692--3698. {IEEE}, 2005.
\newblock \href {https://doi.org/10.1109/IROS.2005.1545070}
  {\path{doi:10.1109/IROS.2005.1545070}}.

\bibitem[SAL{\etalchar{+}}03]{DBLP:journals/pieee/StankovicALSH03}
John~A. Stankovic, Tarek~F. Abdelzaher, Chenyang Lu, Lui Sha, and Jennifer~C.
  Hou.
\newblock Real-time communication and coordination in embedded sensor networks.
\newblock {\em Proc. {IEEE}}, 91(7):1002--1022, 2003.
\newblock \href {https://doi.org/10.1109/JPROC.2003.814620}
  {\path{doi:10.1109/JPROC.2003.814620}}.

\bibitem[SGAP00]{DBLP:journals/wc/SohrabiGAP00}
Katayoun Sohrabi, Jay Gao, Vishal Ailawadhi, and Gregory~J. Pottie.
\newblock Protocols for self-organization of a wireless sensor network.
\newblock {\em {IEEE} Wirel. Commun.}, 7(5):16--27, 2000.
\newblock \href {https://doi.org/10.1109/98.878532}
  {\path{doi:10.1109/98.878532}}.

\bibitem[SKS10]{DBLP:conf/sutc/SzczytowskiKS10}
Piotr Szczytowski, Abdelmajid Khelil, and Neeraj Suri.
\newblock Asample: Adaptive spatial sampling in wireless sensor networks.
\newblock In {\em {IEEE} International Conference on Sensor Networks,
  Ubiquitous, and Trustworthy Computing, {SUTC} 2010 and {IEEE} International
  Workshop on Ubiquitous and Mobile Computing, {UMC} 2010, 7-9 June 2010,
  Newport Beach, California, {USA}}, pages 35--42. {IEEE} Computer Society,
  2010.
\newblock \href {https://doi.org/10.1109/SUTC.2010.37}
  {\path{doi:10.1109/SUTC.2010.37}}.

\bibitem[Tho90]{doi:10.1080/01621459.1990.10474975}
Steven~K. Thompson.
\newblock Adaptive cluster sampling.
\newblock {\em Journal of the American Statistical Association},
  85(412):1050--1059, December 1990.
\newblock \href {https://doi.org/10.1080/01621459.1990.10474975}
  {\path{doi:10.1080/01621459.1990.10474975}}.

\bibitem[VAB{\etalchar{+}}18]{TOMACS2018}
Mirko Viroli, Giorgio Audrito, Jacob Beal, Ferruccio Damiani, and Danilo
  Pianini.
\newblock Engineering resilient collective adaptive systems by
  self-stabilisation.
\newblock {\em {ACM} Transactions on Modeling and Computer Simulation},
  28(2):1--28, mar 2018.
\newblock \href {https://doi.org/10.1145/3177774} {\path{doi:10.1145/3177774}}.

\bibitem[VBD{\etalchar{+}}19]{JLAMP2019}
Mirko Viroli, Jacob Beal, Ferruccio Damiani, Giorgio Audrito, Roberto Casadei,
  and Danilo Pianini.
\newblock From distributed coordination to field calculus and aggregate
  computing.
\newblock {\em Journal of Logical and Algebraic Methods in Programming},
  109:100486, December 2019.
\newblock \href {https://doi.org/10.1016/j.jlamp.2019.100486}
  {\path{doi:10.1016/j.jlamp.2019.100486}}.

\bibitem[VS05]{DBLP:conf/mass/VirrankoskiS05}
Reino Virrankoski and Andreas Savvides.
\newblock {TASC:} topology adaptive spatial clustering for sensor networks.
\newblock In {\em {IEEE} 2nd International Conference on Mobile Adhoc and
  Sensor Systems, {MASS} 2005, November 7-10, 2005, The City Center Hotel,
  Washington, {USA}}, page~10. {IEEE} Computer Society, 2005.
\newblock \href {https://doi.org/10.1109/MAHSS.2005.1542850}
  {\path{doi:10.1109/MAHSS.2005.1542850}}.

\bibitem[WH07]{DBLP:conf/saso/WolfH07}
Tom~De Wolf and Tom Holvoet.
\newblock Designing self-organising emergent systems based on information flows
  and feedback-loops.
\newblock In {\em Proceedings of the First International Conference on
  Self-Adaptive and Self-Organizing Systems, {SASO} 2007, Boston, MA, USA, July
  9-11, 2007}, pages 295--298. {IEEE} Computer Society, 2007.
\newblock \href {https://doi.org/10.1109/SASO.2007.16}
  {\path{doi:10.1109/SASO.2007.16}}.

\bibitem[WKT11]{DBLP:journals/percom/WuKT11}
Fang{-}Jing Wu, Yu{-}Fen Kao, and Yu{-}Chee Tseng.
\newblock From wireless sensor networks towards cyber physical systems.
\newblock {\em Pervasive Mob. Comput.}, 7(4):397--413, 2011.
\newblock \href {https://doi.org/10.1016/j.pmcj.2011.03.003}
  {\path{doi:10.1016/j.pmcj.2011.03.003}}.

\bibitem[YF04]{DBLP:conf/infocom/YounisF04}
Ossama Younis and Sonia Fahmy.
\newblock Distributed clustering in ad-hoc sensor networks: {A} hybrid,
  energy-efficient approach.
\newblock In {\em Proceedings {IEEE} {INFOCOM} 2004, The 23rd Annual Joint
  Conference of the {IEEE} Computer and Communications Societies, Hong Kong,
  China, March 7-11, 2004}. {IEEE}, 2004.
\newblock \href {https://doi.org/10.1109/INFCOM.2004.1354534}
  {\path{doi:10.1109/INFCOM.2004.1354534}}.

\bibitem[YVP13]{DBLP:journals/tcyb/YaoVP13}
Jing~Tao Yao, Athanasios~V. Vasilakos, and Witold Pedrycz.
\newblock Granular computing: Perspectives and challenges.
\newblock {\em {IEEE} Trans. Cybern.}, 43(6):1977--1989, 2013.
\newblock \href {https://doi.org/10.1109/TSMCC.2012.2236648}
  {\path{doi:10.1109/TSMCC.2012.2236648}}.

\bibitem[YZMC20]{YOO2020117091}
Eun-Hye Yoo, Andrew Zammit-Mangion, and Michael~G. Chipeta.
\newblock Adaptive spatial sampling design for environmental field prediction
  using low-cost sensing technologies.
\newblock {\em Atmospheric Environment}, 221:117091, 2020.
\newblock \href {https://doi.org/10.1016/j.atmosenv.2019.117091}
  {\path{doi:10.1016/j.atmosenv.2019.117091}}.

\bibitem[ZGMB14]{DBLP:conf/lcn/ZamanGMB14}
Sabri{-}E. Zaman, Manik Gupta, Raul~J. Mondrag{\'{o}}n, and Eliane~L. Bodanese.
\newblock An eigendecomposition based adaptive spatial sampling technique for
  wireless sensor networks.
\newblock In {\em {IEEE} 39th Conference on Local Computer Networks, {LCN}
  2014, Edmonton, AB, Canada, 8-11 September, 2014}, pages 430--433. {IEEE}
  Computer Society, 2014.
\newblock \href {https://doi.org/10.1109/LCN.2014.6925809}
  {\path{doi:10.1109/LCN.2014.6925809}}.

\end{thebibliography}

\end{document}